\newtheorem{conjecture}{Conjecture}
\newtheorem{proposition}{Proposition}
\newtheorem{fact}{Fact}[section]
\newtheorem{lemma}{Lemma}[section]
\newcommand{\nocontentsline}[3]{}
\newcommand{\tocless}[2]{\bgroup\let\addcontentsline=\nocontentsline#1{#2}\egroup}
\newcommand{\CX}{\textsf{C}X}
\newcommand{\CCX}{\textsf{CC}X}
\newcommand{\CNX}[2]{\textsf{C}^{(N-1)}_{#1} {X}_{#2}}
\newcommand{\perm}[1]{\mathcal{S}_{#1}}
\newcommand{\IP}{\ensuremath{\textsf{IP}}}
\newcommand{\RW}{\ensuremath{\textsf{RW}}}
\newcommand{\EX}[1]{\ensuremath{\textsf{EX}(#1)}}
\newcommand{\poly}[1]{\ensuremath{{\rm poly}(#1)}}
\newcommand{\pseudothermtime}[1]{\ensuremath{t_{\rm indist}^{(#1)}}}
\title{Dynamics of Pseudoentanglement}
\author{Xiaozhou Feng}
\author{Matteo Ippoliti}
\affiliation{Department of Physics, The University of Texas at Austin, Austin, TX 78712, USA}
\abstract{
The dynamics of quantum entanglement plays a central role in explaining the emergence of thermal equilibrium in isolated many-body systems. However, entanglement is notoriously hard to measure, and can in fact be ``forged'': recent works have introduced a notion of {\it pseudoentanglement} describing ensembles of many-body states that, while only weakly entangled, cannot be efficiently distinguished from states with much higher entanglement, such as random states in the Hilbert space.
This prompts the question: how much entanglement is truly necessary to achieve thermal equilibrium in quantum systems? 
In this work we address this question by introducing random circuit models of quantum dynamics that, at late times, equilibrate to pseudoentangled ensembles---a phenomenon we name {\it ensemble pseudothermalization}. These models replicate all the {efficiently observable} predictions of thermal equilibrium, while generating only a small (and tunable) amount of entanglement, thus deviating from the ``maximum-entropy principle'' that underpins thermodynamics. 
We examine 
(i) how a pseudoentangled ensemble on a small subsystem spreads to the whole system as a function of time, and 
(ii) how a pseudoentangled ensemble can be generated from an initial product state. 
We map the above problems onto a family of classical Markov chains on subsets of the computational basis. The mixing times of such Markov chains are related to the time scales at which the states produced from the dynamics become indistinguishable from Haar-random states at the level of each statistical moment, or number of copies. 
Based on a combination of rigorous bounds and conjectures supported by numerics, we argue that each Markov chain's relaxation time and mixing time have different asymptotic behavior in the limit of large system size. This is a necessary condition for a {\it cutoff phenomenon}: an abrupt dynamical transition to equilibrium. 
We thus conjecture that our random circuits give rise to asymptotically sharp distinguishability transitions.
}
\begin{document}

\maketitle

%\tableofcontents

%%%%%%%%%%%%%%%%
% INTRODUCTION %
%%%%%%%%%%%%%%%%

\section{Introduction.}\label{sec:intro}

\begin{figure*}
    \centering
    \includegraphics[width=0.99\textwidth]{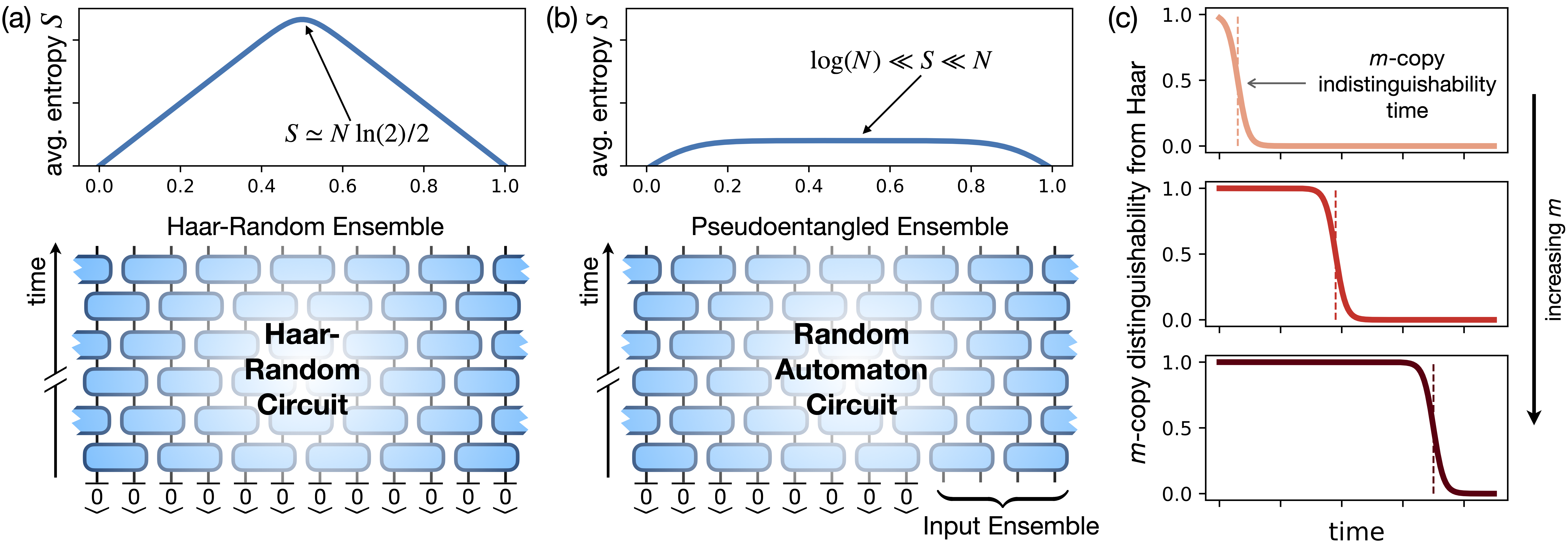}
    \caption{Main ideas of this work. 
    (a) Quantum thermalization: chaotic quantum dynamics on an isolated many-body systems (modeled by a Haar-random unitary circuit, bottom) produces random states at late times, with near-maximal entanglement entropy described by the Page curve (top).
    (b) In analogy with thermalization, we consider families of random `automaton' circuits (bottom, see text) that, acting on suitable ensembles of initial states, produce {\it pseudoentangled} state ensembles at late times. These have relatively low entanglement entropy (top), but are computationally indistinguishable from Page states. 
    (c) We study the dynamical process by which pseudoentanglement is generated and/or propagated over time, quantified by the distinguishability of the output state ensembles from the Haar ensemble: can an observer, by making joint measurements on $m$ copies of an output state, {\it efficiently} tell it apart from a random state? For each value of $m$, this defines a time scale dubbed {\it $m$-copy indistinguishability time} (vertical dashed lines). This time scale may become sharp in the thermodynamic limit if the problem features a cutoff phenomenon. }
    \label{fig:main_ideas}
\end{figure*}

Entanglement is a fundamental concept in quantum information science and in quantum many-body physics: it is at once a resource for quantum advantage in information processing~\cite{horodecki_quantum_2009,chitambar_quantum_2019} and a framework to characterize different phases of matter beyond conventional order parameters~\cite{kitaev_topological_2006,li_entanglement_2008,jiang_identifying_2012}.
Furthermore, the generation and propagation of entanglement in many-body systems is a key ingredient in our understanding of how thermal equilibrium emerges from the otherwise reversible unitary dynamics of isolated systems~\cite{rigol_thermalization_2008,nandkishore_many-body_2015,kaufman_quantum_2016,mezei_black_2020}. 
Due to entanglement, local subsystems may be described by mixed states even when the global state is pure; these local mixed states can equilibrate to universal forms~\cite{alhambra_quantum_2022,vidmar_generalized_2016}, thus recovering the predictions of thermodynamics.  
In the absence of any conservation laws (for example in driven systems without symmetries), the predicted equilibrium is simply the maximally-mixed state $\rho_{A,{\rm th}} \propto \mathbb{I}_A$ in each local subsystem $A$, indicating saturation of entanglement to its maximal value. 
Informally, at late times the state of the system is expected to resemble a typical Haar-random state in the Hilbert space\footnote{Convergence to Haar-random states can be shown rigorously in systems modeled by random unitary circuits~\cite{nahum_quantum_2017,bertini_entanglement_2019,zhou_entanglement_2020,brandao_models_2021,fisher_random_2023,potter_entanglement_2022,mittal_local_2023}, and even in some quasi-periodically driven systems~\cite{pilatowsky-cameo_complete_2023,pilatowsky-cameo_hilbert-space_2024}.}, with a characteristic extensive scaling of entanglement known as the Page curve~\cite{Page_average_1993}, sketched in Fig.~\ref{fig:main_ideas}(a).

Despite its ubiquity and far-reaching importance, entanglement is generally difficult to control, quantify, and measure~\cite{horodecki_quantum_2009}, especially beyond few-body settings~\cite{amico_entanglement_2008}. 
The last decade has seen substantial progress in these directions, particularly thanks to experimental advances in quantum simulators and theoretical developments in randomized measurements~\cite{islam_measuring_2015,kaufman_quantum_2016,elben_renyi_2018,brydges_probing_2019,huang_predicting_2020,elben_randomized_2023}; nonetheless, the cost of measuring the entanglement entropy of an $N$-qubit system remains exponential in $N$ in the worst case, ultimately limiting the reach of all these approaches. 

Recent works~\cite{aaronson_quantum_2023,giurgica-tiron_pseudorandomness_2023,jeronimo_pseudorandom_2024} highlight the intrinsic subtlety of measuring many-body entanglement in a dramatic way, by introducing a notion of {\it pseudoentanglement}---a forgery of entanglement that cannot be efficiently distinguished from the real thing.
More specifically, these works construct ensembles of quantum states that, while only weakly entangled, are indistinguishable from random states (whose entropy tracks the Page curve and is thus near-maximal) to any observer with a polynomial amount of resources (time, memory, etc). 
These ensembles are interesting from the point of view of computation and cryptography~\cite{ji_pseudorandom_2018,lu_quantum_2023}, and can be efficiently constructed through the use of quantum-secure pseudorandom permutations~\cite{aaronson_quantum_2023,zhandry_note_2016}. 

Motivated by these new developments, in this work we reexamine the role of entanglement in quantum thermalization. 

\subsection{Summary of results}

We introduce constrained models of time evolution whose late-time equilibrium is not described by the highly-entangled Page states usually associated to high-temperature thermalization, but rather by pseudoentangled state ensembles. These ensembles are weakly entangled, yet they reproduce all the {\it efficiently verifiable} predictions of thermal equilibrium. We term this condition (and the dynamical processes that generate it over time) {\it ``ensemble pseudothermalization''}. 
Depending on one's point of view, this may be viewed as a well-disguised imitation of thermal equilibrium, or as a perfectly valid instance of it. 
The former perspective can be argued on first principles: pseudoentangled state ensembles by definition do not obey the ``maximum entropy principle'' that underlies thermodynamics\footnote{More precisely, they do obey a maximum entropy principle, but only within a restricted space characterized by low entanglement.} and are not stable against generic perturbations.
The latter perspective can be argued on operational grounds: if a deviation from thermal equilibrium cannot be detected {\it efficiently}, it should not count. 
It is also worth noting that this type of equilibration happens at the level of an ensemble, for example over realizations of a random circuit, rather than at the level of individual quantum states as in conventional quantum thermalization~\cite{srednicki_chaos_1994,rigol_thermalization_2008,nandkishore_many-body_2015}. Pseudoentanglement is an ensemble property; there is no sense in which a single quantum state can be pseudoentangled. 

Understanding the process of ensemble pseudothermalization requires studying the dynamical generation and propagation of pseudoentanglement in many-body systems. This is analogous to the role of entanglement dynamics in thermalization (with the above caveat about the necessity of an ensemble). Like in that case, {\it random circuit} models of time evolution prove very useful~\cite{nahum_quantum_2017,fisher_random_2023,potter_entanglement_2022}.
We introduce random circuit models that prepare pseudoentangled state ensembles as their equilibrium steady states, much like the Page states for generic quantum dynamics. This is sketched in Fig.~\ref{fig:main_ideas}(b).
Specifically, we consider the random {\it subset-phase state} ensemble~\cite{aaronson_quantum_2023} and the random {\it subset state} ensemble~\cite{giurgica-tiron_pseudorandomness_2023,jeronimo_pseudorandom_2024}. 
We introduce families of random reversible classical circuits (also known as `automaton' quantum circuits\footnote{Automaton quantum circuits should not be confused with `quantum cellular automata', which are general locality-preserving unitaries~\cite{farrelly_review_2020}.} in the condensed matter literature~\cite{gopalakrishnan_operator_2018,iaconis_anomalous_2019,iaconis_quantum_2021,klobas_exact_2021}) that have pseudoentangled state ensembles as their late-time fixed points, and use them to quantitatively study the process of ensemble pseudothermalization.

We focus on two scenarios modeled after paradigmatic problems in `true' quantum thermalization: 
(i) the spreading of pseudoentanglement from a small subsystem placed in contact with a larger, disentangled system, and 
(ii) the generation of pseudoentanglement from an initial product state. 
The key technical step in both cases is a mapping of pseudoentanglement generation onto the equilibration of certain classical Markov chains over subsets of the computational basis---with the size of subsets related to the number of state copies $m$ available to the observer. This mapping allows us to make contact with a rich body of mathematical work on the equilibration of Markov chains on finite spaces, including the so-called {\it cutoff phenomenon}~\cite{diaconis_generating_1981,diaconis_cutoff_1996,chen_cutoff_2008,lacoin_cutoff_2011,chen_cutoff_2021}. This phenomenon refers to the sudden equilibration that can be seen in certain random walks, often over highly-connected graphs (for example permutations of a deck of cards induced by a riffle shuffle~\cite{aldous_shuffling_1986}), in contrast with e.g. diffusion on a Euclidean lattice where equilibration is smooth and characterized by a dynamical scaling collapse $\sim t/L^z$, $z = 2$. 

Based on this connection, we argue that ensemble pseudothermalization in our random circuit models happens in an asymptotically sharp fashion: our output states are easily distinguishable from random states for a long time, until they abruptly (i.e., over a {parametrically shorter} window of time) become indistinguishable at the level of $m$ copies, Fig.~\ref{fig:main_ideas}(c).
This dynamical transition in the distinguishability of low-entanglement states from random states is a new fundamental phenomenon in the dynamics of isolated quantum many-body systems.
It enriches and complements our growing understanding of thermal equilibrium at the level of higher statistical moments~\cite{roberts_chaos_2017,hunter-jones_unitary_2019,cotler_emergent_2023,ippoliti_solvable_2022,pilatowsky-cameo_complete_2023,fava_designs_2023} and opens a number of exciting directions for future work.

Our results imply several interesting consequences for the hardness of learning complex features of quantum dynamics, such as entanglement growth and measurement-induced phenomena, and pose limits on the existence of general, efficient protocols to address these problems.

\subsection{Structure of the paper}

The rest of the paper is organized as follows. 
In Sec.~\ref{sec:review} we provide a brief, self-contained review of some useful background: pseudoentanglement (Sec.~\ref{sec:review_pe}), Markov chains on finite sets and the cutoff phenomenon (Sec.~\ref{sec:review_cutoff}), and certain random walks over permutation groups (Sec.~\ref{sec:review_ip_rw}). 
We then introduce the general framework of ensemble pseudothermalization via automaton circuit dynamics in Sec.~\ref{sec:automaton}, and apply it to the problem of spreading and generation of pseudoentanglement in Sec.~\ref{sec:spreading_pe} and \ref{sec:generation} respectively. 
Finally we summarize our results and discuss their practical and fundamental implications, as well as outstanding questions and future directions, in Sec.~\ref{sec:discussion}. 

%%%%%%%%%%
% REVIEW %
%%%%%%%%%%

\section{Review of key concepts}\label{sec:review}

\subsection*{Asymptotic notation}

In this work we make extensive use of the following asymptotic notation. Given two functions $f,g:\mathbb{N} \mapsto \mathbb{R}^+$, we write:
\begin{itemize}
    \item $f = o(g)$ if $\lim_{N \to \infty} f(N) / g(N) = 0$;
    \item $f = O(g)$ if $f(N) < C g(N)$ for all $N$ for some constant $C>0$;
    \item $f = \Theta(g)$ if $f = O(g)$ and $g = O(f)$;
    \item $f = \Omega(g)$ if $g = O(f)$;
    \item $f = \omega(g)$ if $\lim_{N \to \infty} f(N) / g(N) = \infty $.
\end{itemize}
With a slight abuse of notation, we also write asymptotic upper and lower bounds as e.g. $\Omega(f) \leq g \leq O(h)$ to denote that $g = O(h)$ and $g = \Omega(f)$. We also use a tilde to denote when one of these conditions applies up to a ${\rm polylog}(N)$ factor: e.g. we may write $f \leq O(g(N)/\log(N))$ as $f \leq \tilde{O}(g)$. 

\subsection{Pseudoentanglement} \label{sec:review_pe}
An ensemble $\mathcal{E}$ of quantum states on $N$ qubits is said to be {\it pseudoentangled}~\cite{aaronson_quantum_2023} 
if the states $\ket{\psi} \in \mathcal{E}$ meet the following criteria\footnote{Ref.~\cite{aaronson_quantum_2023} gives a more general definition based on two ensembles with different entanglement that are computationally indistinguishable from each other. Here we always take the second ensemble to be the Haar-random ensemble.} (to be specified precisely below):
\begin{enumerate}[label=(\roman*)]
\item they are efficiently preparable;
\item they have low entanglement;
\item they are {\it computationally pseudorandom}.
\end{enumerate}
Point (i)  means there should be a $\poly{N}$-depth circuit that prepares each state in the ensemble. 
Point (ii) for our purposes will mean that there exists a bipartition of the system into extensive parts $A$, $\bar{A}$ such that $S(\Tr_{\bar A}\ketbra{\psi}) = o(N)$ with high probability over $\psi\sim\mathcal{E}$---i.e., the states have {\it sub-volume-law} entanglement\footnote{The definition in Ref.~\cite{aaronson_quantum_2023} demands an `entanglement gap' for every cut, but from a physics perspective, it is reasonable to require an entanglement gap only for spatially local bipartitions of the system (e.g., a `pseudo-area-law'). This is not important in the present work since we use random subset-(phase-)states which have an entanglement gap on all cuts.}.
The crux of the definition is point (iii), computational pseudorandomness, which can be stated as follows.
Defining the $m$-th moment operator, or average $m$-copy state,
\begin{equation}
\rho^{(m)}_{\mathcal E} = \mathbb{E}_{\psi \sim \mathcal E}[\ketbra{\psi}^{\otimes m}],
\label{eq:moment_op}
\end{equation}
the ensemble $\mathcal{E}$ is computationally pseudorandom if, for any efficient quantum algorithm $\mathcal{A}$, we have 
\begin{equation} 
\left |\mathcal{A}\left(\rho_{\mathcal E}^{(m)} \right) - \mathcal{A} \left(\rho_{{\rm Haar}}^{(m)} \right) \right| = o \left(1/\poly{N} \right)
\label{eq:pe_criterion_algo}
\end{equation} 
whenever $m = O(\poly{N})$. In words, this means that any polynomial-time observation on a polynomial number of state copies cannot distinguish $\mathcal{E}$ from the Haar ensemble by more than a superpolynomially small amount, $\epsilon = o(1/\poly{N})$. Statistically resolving this discrepancy between the two ensembles would require at least $\Omega(1/\epsilon^2) = \omega(\poly{N})$ repetitions of the experiment. Thus, with a polynomially bounded amount of time and memory, nothing the observer can do would successfully distinguish $\mathcal{E}$ from the Haar distribution.

In this work we will drop the restriction on the efficiency of the algorithm $\mathcal{A}$, allowing arbitrary measurements on the $m$ copies. Then Eq.~\eqref{eq:pe_criterion_algo} reduces to
\begin{equation} 
D_{\rm tr} \left( \rho_{\mathcal E}^{(m)}, \rho_{{\rm Haar}}^{(m)} \right)  = o \left(1/\poly{N} \right),
\label{eq:pe_criterion_tracedistance}
\end{equation} 
where $D_{\rm tr}(\rho,\sigma) = \frac{1}{2} \| \rho-\sigma\|_{\rm tr}$ is the trace distance and $\| A\|_{\rm tr} = {\rm Tr}(\sqrt{A^\dagger A})$ is the trace norm. 
The trace distance  characterizes the distinguishability between two quantum states: if $D_{\rm tr}(\rho,\sigma) = \epsilon$, then there is an observable\footnote{
One can write the observable as $\mathcal{O} = \Pi_+ - \Pi_-$, where $\Pi_{\pm}$ are the projectors on positive (negative) eigenvalue subspaces of $\rho-\sigma$.
} 
$\mathcal O$, with $-I \leq \mathcal O\leq I$, that gives ${\rm Tr}(\mathcal O \rho) - {\rm Tr}(\mathcal O \sigma) = 2\epsilon$. An observer can tell apart $\rho$ from $\sigma$ by measuring $\mathcal{O}$ a number of times $\sim 1/\epsilon^2$. This is the optimal way to distinguish two states~\cite{nielsen_quantum_2020}.
Eq.~\eqref{eq:pe_criterion_tracedistance} describes ``information-theoretic'' pseudorandomness, a condition stronger than computational pseudorandomness. It describes an ensemble $\mathcal{E}$ that cannot be distinguished from the Haar ensemble by making a polynomial number of experiments on polynomially many copies, even if the experiments are allowed arbitrary complexity (e.g. projective measurements in a basis prepared by an exponentially deep circuit).

A simple pseudoentangled ensemble is given by random\footnote{
% footnote
If $f$ and $S$ are taken to be genuinely random, then the ensemble satisfies Eq.~\eqref{eq:pe_criterion_tracedistance} (information-theoretic pseudorandomness), but is not efficiently preparable. However one can replace them by suitable pseudorandom functions and permutations, respectively, in such a way that the ensemble is efficiently preparable and satisfies Eq.~\eqref{eq:pe_criterion_algo} (computational pseudorandomness).
} 
{\it subset phase states}~\cite{aaronson_quantum_2023},
\begin{equation}
    \ket{\psi_{S,f}} = \frac{1}{|S|^{1/2}} \sum_{\mathbf z\in S} (-1)^{f(\mathbf{z})}\ket{\mathbf z} ,
    \label{eq:subsetphase_state}
\end{equation}
where $f$ is a random Boolean function on $\{0,1\}^N$, and $S$ is a subset of bitstrings, $S\subset\{0,1\}^N$, chosen uniformly at random out of all subsets of some fixed cardinality $|S| = K$. We call this space of subsets $\Sigma_K$. To ensure pseudoentanglement, $K$ must scale suitably with the system size $N$: $\omega(\poly{N}) \leq K \leq o(2^{N/2})$. The role of $K$ is to determine the actual entanglement of the states in the ensemble: we have the average von Neumann entropy
\begin{equation} 
S_{\rm vN} (A) \sim \min \{|A|\log 2,\log K\}
\end{equation}
for all subsystems $A$ (regardless of locality), due to the fact that $K$ upper bounds the Schmidt rank of $\ket{\psi_{S,f}}$ on any bipartition of the $N$ qubits. 
Thus the restriction on $K\geq \omega(\poly{N})$ ensures that, e.g., the half-cut entanglement entropy can scale faster than $\log N$ (necessary so that the states cannot be efficiently distinguished from Haar-random ones by measuring their purity), and the restriction $K \leq 2^{o(N)}$ ensures sub-volume-law scaling [see Fig.~\ref{fig:main_ideas}(a,b)]. 

Very recently, Refs.~\cite{giurgica-tiron_pseudorandomness_2023,jeronimo_pseudorandom_2024} showed that the random phase factors are not necessary to have a pseudoentangled ensemble. Random \textit{subset states} suffice:
\begin{equation}
      \ket{\psi_S} = \frac{1}{|S|^{1/2}} \sum_{\mathbf z\in S} \ket{\mathbf z},
    \label{eq:subset_state}
\end{equation}
with $S$ drawn uniformly at random from $\Sigma_K$ (subsets of the computational basis of cardinality $K$).
This ensemble obeys
\begin{equation}
    D_{\rm tr} \left( \rho_{\mathcal{E}}^{(m)}, \rho_{{\rm Haar}}^{(m)} \right) \le O \left(\frac{m^2}{K} \right) + O \left( \frac{mK}{D} \right),
\end{equation}
with $D = 2^N$ the Hilbert space dimension.
This is superpolynomailly small if $\omega(\poly{N}) \leq K \leq o(2^N / \poly{N})$ and $m = O(\poly{N})$. 

\subsection{Equilibration of Markov chains and cut-off phenomenon}\label{sec:review_cutoff}

\begin{figure}
    \centering
    \includegraphics[width=0.5\columnwidth]{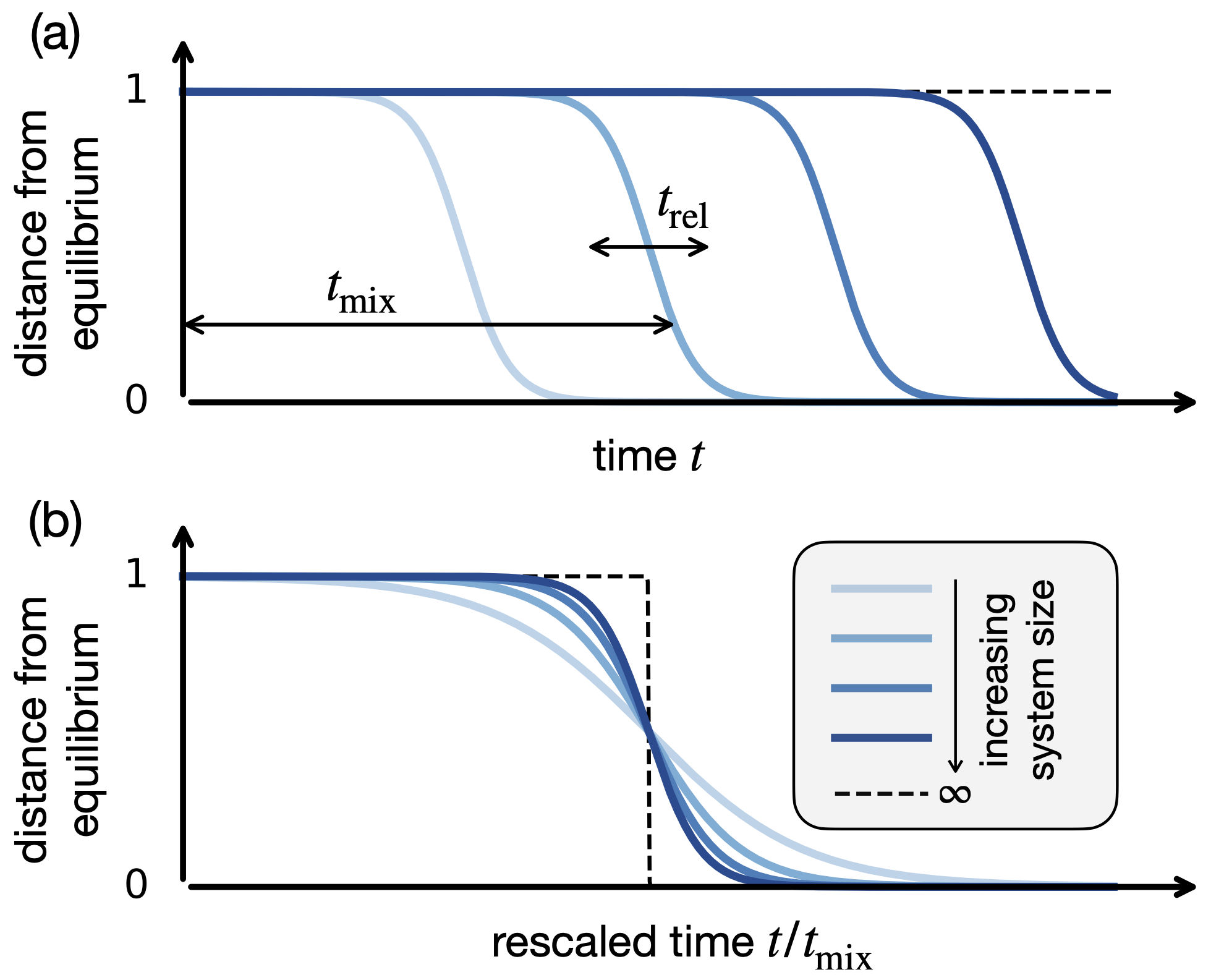}
    \caption{Schematic of the cutoff phenomenon for Markov chains on finite sets. 
    (a) A family of Markov chains on systems of size $N$ has a cutoff if it exhibits ``sudden'' equilibration as $N\to\infty$, in the following sense: the total variation distance from equilibrium remains close to maximal for a time scale $t_{\rm mix}$ ({\it mixing time}), then decays to zero exponentially with a time constant $t_{\rm rel}$ ({\it relaxation time}) with $\lim_{N\to\infty} t_{\rm rel} / t_{\rm mix} = 0$.
    (b) Even though the relaxation time itself may be growing with $N$, upon rescaling time by $t\mapsto t/t_{\rm mix}$ the approach to equilibrium becomes step-like as $N\to\infty$. In this sense the dynamics has a sharp equilibration transition.     \label{fig:cutoff} 
    }
\end{figure}

A key technical step in this work will be to map the evolution of quantum distinguishability measures, Eq.~\eqref{eq:pe_criterion_tracedistance}, to the equilibration of classical Markov chains over finite sets. For this purpose it is useful to briefly review some notations and facts about this subject. 

A Markov chain on a finite set can be simply defined by a stochastic matrix $\Gamma_{ij}$, with $i,j$ ranging over the set. `Stochastic' means that $\Gamma_{ij} \geq 0$ for all $i,j$, with $\sum_i \Gamma_{ij} = 1$ for all $j$. $\Gamma_{ij}$ represents the probability of a transition from $j$ to $i$. A state of the Markov chain is given by a probability distribution $p(i)$. In each time step, $p$ is updated according to $p(i) \mapsto \sum_j \Gamma_{ij} p(j)$. We say that the Markov chain equilibrates if $\Gamma^t p$ converges to a late-time steady state $\pi$ as $t\to\infty$ (in this work we will always deal with Markov chains that admit a unique steady state). 

The equilibration of finite-sized Markov chains is characterized by two fundamental time scales~\cite{levin_markov_2017}. The first is the {\it mixing time}, 
\begin{equation}
    t_{\rm mix}(\epsilon) = \min\{t:\, D_{\rm tv} (p_t,\pi) \leq \epsilon\},
\end{equation}
describing the approach to the steady distribution. Here $D_{\rm tv}$ is the total variation distance, $D_{\rm tv}(p,q) = \frac{1}{2} \sum_i |p(i)-q(i)|$, and one should maximize over all possible initial states.
The second important time scale is the {\it relaxation time}, defined as the inverse spectral gap of the transition matrix $\Gamma_{ij}$: 
\begin{equation} 
t_{\rm rel} = \frac{1}{1-\lambda_1},
\end{equation}
where the spectrum of $\Gamma_{ij}$ is given by $1 = \lambda_0 > \lambda_1 \geq \lambda_2 \geq \dots > -1$. The relaxation time describes the asymptotic late-time decay towards the steady state. 

In many cases of interest in physics, these two time scales have the same behavior: e.g., for the simple random walk on a line of length $L$, both time scales are $\propto L^2$. This gives a scale-invariant behavior, with the approach to equilibrium depending on the single ratio $t / L^2$. 
However, in some cases (including random walks on expander graphs~\cite{hoory_expander_2006}, which are highly connected spaces) the two time scales can be different, with $t_{\rm rel} / t_{\rm mix} \to 0$ in the large-system limit. Such a separation of time scales, sketched in Fig.~\ref{fig:cutoff}, is conjectured~\cite{peres_conjecture,chen_cutoff_2008} to give rise to the {\it cut-off phenomenon}~\cite{diaconis_generating_1981,diaconis_cutoff_1996,chen_cutoff_2008,lacoin_cutoff_2011,chen_cutoff_2021}: an abrupt transition to equilibrium, defined as
\begin{align} 
 \lim_{N\to\infty} D_{\rm tv} ( p_{t_{\rm mix}(N) + \theta t_{\rm rel}(N)}, \pi)  & = \chi(\theta).
\end{align} 
Here $\chi(\theta)$ is an $N$-independent function that goes to 1 as $\theta\to-\infty$ and to 0 as $\theta \to +\infty$. 
Informally, as $N\to \infty$, the total variation distance from equilibrium exhibits a scaling collapse on the form $\chi[(t-t_{\rm mix})/t_{\rm rel}]$, where crucially the mid-point of the collapse ($t_{\rm mix}$) and its width ($t_{\rm rel}$) scale differently in the large-system limit, $t_{\rm rel} = o(t_{\rm mix})$.
Thus if we express time in units of $t_{\rm mix}$, the distance from equilibrium becomes {\it step-like} in the infinite-system limit, going from almost-maximal to almost-zero over a vanishingly narrow window of time, $1 \pm t_{\rm rel} / t_{\rm mix}$. This is sketched in Fig.~\ref{fig:cutoff}(b). 

\subsection{Random walks on permutation groups}\label{sec:review_ip_rw}

In this work we will be especially interested in Markov chains over permutation groups and related processes. Here we give a brief review of some essential concepts; for a more thorough discussion, see e.g. Ref.~\cite{levin_markov_2017}. 

Consider a graph $(V,E)$ with vertices $V = \{v_i\}$ and edges $E = \{(v_i,v_j)\}$. The {\it interchange process} (\IP) is a Markov chain over the permutation group $\perm{|V|}$ whose unit step consists of drawing an edge $(v_i,v_j)\in E$ uniformly at random and transposing its vertices $v_i$, $v_j$. The composition of $t$ such transpositions $\tau_{i,j}$ produces a random element $\sigma_t = \tau_{i_t,j_t} \circ \cdots \circ \tau_{i_1,j_1}$ of the permutation group $\perm{|V|}$. The \IP\ is the Markov chain $\{\sigma_t\}_{t=0}^\infty$ on $\perm{|V|}$.

The \IP\ induces other Markov chains of interest. 
The {\it random walk} (\RW) with initial state $v_i$ is the Markov chain $\{\sigma_t(v_i)\}_{t=0}^\infty$. This has the intuitive meaning we expect for a random walk on the graph $(V,E)$: a particle initialized at vertex $v_i$ can hop with equal probability to any of its neighboring vertices (or stay in place). 
Additionally, the {\it exclusion process} with $m$ particles (\EX{m}) is defined as the Markov chain $\{\sigma_t(A)\}_{t=0}^\infty$ where $A\subset V$ is a collection of $m$ distinct vertices, and $\sigma_t(A) \equiv \{\sigma_t(v_i): v_i\in A\}$ is its image under permutation $\sigma_t$. This process describes the evolution of $m$ indistinguishable, hard-core particles on the graph, where at each time step the occupation number of two neighboring vertices get swapped.
When $m=1$, this clearly reduces to the random walk: $\EX{1} = \RW$. 

These three Markov chains ($\RW$, $\EX{m}$ and $\IP$) are connected by a remarkable result known as {\it Aldous' spectral gap conjecture}\footnote{Despite the name, this is a proven theorem\cite{caputo_proof_2010}.}, which states that 
\begin{equation}
    t_{\rm rel}^{\IP} = t_{\rm rel}^{\RW} = t_{\rm rel}^{\EX{m}} \quad \forall\, m. \label{eq:aldous}
\end{equation}
The inequalities $t_{\rm rel}^{\IP} \geq t_{\rm rel}^{\EX{m}} \geq t_{\rm rel}^{\RW}$ are a simple consequence of the fact that \EX{m}\ is a {\it sub-process} of \IP~\cite{caputo_proof_2010} (informally, a projection of \IP\ onto a smaller state space), and that $\RW$ is a subprocess of $\EX{m}$ ($m>1$). 
However, the other direction of the inequality, $t_{\rm rel}^{\IP} \leq t_{\rm rel}^{\EX{m}} \leq t_{\rm rel}^{\RW}$, is specific to \IP\ and is highly nontrivial---it was conjectured in 1992 and proved in 2009~\cite{caputo_proof_2010}.

Beyond \IP, in this work we will consider other Markov chains over permutation groups. In particular, we will consider a setting where the unit time step does not just transpose two neighboring vertices, like in \IP, but rather performs one of a more general set of ``elementary permutations'' in $\perm{|V|}$ sampled randomly at each time step. 
Such a process also gives a Markov chain over $\perm{|V|}$ and also induces ``generalized exclusion processes'' through the action of $\perm{|V|}$ on the subset spaces $\Sigma_m$, in the same way as \IP\ induces \EX{m}. 
These ``generalized exclusion processes'' differ from the standard one as they can allow multiple particles to hop in a correlated way in each time step. Aldous' spectral gap conjecture, Eq.~\eqref{eq:aldous}, does not apply to these more general cases. In Sec.~\ref{sec:local-circuit} we will in fact encounter a case where the $m$-particle ``generalized exclusion process'' relaxes more slowly than the single-particle random walk.

%%%%%%%%%%%%%%%
% PSEUDOTHERM %
%%%%%%%%%%%%%%%

\section{Ensemble pseudothermalization in automaton circuits}
\label{sec:automaton}

\begin{figure}
    \centering
    \includegraphics[width=0.6\columnwidth]{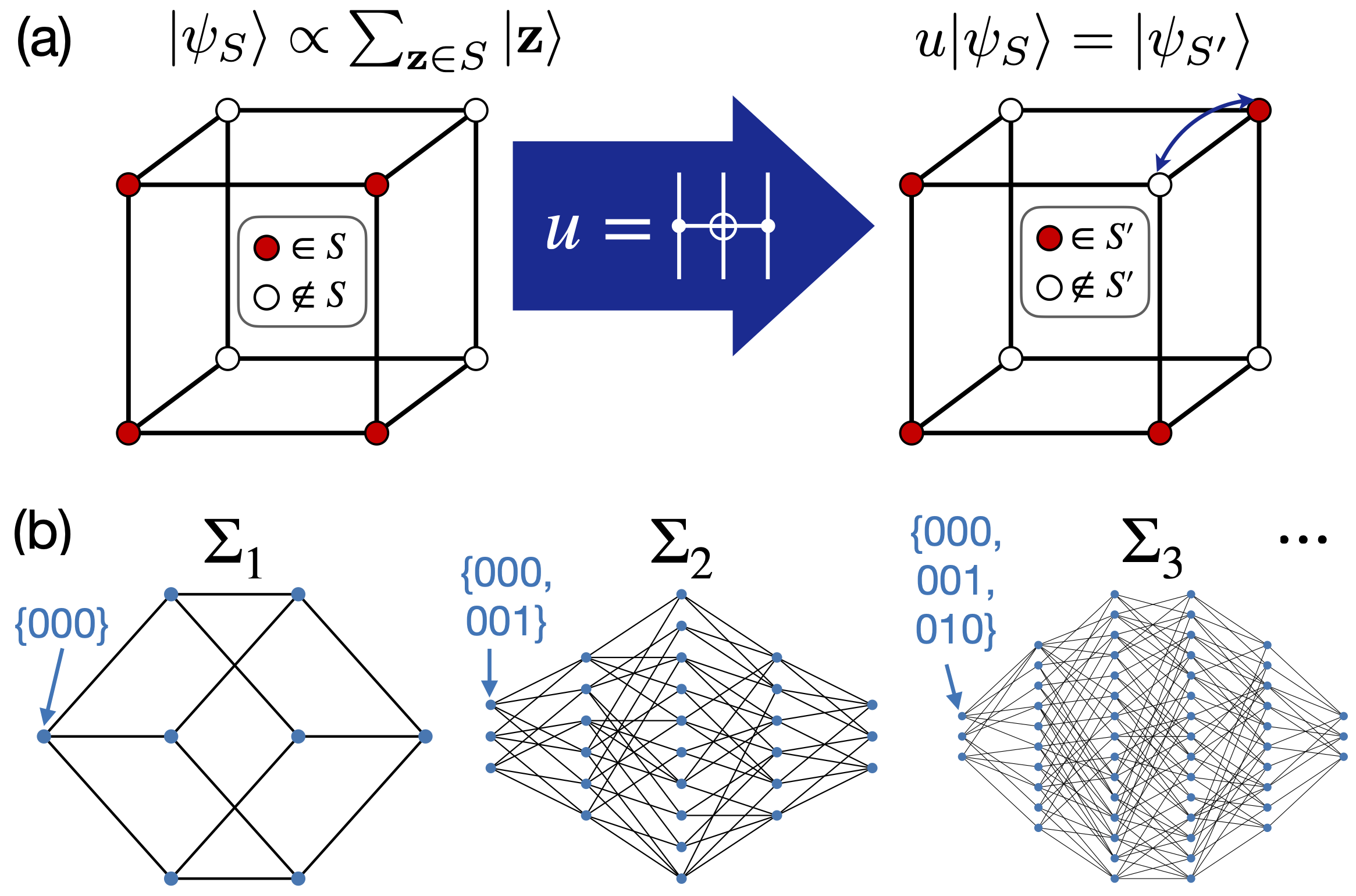}
    \caption{(a) Subset states and their transformation under automaton gates, illustrated here for $N=3$ qubits and a $\textsf{CCX}$ gate. 
    (b) Spaces of subsets of the computational basis, $\Sigma_m = \{S\subset \mathbb{Z}_2^N:\, |S|=m\}$, shown for $N=3$ and $m=1,2,3$. Vertices represent subsets $S\in\Sigma_m$, edges connect subsets $S,S'$ that are mapped to each other under an element of the automaton gate set $\mathcal{G}$ (self-edges are omitted for simplicity).}
    \label{fig:automaton}
\end{figure}

\subsection{Criteria for ensemble pseudothermalization}

We aim to introduce models of quantum dynamics that achieve pseudoentanglement as their equilibrium state. However, fully-generic quantum dynamics tends to maximize actual entanglement. The easiest way to avoid this fate is to constrain the dynamics to preserve the `subset(-phase) state' property, Sec.~\ref{sec:review_pe}. 
Reversible classical circuits, also known as {\it automaton} circuits in the quantum context\footnote{Note that some works use a slightly different definition of automaton gates which also includes phases~\cite{iaconis_quantum_2021}.}~\cite{gopalakrishnan_operator_2018,iaconis_anomalous_2019,iaconis_quantum_2021,klobas_exact_2021}, have this property:
an automaton circuit $U_t$ corresponds to a permutation $\sigma_t \in \perm{D}$ of the computational basis, 
$U_t \ket{\mathbf z} \equiv \ket{\sigma_t(\mathbf z)}$; thus, its action on a subset-phase state $\ket{\psi_{S_0,f_0}}$ is 
\begin{align}
    U_t \ket{\psi_{S_0,f_0}} 
    & = \frac{1}{\sqrt K} \sum_{\mathbf z \in S_0} (-1)^{f_0(\mathbf z)} \ket{\sigma_t(\mathbf z)}
    = \ket{\psi_{S_t,f_t}}, \label{eq:automaton_action}
\end{align}
where $S_t = \{\sigma_t(\mathbf z):\, \mathbf z \in S_0\} \equiv \sigma_t(S_0)$ is another subset of cardinality $K$, and $f_t \equiv f_0 \circ \sigma_t^{-1}$ is another phase function. 
In particular, if $f_0$ is uniformly random then so is $f_t$, and if $f_0$ is constant then so is $f_t$. It follows that automaton dynamics preserves the `subset(-phase) state' property. Further, it preserves the cardinality of the subsets, $K$. 
The action of automaton gates on subsets is sketched in Fig.~\ref{fig:automaton}(a).
An immediate consequence of Eq.~\eqref{eq:automaton_action} is that, under arbitrary automaton dynamics, the entanglement entropy about any cut cannot grow beyond $\log K$ (as $K$ upper-bounds the Schmidt rank of the state). The maximum amount of entanglement is thus controlled by the choice of initial state, and saturation to the Page curve can be avoided.

In this work we consider the following type of dynamics.
A distribution of initial states $\ket{\psi_{S_0,f_0}}$ is specified (these may be subset states or subset-phase states depending on the allowed values of $f_0$);
then, each step of the time evolution consists of sampling an automaton gate $u_i$ uniformly at random from a fixed gate set $\mathcal G$ and applying it to the state. (We will discuss requirements on the gate set $\mathcal G$ in the following.) 
After $t$ time steps, this results in an ensemble of time-evolved states $\{U_t \ket{\psi_{S_0,f_0}} = \ket{\psi_{S_t,f_t}} \}$ defined by all possible initial states $\{\ket{\psi_{S_0,f_0}}\}$ and all possible choices of $U_t = u_{i_t} u_{i_{t-1}} \cdots u_{i_1} \in \mathcal{G}^{\circ t}$.
The relevant notion of equilibration for this state ensemble is whether, at late times, it becomes {\it uniformly distributed} on the appropriate (subset or subset-phase) state space. If that is the case, and subject to usual constraints on the subset size $K$ (see Sec.~\ref{sec:review_pe}), then our dynamics achieves pseudoentanglement at late times. This is the condition we termed {\it ensemble pseudothermalization} in Sec.~\ref{sec:intro}: while the upper limit $K$ on the Schmidt rank prevents maximization of the entanglement entropy and equilibration to a Page state, this departure from the predictions of quantum thermalization may be practically undetectable, in a sense that is rigorously formalized through the concept of pseudoentanglement. 

A first question to address is whether such equilibrium is achieved at all for a given automaton gate set $\mathcal G$.
Let us focus on the subsets and neglect the phases for now. 
The action of automaton circuits on subset states, $U_t\ket{\psi_{S_0}} = \ket{\psi_{S_t}}$, defines a Markov chain $\{S_t\}$ over the space of size-$K$ subsets of the computational basis, 
\begin{equation} 
\Sigma_K \equiv \{S \subset \{0,1\}^N: |S|=K\}.
\end{equation}
The transition matrix of the Markov chain, $\Gamma_{S',S}$, is specified by the gate set $\mathcal G$:
\begin{equation}
    \Gamma_{S',S} = \frac{1}{|\mathcal G|} \sum_{u\in\mathcal G} \delta_{S',u(S)}.
\end{equation}
This transition matrix represents the fact that each gate $u\in\mathcal{G}$ is chosen with uniform probability $1/|\mathcal G|$, and maps each subset $S$ to a subset $S'=u(S) \equiv \{u(\mathbf z):\, \mathbf{z}\in S\}$ (possibly the same subset). 
Viewing $\Gamma$ as an adjacency matrix, we can thus picture the $\Sigma_K$ spaces as graphs. Some minimal examples are shown in Fig.~\ref{fig:automaton}(b), and reveal substantial complexity already for very small systems and number of copies considered. 
To achieve pseudoentanglement, the Markov chain $\{S_t\}$ must converge at late times to the uniform distribution on $\Sigma_K$, denoted here as $\pi_K$:
\begin{equation}
    \lim_{t\to\infty} p_t = \pi_K,
\end{equation}
where $p_t(S)$ is the probability distribution over subsets after $t$ time steps. 

To guarantee the uniform distribution $\pi_K$ as a unique steady state, the Markov chain must be~\cite{levin_markov_2017}: 
\begin{enumerate}
    \item[(i)] {\it Reversible}---the Markov chain transition matrix is symmetric, $\Gamma_{S',S} = \Gamma_{S,S'}$ for all $S,S'$. This property is ensured, e.g., by picking a gate set $\mathcal G$ such that $u^2=I$ for all $u\in \mathcal G$. 
    \item[(ii)] {\it Irreducible}---for any two states $S,S'\in\Sigma_K$, there exists a sequence of gates in $\mathcal G$ that maps $S$ to $S'$. 
    Irreducibility will be discussed separately for each gate set $\mathcal{G}$ in the following.
    \item[(iii)] {\it Aperiodic}---the chain can connect any two states $S,S'$ with paths of any (sufficiently large) length\footnote{The prototypical counterexample is given by non-idle walks on bipartite graphs, where all paths connecting $S$ to $S'$ have the same length modulo 2. In that case the populations of the two partitions of the graph can keep oscillating indefinitely preventing convergence to a steady state.}. One way to ensure aperiodicity is to make the chain `idle', i.e., allow it to skip a step with some probability. This can be done by including the identity gate in $\mathcal G$. 
\end{enumerate}

For a gate set $\mathcal G$ with all three properties above, we can conclude that $p_t \to \pi_K$ as $t\to\infty$~\cite{levin_markov_2017}. This suffices to establish pseudoentanglement at late times in two cases:
\begin{itemize}
\item when the initial state distribution is restricted to a constant phase function, e.g. $f_0 = 0$, we obtain the {\it random subset} ensemble~\cite{giurgica-tiron_pseudorandomness_2023,jeronimo_pseudorandom_2024};
\item when the initial state distribution includes a uniformly random phase function $f_0$, we obtain the {\it random subset-phase} ensemble~\cite{aaronson_quantum_2023}. 
\end{itemize} 
Both of these ensembles are pseudoentangled, subject to certain constraints on $K$, as we reviewed in Sec.~\ref{sec:review_pe}.

\subsection{Mapping to classical Markov processes}

Having established that the steady-state ensembles in these two cases are pseudoentangled, it remains to understand how the system approaches this condition over the course of its dynamics, and what are the time scales involved. As already mentioned, we will call this process {ensemble pseudothermalization}, in analogy with how conventional thermalization in quantum many-body systems can be defined from the saturation of entanglement of local subsystems\footnote{Pseudoentanglement however requires an ensemble of states, hence `ensemble' pseudothermalization.}
Two key technical results will allow us to make progress on these questions by connecting pseudoentanglement with the equilibration of certain classical Markov chains.

\begin{proposition}[subset-phase states]
\label{proposition:subsetphase}
Consider an ensemble of subset-phase states $\mathcal{E}_p = \{\ket{\psi_{S,f}}:\ S\sim p,\ f\sim {\rm unif.}\}$, specified by a uniformly random distribution over phase functions and a general probability distribution $p$ over $\Sigma_K$ (subsets). We have, for all integers $1\leq m\leq K$,
\begin{equation}
    D_{\rm tr}\left( \rho^{(m)}_{\mathcal E_p}, \rho^{(m)}_{\rm Haar} \right)
    = D_{\rm tv} \left( \Phi_{m\leftarrow K}[p], \pi_m \right) + O \left(\frac{m^2}{K}\right)
    \label{eq:thm_randomphase}
\end{equation}
Here $D_{\rm tr}$ and $D_{\rm tv}$ are the trace distance between quantum states and total variation distance between probability distributions, respectively; 
$\rho^{(m)}$ denotes the average $m$-copy state of an ensemble, i.e.
$\rho^{(m)}_{\mathcal E_p} = \mathbb{E}_{S\sim p} \mathbb{E}_f[\ketbra{\psi_{S,f}}^{\otimes m}]$ 
and $\rho^{(m)}_{\rm Haar} = \int {\rm d} \psi_{\rm Haar} \ketbra{\psi}^{\otimes m}$; 
and $\Phi_{m\leftarrow K}$ is a stochastic map from distributions on $\Sigma_K$ to distributions on $\Sigma_m$, defined by
\begin{equation}
    \Phi_{m\leftarrow K}[p](S') = \binom{K}{m}^{-1} \sum_{S\in\Sigma_K:\, S'\subset S} p(S).
    \label{eq:phi_channel_def}
\end{equation}
\end{proposition}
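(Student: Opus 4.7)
The plan is to exploit a key structural fact: both $\rho^{(m)}_{\mathcal{E}_p}$ and $\rho^{(m)}_{\rm Haar}$ are block-diagonal with respect to the projector $P$ onto the ``distinct-index'' subspace of $(\mathbb{C}^D)^{\otimes m}$, spanned by $\{\ket{\mathbf{z}_1\cdots\mathbf{z}_m}: \mathbf{z}_i \text{ pairwise distinct}\}$. On the $P$-block each operator is diagonal in the orthonormal family $\ket{\phi_{S'}} = (m!)^{-1/2}\sum_{\sigma\in\perm{m}}\ket{\mathbf{x}_{\sigma(1)}\cdots\mathbf{x}_{\sigma(m)}}$, indexed by $S' = \{\mathbf{x}_1,\dots,\mathbf{x}_m\} \in \Sigma_m$, with diagonal entries proportional to $\Phi_{m\leftarrow K}[p](S')$ and $\pi_m(S')$ respectively, up to a multiplicative $1-O(m^2/K)$; on the $(I-P)$-block each is PSD with trace $O(m^2/K)$. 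The advertised identity then follows by taking the trace norm.

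I would first establish block-diagonality of $\mathbb{E}_f \ketbra{\psi_{S,f}}^{\otimes m}$ for fixed $S\in\Sigma_K$. Expanding in the computational basis gives matrix elements $K^{-m}\cdot\mathbb{E}_f(-1)^{\sum_i [f(\mathbf{a}_i)+f(\mathbf{b}_i)]}$ for $\mathbf{a},\mathbf{b}\in S^m$, and the expectation over a uniformly random Boolean $f$ vanishes unless every bitstring has even total multiplicity in the combined tuple $(\mathbf{a},\mathbf{b})$. A short parity argument shows this rules out $\mathbf{a}$ distinct paired with $\mathbf{b}$ containing a collision: the $m$ distinct bitstrings of $\mathbf{a}$ each need at least one (odd) copy in $\mathbf{b}$, which already saturates $\mathbf{b}$'s $m$ slots and forces $\mathbf{b}$ to be a permutation of $\mathbf{a}$. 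Hence the off-diagonal blocks $P\cdots(I-P)$ vanish. On the $P$-block the surviving sum collapses to $\sigma_S^{(m)} \equiv (m!/K^m)\sum_{S'\subset S,\,|S'|=m}\ketbra{\phi_{S'}}$, using orthonormality of the $\ket{\phi_{S'}}$ (which follows from distinctness of indices). Averaging over $S\sim p$ and regrouping by $S'\in\Sigma_m$ turns the coefficient of $\ketbra{\phi_{S'}}$ into $(m!\binom{K}{m}/K^m)\,\Phi_{m\leftarrow K}[p](S') = (1-O(m^2/K))\,\Phi_{m\leftarrow K}[p](S')$, while the $(I-P)$-block, being PSD, has trace-norm equal to its trace, namely $1 - m!\binom{K}{m}/K^m = O(m^2/K)$.

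The Haar side is handled by $\rho^{(m)}_{\rm Haar} = \binom{D+m-1}{m}^{-1}\Pi_{\rm sym}$, noting that $\Pi_{\rm sym}$ commutes with $P$ (both are invariant under permutations of the $m$ copies). The $P$-block of $\Pi_{\rm sym}$ equals $\sum_{S'\in\Sigma_m}\ketbra{\phi_{S'}}$, so the $P$-block of $\rho^{(m)}_{\rm Haar}$ is $(\binom{D}{m}/\binom{D+m-1}{m})\sum_{S'}\pi_m(S')\ketbra{\phi_{S'}} = (1-O(m^2/D))\sum_{S'}\pi_m(S')\ketbra{\phi_{S'}}$, and the $(I-P)$-block has trace $O(m^2/D) \leq O(m^2/K)$ since $D\geq K$.

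Assembling, the $P$-block of $\rho^{(m)}_{\mathcal{E}_p}-\rho^{(m)}_{\rm Haar}$ is diagonal in $\{\ket{\phi_{S'}}\}$ with entries $(1-O(m^2/K))\Phi_{m\leftarrow K}[p](S')-(1-O(m^2/D))\pi_m(S')$, whose absolute sum equals $2\|\Phi_{m\leftarrow K}[p]-\pi_m\|_{\rm tv}\pm O(m^2/K)$ by triangle and reverse-triangle inequalities (using that $\Phi_{m\leftarrow K}[p]$ and $\pi_m$ are probability distributions summing to $1$). The $(I-P)$-block contributes at most the sum of the two summand traces, i.e.\ $O(m^2/K)$ in trace norm. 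Combining gives Eq.~\eqref{eq:thm_randomphase}. The main obstacle is really the parity argument that makes the off-diagonal blocks vanish identically: without it, a naive Cauchy--Schwarz estimate on PSD operators would only yield an error of $O(m/\sqrt K)$, too weak to recover the advertised $O(m^2/K)$ bound.
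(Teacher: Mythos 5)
Your argument is correct and leads to the advertised identity; it reaches the same endpoint as the paper's Appendix~A but by a more self-contained route. The paper replaces both $\rho^{(m)}_{\rm Haar}$ and $\rho^{(m)}_{\mathcal{E}_p}$ by convex combinations of unique-type states $\rho^{(m)}_{S,\rm unique}$ with $O(m^2/K)$ trace-norm error by \emph{citing} Propositions~2.3--2.4 of Ref.~\cite{aaronson_quantum_2023}, applies the triangle inequality, and then evaluates $\big\|\sum_S(p(S)-\pi_K(S))\rho^{(m)}_{S,\rm unique}\big\|_{\rm tr}$ directly from orthonormality of the unique-type basis. You instead re-derive the content of those cited facts: your projector $P$ onto collision-free $m$-tuples is precisely the support of the unique-type states, your parity argument (phase averaging annihilates every $P$-to-$(I-P)$ matrix element, because a distinct $\mathbf a$ forces $\mathbf b$ to be a permutation of $\mathbf a$) establishes \emph{exact} block-diagonality of $\mathbb{E}_f\ketbra{\psi_{S,f}}^{\otimes m}$ and is indeed the mechanism that upgrades a naive $O(m/\sqrt K)$ estimate to $O(m^2/K)$, and the $(I-P)$-trace $1-m!\binom{K}{m}/K^m$ is the birthday bound used in those references. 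Your Haar-side treatment via $\rho^{(m)}_{\rm Haar}\propto\Pi_{\rm sym}$ with $[\Pi_{\rm sym},P]=0$ is likewise correct, with residual $O(m^2/D)\le O(m^2/K)$. The one step worth spelling out is that for an operator block-diagonal with respect to $P$ the trace norms of the two blocks add, which is what lets you bound the $P$- and $(I-P)$-contributions separately; you use this implicitly. In short, same underlying decomposition into a unique-type part plus a small remainder, but you supply the proofs the paper outsources, making the argument more elementary and self-contained at the cost of extra bookkeeping.
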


\begin{proof}
See Appendix~\ref{app:proof1}.    
\end{proof}

This statement quantitatively links the distinguishability between quantum ensembles [left hand side of Eq.~\eqref{eq:thm_randomphase}] to the (non-)uniformity of a probability distribution over $\Sigma_m$, that is, over subsets of the computational basis of cardinality $m$ [right hand side of Eq.~\eqref{eq:thm_randomphase}]. 
Since we take $m \leq O(\poly{N})$ and $K \geq \omega(\poly{N})$, the error term on the right hand side is negligible in the regime we are interested in.
Note that the ensemble $\mathcal{E}_p$ is determined by a probability distribution $p$ over $\Sigma_K$, but what controls the $m$-copy distinguishability is a different distribution, $\Phi_{m\leftarrow K}[p]$, over $\Sigma_m$. 
We may view $\Sigma_K$ as the state space for $K$ indistinguishable particles on the hypercube $\{0,1\}^N$; then the stochastic map $\Phi_{m\leftarrow K}$ is akin to taking the marginal distribution over $m$-particle states. Particle indistinguishability however requires symmetrizing over all $\binom{K}{m}$ $m$-particle subsets of each $K$-particle subset, which is the content of Eq.~\eqref{eq:phi_channel_def}.

In Appendix~\ref{app:phi_maps} we prove various useful properties of the stochastic maps $\Phi$. Among these is the fact that, if we view $p$ as the distribution on $\Sigma_K$ induced by the action of a random automaton circuit, then $\Phi_{m\leftarrow K}[p]$ is the distribution induced by the same process on $\Sigma_m$, up to an average over initial states. 
Then Eq.~\eqref{eq:thm_randomphase} maps the $m$-copy distinguishability between quantum state ensembles onto the mixing of a classical Markov chain on $\Sigma_m$. 

\begin{proposition}[subset states]
\label{proposition:subset}
Consider an ensemble of subset states $\mathcal{E}'_p = \{\ket{\psi_{S}}:\ S\sim p\}$, specified by a general probability distribution $p$ over $\Sigma_K$ (subsets). We have, for all integers $1\leq m\leq K/2$,
\begin{equation}
    D_{\rm tr} \left( \rho_{\mathcal{E}'_p}^{(m)},  \rho_{\rm Haar}^{(m)} \right)
    = \frac{1}{2} \left\| \mathcal{M}^{(m)}_p \right\|_{\rm tr} + O\left(\frac{m}{\sqrt{K}}\right).
    \label{eq:thm_nophases}
\end{equation}
Here $\mathcal{M}^{(m)}_p$ is a $|\Sigma_m| \times |\Sigma_m|$ matrix defined by
\begin{align}
    (\mathcal{M}^{(m)}_p)_{S',S''} 
    & = \binom{K}{m}^{-1} \binom{K}{m+\delta} f^{(m+\delta)}_p (S'\cup S''), \\
    f^{(m+\delta)}_p & = \Phi_{m+\delta\leftarrow K}[p]-\pi_{m+\delta},
    \label{eq:thm_nophases_matrix}
\end{align}
where we use subsets $S',S''\in \Sigma_m$ to index the matrix elements of $\mathcal{M}^{(m)}_p$, and define $\delta \equiv |S'\setminus S''|$.
\end{proposition}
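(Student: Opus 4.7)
The plan is to decompose the symmetric subspace of $(\mathbb{C}^D)^{\otimes m}$ (with $D=2^N$) using the projector $\Pi_d$ onto its \emph{distinct-tuple sector}, spanned by normalized symmetric basis vectors $\ket{S}_{\rm sym} = (m!)^{-1/2}\sum_{\sigma \in \mathcal{S}_m} \ket{z_{\sigma(1)}}\otimes\cdots\otimes\ket{z_{\sigma(m)}}$ indexed by $S=\{z_1,\ldots,z_m\} \in \Sigma_m$. The crucial identity is $\Pi_d \ket{\psi_S}^{\otimes m} = \sqrt{\eta}\ket{\tilde\psi_S}$, where $\ket{\tilde\psi_S} = \binom{K}{m}^{-1/2}\sum_{S' \in \Sigma_m,\, S'\subset S} \ket{S'}_{\rm sym}$ is a unit vector and $\eta := m!\binom{K}{m}/K^m = 1-O(m^2/K)$. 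Hence $\|(1-\Pi_d)\ket{\psi_S}^{\otimes m}\| = O(m/\sqrt{K})$, which is the microscopic origin of the claimed error.

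First I would use this bound to reduce the problem to the compressed operators. Since ${\rm Tr}(\Pi_d^\perp \rho^{(m)}_{\mathcal{E}'_p}) = 1-\eta$, Cauchy--Schwarz on the positive operator $\rho^{(m)}_{\mathcal{E}'_p}$ gives $\|\Pi_d \rho^{(m)}_{\mathcal{E}'_p}\Pi_d^\perp\|_{\rm tr} \leq \sqrt{{\rm Tr}(\Pi_d \rho^{(m)}_{\mathcal{E}'_p}){\rm Tr}(\Pi_d^\perp \rho^{(m)}_{\mathcal{E}'_p})} = O(m/\sqrt{K})$, and analogously for the $(\Pi_d^\perp,\Pi_d^\perp)$ block. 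For $\rho^{(m)}_{\rm Haar} = \Pi_{\rm sym}/\binom{D+m-1}{m}$, the cross blocks vanish identically since $\Pi_d \subset \Pi_{\rm sym}$, and the non-distinct block has trace $O(m^2/D) \leq O(m^2/K)$. By the reverse triangle inequality, $\|\rho^{(m)}_{\mathcal{E}'_p} - \rho^{(m)}_{\rm Haar}\|_{\rm tr} = \|\Pi_d(\rho^{(m)}_{\mathcal{E}'_p}-\rho^{(m)}_{\rm Haar})\Pi_d\|_{\rm tr} + O(m/\sqrt{K})$, so it suffices to analyze the compressed difference.

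Next, I would compute the compressed matrix elements in the $\{\ket{S'}_{\rm sym}\}$ basis. Using $\mathbb{E}_{S\sim p}[\mathbf{1}(S'\cup S''\subset S)] = \binom{K}{m+\delta}\Phi_{m+\delta\leftarrow K}[p](S'\cup S'')$ with $\delta:=|S'\setminus S''|$, one finds $(\Pi_d\rho^{(m)}_{\mathcal{E}'_p}\Pi_d)_{S',S''} = \eta\binom{K}{m}^{-1}\binom{K}{m+\delta}\Phi_{m+\delta\leftarrow K}[p](S'\cup S'')$, while $(\Pi_d\rho^{(m)}_{\rm Haar}\Pi_d)_{S',S''} = \delta_{S',S''}/\binom{D+m-1}{m}$. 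Adding and subtracting the uniform-subset compressed operator $\Pi_d\rho^{(m)}_{\mathcal{E}'_{\pi_K}}\Pi_d$ then splits the compressed difference as $\eta\mathcal{M}^{(m)}_p + \Pi_d(\rho^{(m)}_{\mathcal{E}'_{\pi_K}}-\rho^{(m)}_{\rm Haar})\Pi_d$: the first piece is exactly $\eta$ times the matrix of the statement, and the second is bounded in trace norm by the uniform-subset-state bound of Ref.~\cite{giurgica-tiron_pseudorandomness_2023} reviewed in Sec.~\ref{sec:review_pe}, namely $O(m^2/K + mK/D)$.

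Combining everything with $\|\mathcal{M}^{(m)}_p\|_{\rm tr} = O(1)$ (a direct consequence of $\rho^{(m)}_{\mathcal{E}'_p}$ and $\rho^{(m)}_{\mathcal{E}'_{\pi_K}}$ being density operators), one obtains $\|\rho^{(m)}_{\mathcal{E}'_p}-\rho^{(m)}_{\rm Haar}\|_{\rm tr} = \|\mathcal{M}^{(m)}_p\|_{\rm tr} + O(m/\sqrt{K}+mK/D)$, which matches the stated form in the pseudoentanglement regime ($K=o(D^{2/3})$ makes $mK/D$ subleading to $m/\sqrt{K}$). The main obstacle is the careful bookkeeping of three distinct error sources---the non-distinct tail of $\ket{\psi_S}^{\otimes m}$, the multiplicative $\eta\neq 1$ discrepancy that turns an otherwise exact identity into an approximate one, and the residual subset-vs-Haar gap---and verifying that $\|\mathcal{M}^{(m)}_p\|_{\rm tr}$ is uniformly bounded so that the $O(m^2/K)$ multiplicative error remains subleading.
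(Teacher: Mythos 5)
Your argument is correct, and it rests on the same structural observation as the paper's proof---that $\ket{\psi_S}^{\otimes m}$ lives almost entirely in the unique-type (``distinct-tuple'') sector of the symmetric subspace, and that after restricting to that sector the matrix $\mathcal{M}^{(m)}_p$ emerges from the definition of $\Phi_{m+\delta\leftarrow K}$. The packaging, however, is different enough to be worth noting. The paper first invokes the uniform-subset bound of Ref.~\cite{giurgica-tiron_pseudorandomness_2023} to trade $\rho^{(m)}_{\rm Haar}$ for $\sum_S\pi_K(S)\ketbra{\psi_S}^{\otimes m}$, and only then replaces each pure state $\ketbra{\psi_S}^{\otimes m}$ by its \emph{normalized} unique-type counterpart $\ketbra{\psi^{(m)}_{S,{\rm uni}}}$ using the elementary pure-state trace-distance formula $2\sqrt{1-|\langle\phi|\chi\rangle|^2}$; because the replacement is done state by state and the coefficients $p(S)-\pi_K(S)$ have $\ell^1$-mass at most $2$, the $O(m/\sqrt{K})$ per-state error sums to $O(m/\sqrt{K})$ with no normalization factor ever appearing. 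You instead compress the full operator difference onto $\Pi_d$ via a block decomposition and Cauchy--Schwarz, then add and subtract the uniform-subset moment inside the compressed block. This is cleaner in one sense (it handles the density operator directly rather than state by state), but it forces you to carry the sub-normalization $\eta = m!\binom{K}{m}/K^m$, which turns the exact identity into $\eta\mathcal{M}^{(m)}_p + (\cdots)$; you then correctly observe that the extra step $\|\mathcal{M}^{(m)}_p\|_{\rm tr}=O(1)$ is needed to fold the $(1-\eta)\|\mathcal{M}^{(m)}_p\|_{\rm tr}=O(m^2/K)$ discrepancy into the error budget. Your bound $\|\mathcal{M}^{(m)}_p\|_{\rm tr}\le 2/\eta$ (from $\eta\mathcal{M}^{(m)}_p=\Pi_d(\rho^{(m)}_{\mathcal{E}'_p}-\rho^{(m)}_{\mathcal{E}'_{\pi_K}})\Pi_d$ and contractivity of compression) handles this. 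Your remark that the $mK/D$ tail of Ref.~\cite{giurgica-tiron_pseudorandomness_2023}'s bound is only subleading for $K=o(D^{2/3})$ is also a point the paper leaves implicit (``in the relevant range of $K$''), and both proofs carry the same residual there. Net: the two proofs are mathematically distinct (Cauchy--Schwarz block bounds versus a pure-state fidelity lemma) but rely on the same key lemma and reach the same error budget; the paper's pure-state route is marginally tighter because it sidesteps the $\eta$ factor altogether.
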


\begin{proof}
See Appendix~\ref{app:proof2}. 
\end{proof}

This statement again relates distinguishability between quantum state ensembles [left hand side of Eq.~\eqref{eq:thm_nophases}] to the non-uniformity of certain probability distributions over subsets [right hand side of Eq.~\eqref{eq:thm_nophases}]. However, unlike Eq.~\eqref{eq:thm_randomphase}, here we have contributions from a family of distributions $\{ \Phi_{m+\delta \leftarrow K}[p]\}_{\delta = 0}^m$, rather than from the single distribution $\Phi_{m\leftarrow K}[p]$.
If we view $p$ as arising from a Markov chain on $\Sigma_K$ induced by a random automaton circuit dynamics, then the emergence of pseudoentanglement is again related---albeit in a less straightforward way---to the mixing of the associated Markov chains $\Phi_{m+\delta\leftarrow K}[p]$ on $\Sigma_{m+\delta}$, for all $\delta = 0, 1, \dots, m$.

Having reduced the problem of ensemble pseudothermalization to that of equilibration of certain Markov chains [random walks over the subset spaces $\{\Sigma_m\}$, Fig.~\ref{fig:automaton}(b)], we are now in a position to study the problem quantitatively by leveraging known results on these processes, including those reviewed in Sec.~\ref{sec:review_cutoff} and \ref{sec:review_ip_rw}. 
In the following sections we consider several automaton gate sets $\mathcal{G}$ and initial state distributions, and analyze the associated classical Markov chains to characterize the process of ensemble pseudothermalization.

%%%%%%%%%%%%%
% SPREADING %
%%%%%%%%%%%%%

\section{Spreading of Pseudoentanglement}
\label{sec:spreading_pe}

\subsection{Motivation and setup} \label{sec:spreading_pe_setup}

In this section, we study the problem of how pseudoentanglement spreads in space, focusing on the setting of a smaller pseudoentangled system placed in contact with a larger, trivial system. This problem is reminiscent of analogous settings in the dynamics of entanglement: e.g., the growth of entanglement after quantum quenches where thermalization may happen due to the spreading of entangled quasiparticles in space~\cite{calabrese_evolution_2005,calabrese_entanglement_2009}, or the equilibration of quantum spin chains connected to thermal baths at their edges~\cite{Prosen_matrix_2009,znidaric_transport_2011,morningstar_avalanches_2022,sels_bath-induced_2022}.

We consider a situation in which we are given a random subset-phase ensemble $\mathcal{E}_0 = \{ \ket{\psi_{S_0,f_0}}:\ S_0\sim {\rm unif.},\ f_0\sim {\rm unif.}\}$ on a system $A$ of $N_A$ qubits. 
We take $|S_0| = K$ with $\omega(\poly{N_A}) < K < 2^{o(N_A)}$, so that the initial ensemble on $A$ is pseudoentangled. 
We then introduce a system $\bar{A}$ of $N-N_A$ ancillas initialized in the product state $\ket{0}^{\otimes N-N_A}$, and form a larger system $A\bar{A}$ of $N$ qubits. 
This overall system is still in a subset-phase state, though not a uniformly-random one; the initial state indeed is of the form $\ket{0}^{\otimes N-N_A} \otimes  \ket{\psi_{S_0,f_0}} $. 
We can straightforwardly redefine $S_0$ as a subset of the computational basis of the extended system $\{0,1\}^N$ (rather than the original system $\{0,1\}^{N_A}$), by padding each bitstring with $N-N_A$ leading zeros: $\mathbf{z} = (z_1,\dots z_{N_A}) \mapsto (0,\dots,0,z_1,\dots z_{N_A})$. The phase function $f_0$ is never evaluated on any bitstrings outside the original ones, so it can be trivially extended to a uniformly random binary function on $\{0,1\}^N$. 
Thus after including the ancillas, we have a subset-phase ensemble with uniformly random phases, but highly-correlated subsets. These states are manifestly not pseudoentangled---one can efficiently distinguish them from Haar-random states by, e.g., measuring $\langle Z_j \rangle$ for any $j > N_A$. 

We then place the original system and ancillas into contact by allowing them to evolve together under automaton circuit dynamics, as discussed in Sec.~\ref{sec:automaton}. The question we aim to address is if and how pseudoentanglement is achieved on the entire system of $N$ qubits over the course of the evolution. 
As the phases are already randomized, it is sufficient to focus on the subsets. 

A first apparent constraint is that the value of $K$ is conserved, even as we add the ancillas: even assuming that the subsets indeed equilibrate to the uniform distribution, pseudoentanglement requires $\omega(\poly{N}) < K < 2^{o(N)}$ (see Sec.~\ref{sec:review_pe}). Combining these constraints with the analogous ones from the initial state ($N \mapsto N_A$) yields the overall constraint
\begin{equation}
    \omega(\poly{N}) < K < 2^{o(N_A)}. \label{eq:constraint_K_NA_N}
\end{equation}
In particular this implies that ${N_A}$ must grow faster than $\log(N)$. Intuitively this means that the number of ancillas added to the original system cannot be too large, otherwise one would ``dilute'' the initial pseudoentanglement too much.

Assuming Eq.~\eqref{eq:constraint_K_NA_N}, it remains to analyze the equilibration of subsets toward the uniform distribution under the chosen model of automaton dynamics. In the rest this section, we discuss how to achieve this under two automaton gate sets $\mathcal{G}$, one made of all-to-all gates and the other of local gates.

\subsection{All-to-all circuit model} 
\label{sec:non-local_gate}

\begin{figure}
    \centering
    \includegraphics[width=0.99\columnwidth]{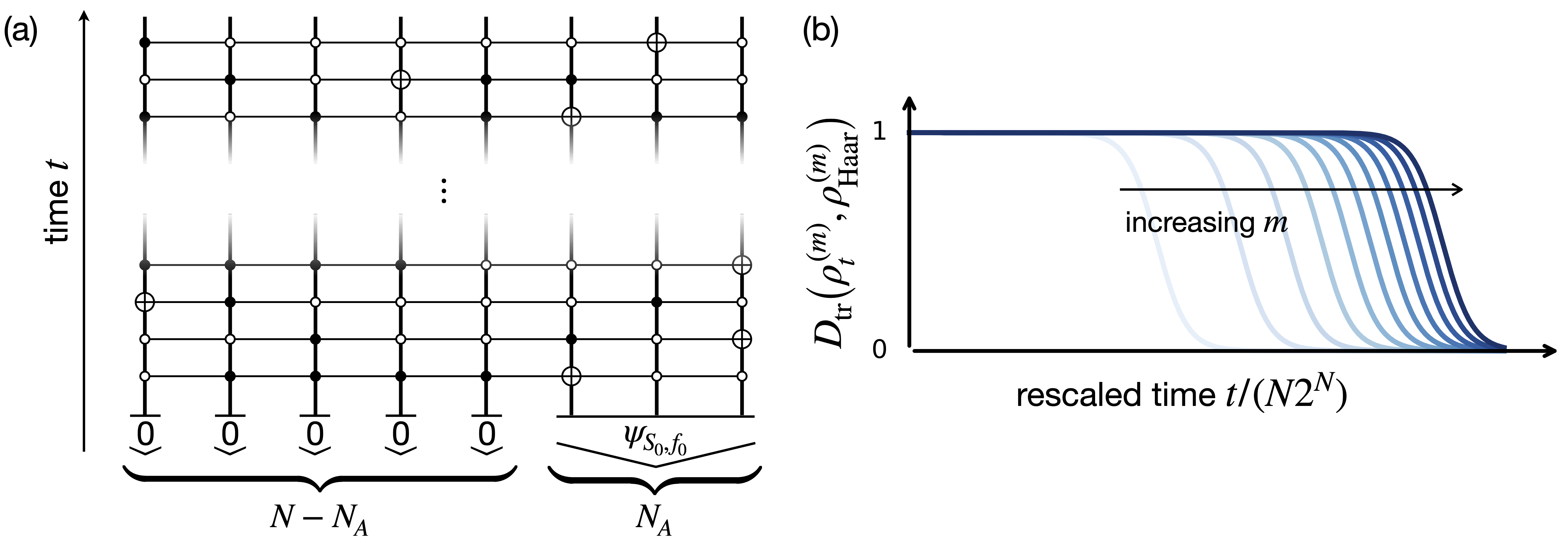}
    \caption{Spreading of pseudoentanglement in an all-to-all automaton circuit model. 
    (a) Sketch of the circuit, showing a subset-phase initial state $\ket{\psi_{S_0,f_0}}$ on subsystem $A$ along with $\ket{0}^{\otimes N-N_A}$ on the rest of the system, and a sequence of random globally-controlled NOT gates, $\CNX{\mathbf a}{i}$. Note that the controls come in two variations, open and solid circles, representing control on 0 and 1 respectively. 
    (b) Sketch of the exact result Eq.~\eqref{eq:alltoall_pseudotherm_result} for the distinguishability transitions in this model. Consecutive indistinguishability times $\pseudothermtime{m}$, $\pseudothermtime{m+1}$ are not sharply separated. 
    }
    \label{fig:nonlocal}
\end{figure}

We start by analyzing a model with a high degree of mathematical tractability, that comes at the cost of all-to-all interactions and exponentially-long equilibration time scales. We emphasize that these latter aspects are not at all necessary to ensemble pseudothermalization in general, and are in fact absent from another model that we study next, in Sec.~\ref{sec:local-circuit}.

We consider a gate set made of {\it globally-controlled} NOT gates: $\mathcal{G} = \{ \CNX{\mathbf a}{i} \}$, where $\CNX{\mathbf a}{i}$ flips qubit $i$ if and only if the remaining $N-1$ qubits are in computational basis state $\ket{\mathbf a}$. There are $ND/2$ gates in $\mathcal{G}$: $N$ choices for the target $i$ and $2^{N-1} = D/2$ choices for the control string $\mathbf a$ ($D=2^N$ is the Hilbert space dimension). 
An example of the resulting circuit family is illustrated in Fig.~\ref{fig:nonlocal}(a). 

Viewed as a permutation of the hypercube $\mathbb{Z}_2^N$, the gate $\CNX{\mathbf a}{i}$ is simply a transposition of two neighboring vertices, 
\begin{equation}
(\mathbf{a}_{1:i-1},0,\mathbf{a}_{i+1:N}) \xleftrightarrow{\CNX{\mathbf a}{i}} (\mathbf{a}_{1:i-1},1,\mathbf{a}_{i+1:N}),
\end{equation} 
where we used the shorthand $\mathbf{a}_{i:j}= (a_i,\dots a_j)$. 
Clearly by a sequence of such neighbor transpositions one can transpose two arbitrary vertices\footnote{Given a path from $\mathbf x$ to $\mathbf y$ on the hypercube, the sequence of neighbor transpositions along the path and back has the effect of transposing $\mathbf x$ and $\mathbf y$ while leaving all other sites fixed.} $\mathbf x\leftrightarrow\mathbf y$; with these one can generate the whole permutation group $\perm{D}$.  
It follows that $\mathcal{G}$ induces an irreducible Markov chain on $\Sigma_m$ for all values of $m$.
The Markov chains induced by $\mathcal{G}$ on the subset spaces $\{\Sigma_m\}$ are also clearly reversible ($\CNX{\mathbf a}{i}$ is its own inverse) and aperiodic (since $|S| = m\ll 2^N$, it is always possible to draw a control-target pair $(\mathbf{a},i)$ such that $\CNX{\mathbf a}{i}$ acts trivially on all bitstrings in $S$, making the walk idle). 
By the discussion in Sec.~\ref{sec:automaton} we have that the unique steady state of the Markov chain on $\Sigma_K$ is the uniform distribution $\pi_K$. Thus at late times this dynamics produces the random subset-phase state ensemble over the entire system, which in the relevant regime of $(K,N)$ is pseudoentangled. 

Having addressed the $t\to\infty$ limit of the dynamics, it remains to understand how pseudoentanglement is achieved as a function of time. To this end, we can use Proposition~\ref{proposition:subsetphase}, 
which relates the $m$-copy distinguishability between the quantum state ensembles, $\mathcal{E}_{p_t}$ and Haar, to the mixing of a Markov chain $\Phi_{m\leftarrow K}[p_t]$ on $\Sigma_m$, with $p_t$ the probability distribution over subsets at time $t$ during the dynamics. 

We note that the Markov chain on the permutation group $\perm{D}$ induced by this gate set $\mathcal{G}$ is precisely the interchange process (\IP) on the hypercube $\mathbb{Z}_2^N$, reviewed in Sec.~\ref{sec:review_ip_rw}:
each gate picks an edge of the hypercube uniformly at random and transposes its two vertices.
It follows that the Markov chains $\Phi_{m\leftarrow K}[p_t]$ over the spaces of subsets $\Sigma_m$ are exactly the exclusion processes $\EX{m}$. 
These are well-studied problems for which we can rely on powerful known results. 

First of all, Aldous' spectral gap conjecture~\cite{caputo_proof_2010}, Eq.~\eqref{eq:aldous}, states that the relaxation time of $\EX{m}$ is equal, for all $m$, to the relaxation time of the simple random walk (\RW).
The simple random walk on the hypercube can be solved straightforwardly (see Appendix~\ref{app:hypercube_rw}), and one finds\footnote{
The exponential scaling in $N$ is due to the fact that each elementary transposition of hypercube vertices is exponentially unlikely to cause the random walker to hop: only $N$ gates out of $ND/2$ cause a nontrivial hop. The conventional result for a random walker who hops once per unit time is $t_{\rm rel}^{\RW} = N/2$; rescaling this by the average number of gates between hops ($D/2$) yields Eq.~\eqref{eq:relaxation_nonlocal} (see also more detailed derivation in Appendix~\ref{app:hypercube_rw}).
} 
$t_{\rm rel}^{\RW} = ND/4$; therefore 
\begin{equation}
    t_{\rm rel}^{\EX{m}} = t_{\rm rel}^{\RW} = \frac{ND}{4} \quad \forall\ m.
    \label{eq:relaxation_nonlocal}
\end{equation}

Secondly, the mixing time for $\EX{m}$ was recently solved~\cite{hermon_exclusion_2020} and was found to scale as
\begin{equation}
    t_{\rm mix}^{\EX{m}} \sim ND \log(Nm). 
    \label{eq:mixing_nonlocal}
\end{equation}
Due to Proposition~\ref{proposition:subsetphase}, this mixing time is also the time taken to achieve $m$-copy indistinguishability.

Based on Eq.~\eqref{eq:relaxation_nonlocal} and \eqref{eq:mixing_nonlocal}, we see that 
\begin{equation}
t_{\rm mix}^{\EX{m}} / t_{\rm rel}^{\EX{m}} \sim \log(Nm) \xrightarrow{N\to\infty} \infty.
\end{equation}
This asymptotic separation between the scaling of relaxation and mixing times is a necessary condition for a {\it cut-off phenomenon}, reviewed in Sec.~\ref{sec:review_cutoff}. It was conjectured that in sufficiently ``generic'' models, such separation should also be a sufficient condition\footnote{
%%%
Note that the condition $t_{\rm rel} = o(t_{\rm mix})$ is not sufficient to get a cutoff phenomenon in total variation distance in general (on the other hand it {\it is} sufficient if the distance from equilibrium is measured with a $L^{p}$ norm, $p>1$~\cite{chen_cutoff_2008}). It is possible to have a so-called \textit{pre-cutoff} where multiple step-like transitions arise in the total variation distance. However, this usually requires some particular structure in the connectivity of the underlying graph.
}. 
This cutoff effect would imply a sequence of dynamical transitions in our dynamics: for any number of copies $m \leq O(\poly{N})$, there is a time scale at which the ensemble of output states becomes indistinguishable from Haar-random at the $m$-copy level; this $m$-copy {\it indistinguishability time} in the present model is given by
\begin{equation}
    \pseudothermtime{m} \sim N2^N[\log(mN) \pm O(1)],
    \label{eq:alltoall_pseudotherm_result}
\end{equation}
with the $\pm$ term denoting the temporal width of the dynamical crossover to equilibrium. This result is sketched in Fig.~\ref{fig:nonlocal}(b).  We note that, while each distinguishability transition is sharp (in the sense of $t_{\rm rel}^{(m)} / t_{\rm mix}^{(m)} \to 0$ at large $N$), consecutive transitions $\pseudothermtime{m}$, $\pseudothermtime{m+1}$ are {\it not} sharply separated. This is due to the slow growth of $t_{\rm mix}^{(m)}$ with $m$, namely the fact that $[t_{\rm mix}^{(m+1)}-t_{\rm mix}^{(m)}]/t_{\rm rel}^{(m)} \sim \log\frac{m+1}{m}$ does not diverge (in fact it goes to zero at large $m$).

This exactly solvable model serves as a nice illustration of the idea of ensemble pseudothermalization, including the possible emergence of a cutoff phenomenon. At the same time, it has several drawbacks: the all-to-all interactions require complex couplings or compilation into deep local circuits; even worse, the resulting time scales for relaxation and mixing are exponentially long in system size. However, these limitations are artifacts of the model, and not intrinsic to the concept of pseudoentanglement or ensemble pseudothermalization. Below we introduce a {\it local} automaton gate set that does not suffer from these issues. 

\subsection{Local circuit model}
\label{sec:local-circuit}

\begin{figure}
    \centering
    \includegraphics[width=0.9\columnwidth]{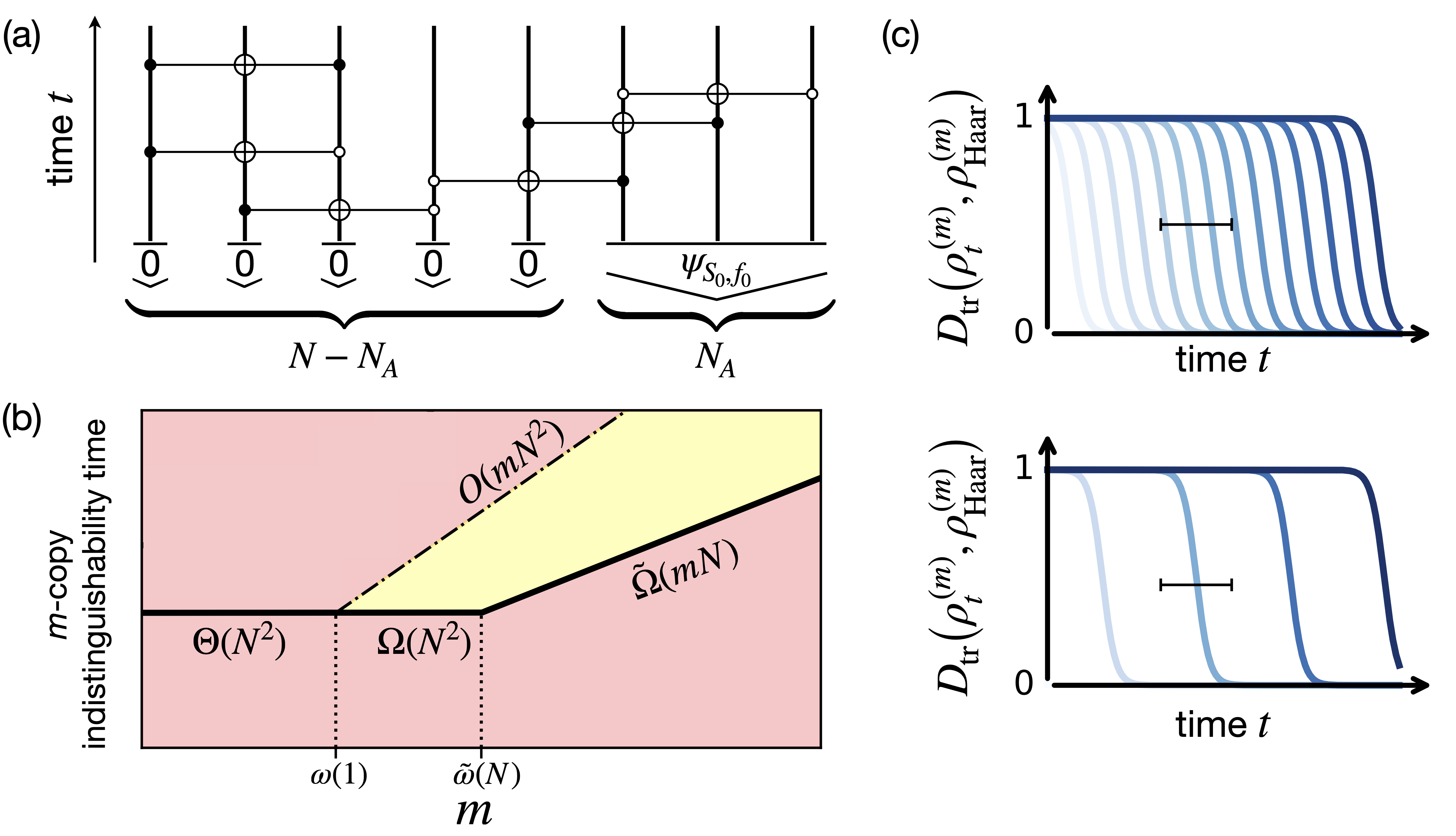}
    \caption{Spreading of pseudoentanglement in a local automaton circuit model. 
    (a) Circuit model, analogous to Fig.~\ref{fig:nonlocal} but with geometrically local $\CCX$ gates.
    (b) Schematic summary of bounds on the indistinguishability time. The $x$ axis qualitatively indicates asymptotic scalings of $m$ vs $N$ as $N\to\infty$. Solid lines denote unconditional lower bounds while the dot-dashed line in an upper bound that depends on Conjecture~\ref{conjecture:relaxation}. Red regions are ruled out while the yellow region is allowed. The lower bounds come from Eq.~\eqref{eq:tmix_counting_lowerbound} (counting) and Eq.~\eqref{eq:tmix_locality_lowerbound} (locality). The upper bound comes from Eq.~\eqref{eq:tmix_bounds_conditional}.
    (c) Consecutive $m$-copy indistinguishability transitions in this model may be overlapping (left) or sharply separated (right) in the thermodynamic limit: answering this question requires tighter bounds on the mixing time.
    }
    \label{fig:local}
\end{figure}

To build a more physically realistic model of dynamics, we seek to restrict the gate set $\mathcal G$ to geometrically local gates on a $d$-dimensional array of qubits. 

The simplest local automaton gate set comprises just local bit flips: $\mathcal{G} = \{X_i\}_{i=1}^N$. It is straightforward to see that the Markov chain induced by $\mathcal{G}$ on $\Sigma_1$ is irreducible (we can map any bitstring to any other via bitflips); however it is already not irreducible on $\Sigma_2$: given a set $S = \{\mathbf z, \mathbf z'\}$, the ``relative coordinate'' $\mathbf z \oplus \mathbf z'$ ($\oplus$ being sum modulo 2) is invariant under bit flips. 
Moving up in complexity, one might try two-qubit automaton gates, generated by NOTs and CNOTs; however, it is possible to show that these gates are also insufficient to achieve irreducibility on $\Sigma_m$ above $m=3$.
Three-qubit automaton gates, on the other hand, can induce irreducible Markov chains on $\Sigma_m$ for all $m \leq D/4$, as we show in Appendix~\ref{app:local_gateset}.

In particular, for the case of a one-dimensional qubit array, we consider the gate set 
$\mathcal{G} = \{u_{iab} = \textsf{C}_{i-1,a} \textsf{C}_{i+1,b} {X}_i\}$
where $a,b\in \{0,1\}$ label a control bitstring and $i \in [N]$ labels the target site (we take periodic boundary conditions, identifying $i = 0$ and $i = N$), as shown in Fig.~\ref{fig:local}(a). 
The $u_{iab}$ gate flips qubit $i$ if and only if its left neighbor $i-1$ is in state $a$ and its right neighbor $i+1$ is in state $b$. 
There are $N$ choices for the target site $i$ and 4 choices for the control string $(a,b)$, giving a total of $4N$ gates in $\mathcal{G}$.
Up to bit flips, each $u_{iab}$ is a Toffoli gate.
It is easy to see that the gate set generates $\CX$ gates, $\textsf{SWAP}$ gates, and bit flip gates as well. In fact it generates arbitrary {\it even} permutations of the computational basis, see Appendix~\ref{app:local_gateset}.

\tocless\subsubsection{First moment}

We start our analysis of this dynamics from the first moment, $m = 1$. The Markov chain on $\Sigma_1 \equiv \mathbb{Z}_2^N$ corresponds once again to the simple random walk on the hypercube, see Appendix~\ref{app:hypercube_rw}. With each gate the walker, a bitstring $\mathbf z$, stays in place with probability 3/4 (if the control string $a,b$ of the randomly-sampled $u_{iab}$ gate does not match $\mathbf z$), otherwise it hops to a neighbor (a bitstring $\mathbf z'$ differing only at one bit) chosen uniformly at random based on the value of $i \in [N]$. 
Using again the known results for the simple random walk on the hypercube, Appendix~\ref{app:hypercube_rw}, we have
\begin{equation}
    t_{\rm rel}^{(1)} = 2N,
    \qquad 
    t_{\rm mix}^{(1)} \sim N\log(N),
    \label{eq:trel_tmix_RW_local}
\end{equation}
with the only difference from the case of Sec.~\ref{sec:non-local_gate} (with $m=1$) being the different average idling time between hops (4 instead of $D/2$). 
Just like in Sec.~\ref{sec:non-local_gate}, the random walk exhibits a cutoff~\cite{diaconis_asymptotic_1990}, with the approach to equilibrium taking place in a relatively narrow time window, $t \sim N[\log N \pm O(1)]$. Intuitively, to achieve equilibrium the walker has to hop in all $N$ directions with high probability (equivalently, all $N$ qubits need to be flipped with high probability). This is an instance of the so-called ``coupon collector's problem'', which gives a scaling $\sim N \log(N)$. 
Note that time $t$ is defined as total number of gates acting on the system; in terms of an intensive time $\tau \equiv t/N$, such that each qubit flips of order once per unit $\tau$, equilibration happens at $\tau \sim \log(N) \pm O(1)$. 

\tocless\subsubsection{Higher moments: Relaxation time}

For higher moments $m>1$, unlike the all-to-all model of Sec.~\ref{sec:non-local_gate}, the Markov chain on $\Sigma_m$ is not exactly solvable (to our knowledge). To make progress we resort to a combination of rigorous bounds and numerical calculations.

\begin{figure}
    \centering
    \includegraphics[width=0.6\columnwidth]{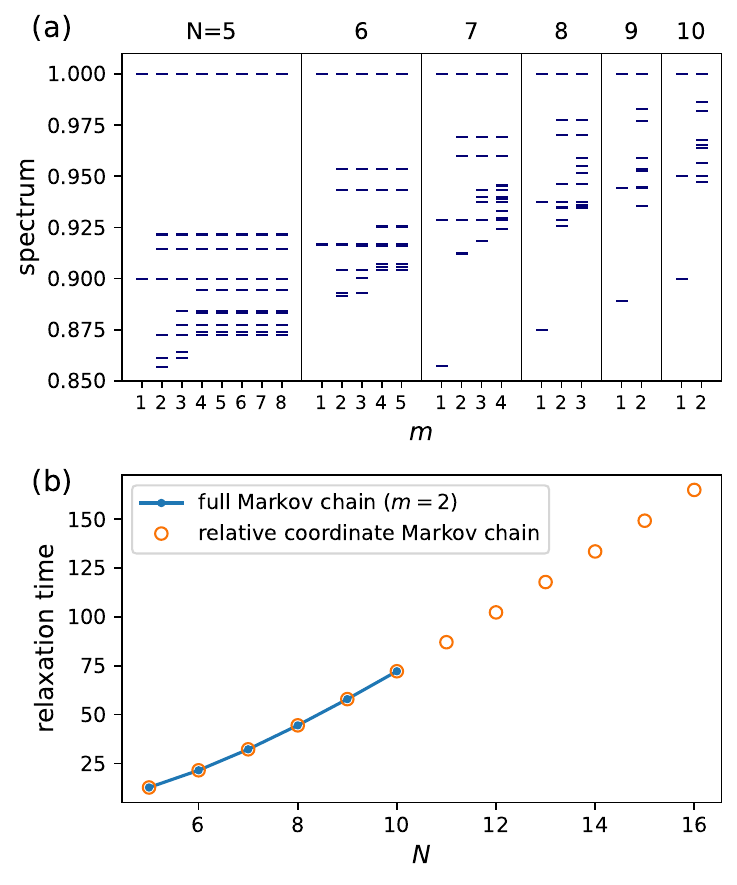}
    \caption{(a) Largest 20 eigenvalues of the Markov chain transition matrix $\Gamma_{S,S'}$ for different system sizes $N=5,\dots 10$ (indicated at the top of each panel) and subset sizes $m$, obtained from exact diagonalization (Lanczos). We consider all pairs $(N,m)$ such that the dimension $|\Sigma_m| = \binom{2^N}{m}$ of the transition matrix is below $2\times 10^7$. For $m=1$ we recover the spectrum of the single-particle random walk, $1-\frac{c}{2N}$, $c \in [N]$; a new leading eigenvalue $\lambda_1 > 1-\frac{1}{2N}$ appears for $m=2$ and persists for all accessible values of $m$. 
    (b) Relaxation time $t_{\rm rel} = 1/(1-\lambda_1)$ as a function of system size $N$ for two Markov chains: the full Markov chain over pairs of bitstrings, $\Gamma$  ($\lambda_1$ taken from (a), $m=2$), and the random walk of the ``relative coordinate'' $\Gamma|_R$ (see main text). The two agree to numerical precision up to the system sizes accessible to both ($N\leq 10$); the data suggests an asymptotically linear scaling of $t_{\rm rel}^{(2)}$ with $N$.}
    \label{fig:spectra}
\end{figure}

We begin by analyzing the relaxation time $t_{\rm rel}^{(m)}$ of the Markov chain on $\Sigma_m$, for $m>1$, through exact numerical diagonalization of the Markov chain transition matrix $\Gamma$ for accessible values of the system size $N$ and subset size $m$. Exact diagonalization is limited by the dimensionality of the problem, given by the size of the subset space\footnote{The effective dimensionality is lower due to the problem's translation and reflection invariance, but only by a factor of $\approx N$.} $|\Sigma_m| = \binom{D}{m} = \binom{2^N}{m}$. This grows explosively with $m$ and $N$, limiting us to $mN\lesssim 30$. 

Results for the spectra of $\Gamma$ for different system sizes $N$ and moments $m$ are shown in Fig.~\ref{fig:spectra}(a). 
Recall that the spectrum of $\Gamma$ has a leading eigenvalue $\lambda_0 = 1$, associated to the uniform distribution $\pi_m$ (its unique steady state), and then subleading eigenvalues $1 > \lambda_1 \geq \lambda_2 \geq \dots$; the inverse spectral gap defines the relaxation time: $t_{\rm rel}^{(m)} = 1/(1-\lambda_1)$. 
For $m=1$ the problem reduces once again to the simple random walk on the hypercube $\mathbb{Z}_2^N$, so the leading eigenvalue $\lambda_0 = 1$ is followed by a $N$-fold degenerate multiplet of eigenvalues $\lambda_1 = 1-\frac{1}{2N}$, giving the relaxation time $t_{\rm rel}^{(1)} = 2N$. 
These eigenvalues appear in the spectra for all values of $m$, where they are associated to the $N$ components of the dipole moment of the $m$-particle distribution on the hypercube, see Appendix~\ref{app:hypercube_rw_multipole}. 

However, for $m\geq 2$, we observe a new second-largest eigenvalue in the spectrum of $\Gamma$, $\lambda_1 > 1-1/2N$, giving longer relaxation times $t_{\rm rel}^{(m)} > t_{\rm rel}^{(1)}$. 
This is unlike the all-to-all model of Sec.~\ref{sec:non-local_gate}, where Aldous' spectral gap conjecture, Eq.~\eqref{eq:aldous} (see Sec.~\ref{sec:review_ip_rw}), forces $t_{\rm rel}^{(m)}$ to be $m$-independent.
Nonetheless, Fig.~\ref{fig:spectra}(a) shows no further $m$-dependence in $\lambda_1$ beyond $m = 2$. 

By numerical inspection, we find that the new slowest mode of $\Gamma$ for $m = 2$, the function $f_1(S)$ such that $\Gamma f_1 = \lambda_1 f_1$, depends on the subset $S = \{\mathbf{z}, \mathbf{z}'\}$ only through the ``relative coordinate'' $\mathbf r \equiv \mathbf{z} \oplus \mathbf{z}'$. 
In Appendix~\ref{app:hypercube_rw_relative} we show that the subspace $R$ spanned by functions that depend on $S$ only through the relative coordinate $\mathbf r$ is invariant under the Markov chain. Thus we can block-diagonalize $\Gamma = \Gamma|_R \oplus \Gamma|_{R^\perp}$, where the block $\Gamma|_R$ describes a random walk of $\mathbf r$. 
Numerical diagonalization confirms that the leading eigenvalue $\lambda_1$ of $\Gamma$ comes from the $\Gamma|_R$ block: results for $\lambda_1$ agree to numerical precision between the two Markov chains. Furthermore, due to the reduced dimensionality ($\sim D$ for $\Gamma|_R$ compared to $\sim D^2$ for the full $\Gamma$), we are able to numerically diagonalize the restricted Markov chain up to larger values of system size $N$. Results, shown in Fig.~\ref{fig:spectra}(b), indicate a clear linear scaling $t_{\rm rel}^{(2)} = \Theta(N)$. 

These observations ($t_{\rm rel}^{(2)}$ scaling linearly in $N$ and $t_{\rm rel}^{(m)}$ being independent of $m \geq 2$ for all numerically accessible values) lead us to formulate the following conjecture:

\begin{conjecture}[scaling of relaxation time]
\label{conjecture:relaxation}
For all $m\geq 2$, the Markov chain on $\Sigma_m$ induced by the local automaton gate set $\mathcal{G}$ has relaxation time $t_{\rm rel}^{(m)} = \Theta(N)$, independent of $m$.
\end{conjecture}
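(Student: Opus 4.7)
The plan is to split the conjecture into a lower bound $t_{\rm rel}^{(m)} \geq \Omega(N)$ and a uniform upper bound $t_{\rm rel}^{(m)} \leq O(N)$. The lower bound is the easy half. For any single bit $k$, define the ``dipole'' function $F_k(S) = \sum_{\mathbf z \in S}(-1)^{z_k}$ on $\Sigma_m$. Under a gate $u_{iab} \in \mathcal{G}$ the map $S \mapsto u(S)$ acts bijectively on elements of $S$, so $\mathbb{E}_{u \in \mathcal{G}}[F_k(u(S))] = \sum_{\mathbf z \in S} \mathbb{E}_u[(-1)^{u(\mathbf z)_k}]$. Each term is precisely the single-particle matrix element, which multiplies $(-1)^{z_k}$ by $1 - \tfrac{1}{2N}$ (as in the single-particle analysis giving Eq.~\eqref{eq:trel_tmix_RW_local}). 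Hence $F_k$ is an eigenfunction of $\Gamma^{(m)}$ with eigenvalue $1 - \tfrac{1}{2N}$, giving $t_{\rm rel}^{(m)} \geq 2N$.

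For the upper bound I would adopt a structural, subspace-decomposition strategy guided by the numerics. The first step is to generalize the relative-coordinate observation of Appendix~\ref{app:hypercube_rw_relative}: define, for each $k \leq m$, a ``$k$-point correlation subspace'' $R_k \subset \ell^2(\Sigma_m)$ consisting of functions that depend on $S$ only through the multiset $\{\mathbf z_{i_1}\oplus \cdots \oplus \mathbf z_{i_k} : i_1<\cdots<i_k\}$, and verify that $\Gamma$ respects this filtration after averaging over the $(a,b)$ control bits of each $u_{iab}$ (the computation for $m=2$ in the excerpt exploits precisely this averaging). The second step is to show that, block-by-block, the effective reduced chain on $R_k / R_{k-1}$ has relaxation time $O(N)$. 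The $k=1$ block is the hypercube walk, giving $2N$. The key case is $k=2$: the induced chain on the relative coordinate $\mathbf r = \mathbf z\oplus \mathbf z'$ is the constrained walk in which bit $i$ of $\mathbf r$ flips at rate $(2N)^{-1}[r_{i-1}\vee r_{i+1}]$ on $\{0,1\}^N \setminus \{\mathbf 0\}$. For this chain I would establish a Poincaré inequality $\mathcal{E}(g,g) \geq (c/N)\,\mathrm{Var}_\pi(g)$ by a canonical-paths/comparison argument against the standard hypercube walk, controlling the cost of ``activating'' isolated bits by routing through a nearby occupied site (a coupon-collector-style argument bounds the total amortized cost by $O(N)$). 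For $k\geq 3$ one would argue that the induced chains are no slower than the $k=2$ chain by constructing an orthogonal projection from the $R_k$ block onto a two-point subchain whose Dirichlet form dominates.

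The main obstacle is the upper bound at high $k$: even if the $k=2$ relative-coordinate chain is shown to relax in time $\Theta(N)$, proving that the entire hierarchy $R_2, R_3, \dots, R_m$ enjoys spectral gaps bounded below uniformly by $\Omega(1/N)$ requires either (i) an Aldous-type identity tailored to this non-IP gate set (which fails in general, as noted in Sec.~\ref{sec:review_ip_rw}), or (ii) a genuinely combinatorial decomposition of the Dirichlet form that peels off one particle at a time. An induction on $m$ via a ``tagged-particle'' coupling, in which one attempts to couple two $m$-particle configurations by first coupling $m-1$ particles using the inductive hypothesis and then driving the last particle with a biased single-site walk, is the route I expect to be most promising; the difficulty is that the remaining particle's motion is non-Markovian due to exclusion with the already-coupled particles, so controlling this obstruction (possibly through a log-Sobolev or entropy-decay strengthening in the inductive step) is where the real work lies.
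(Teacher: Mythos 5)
This statement is explicitly labeled a \emph{conjecture} in the paper: the authors do not prove it, but support it with exact-diagonalization numerics (Fig.~\ref{fig:spectra}) and give only a loose unconditional upper bound $t_{\rm rel}^{(m)} \leq O(mN^5)$ via a comparison argument against ``simple permutations'' (Appendix~\ref{app:local_relaxation}), leaving the proof ``as a goal for future research.'' Your proposal, read against that backdrop, correctly settles the easy half and sketches a plausible program for the hard half, but it does not constitute a proof.

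Your lower bound is correct and is exactly the paper's multipole argument: the functions $F_k(S) = \sum_{\mathbf z\in S}(-1)^{z_k}$ are (up to a factor of $1/m$) the dipole eigenmodes $f_{\mathbf a}$ of Appendix~\ref{app:hypercube_rw_multipole}, and your computation recovers the eigenvalue $1-\tfrac{1}{2N}$, giving $t_{\rm rel}^{(m)} \geq 2N$ for all $m$. For the upper bound, however, several steps remain genuinely open, and you acknowledge as much. First, the invariance of your proposed $k$-point filtration $R_k$ is only verified in the paper for $(m,k)=(2,2)$; already for $m\geq 3$ the pairwise XORs $\mathbf z_i\oplus\mathbf z_j$ are linearly dependent, and it is not established that averaging over the control bits $(a,b)$ preserves functions of the higher-$k$ multisets. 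Second, even granting the filtration, the $k=2$ block is a kinetically constrained walk on $\{0,1\}^N\setminus\{\mathbf 0\}$, and your Poincar\'e-inequality-via-canonical-paths sketch does not yet quantify the congestion incurred when ``activating'' isolated ones; this is exactly the kind of estimate that can fail or degrade by a large $N$-dependent factor if the path scheme is not chosen carefully. Third—and this is the real obstruction you yourself flag—there is no Aldous-type identity available for this gate set, so propagating a uniform $\Omega(1/N)$ gap across all blocks $R_2,\dots,R_m$ (or through a tagged-particle induction in $m$) has no off-the-shelf mechanism; the paper's comparison bound is $m$-dependent and $N^5$, illustrating how far standard techniques currently fall short of the conjectured $\Theta(N)$. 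In short, your proposal is a reasonable research plan rather than a proof, and the missing idea is precisely the one the paper also lacks: a uniform-in-$m$ spectral-gap estimate for this non-interchange process.
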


We cannot rule out the possibility of further changes in the relaxation times $t_{\rm rel}^{(m)}$ at larger values of $m$ beyond those accessible in numerics. In Appendix~\ref{app:local_relaxation} we show an unconditional upper bound 
\begin{equation}
    t_{\rm rel}^{(m)} \leq O(mN^5), \label{eq:trel_unconditional_upperbound}
\end{equation}
which can be tightened to $t_{\rm rel}^{(m)} \leq \tilde{O}(mN)$ using recent results by He and O'Donnell~\cite{he_pseudorandom_2024}.
Yet more recent results by Chen {\it et al.}~\cite{chen_incompressibility_2024} rigorously prove a result equivalent to our Conjecture~\ref{conjecture:relaxation} in models of 3-local, but not geometrically local, random reversible circuits. Adapting their result to the present setting, and thus proving our Conjecture~\ref{conjecture:relaxation}, is as an interesting direction for future work.

\tocless\subsubsection{Higher moments: Mixing time}
\label{sec:local_gate_mixing}

\begin{figure}
    \centering
    \includegraphics[width=0.9\columnwidth]{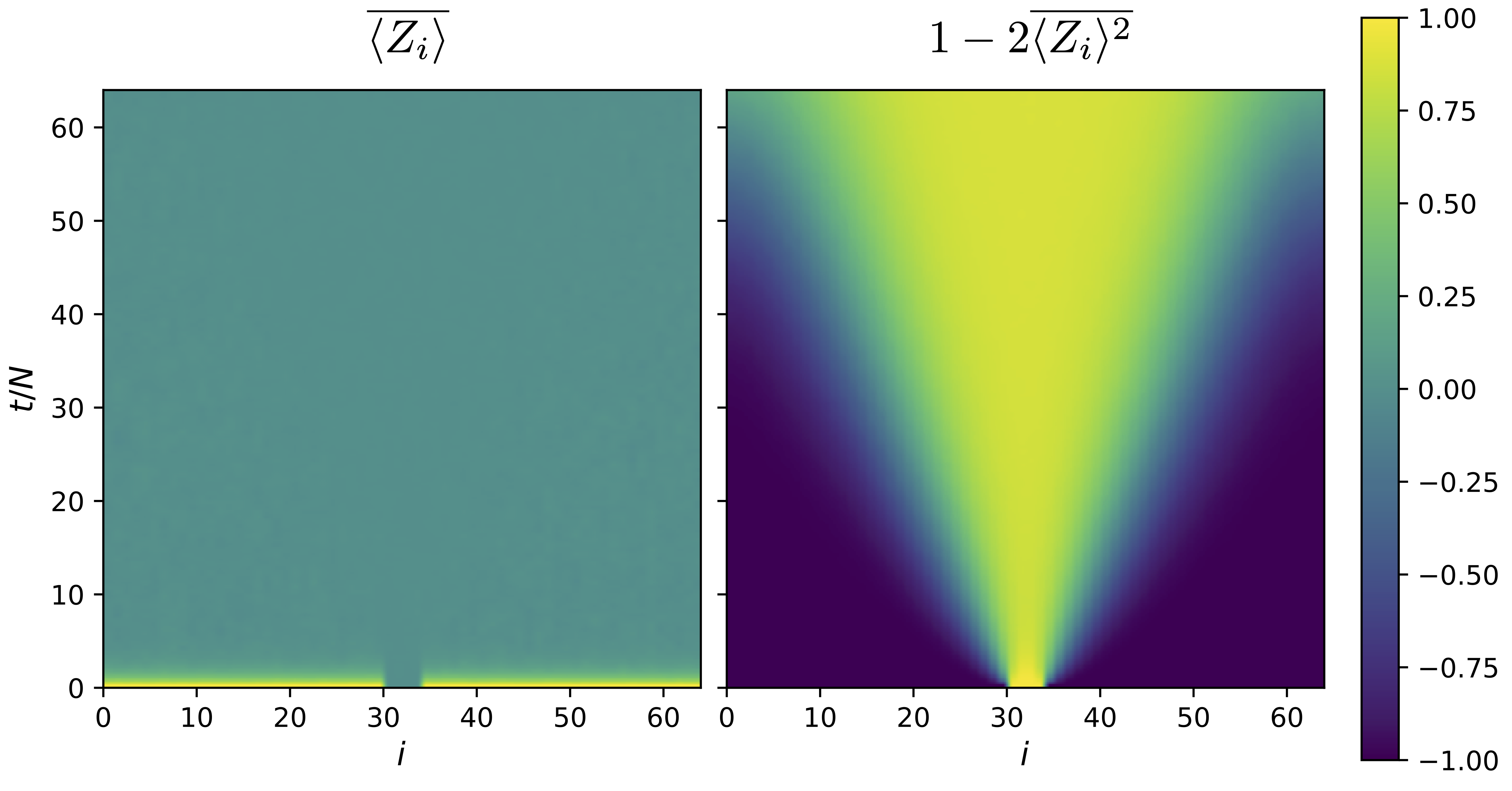}
    \caption{Numerical simulation of local random automaton circuits on a chain of $N = 64$ qubits, initialized in the product state $\ket{0}^{\otimes N-N_A} \otimes \ket{+}^{\otimes N_A}$ with $N_A = 4$ (we shift the $\ket{+}$ qubits to the center of the plot for clarity). Data averaged over $10^4$ random circuit realizations.
    Left: local expectation values averaged over circuit realizations, $\overline{\langle Z_i\rangle}$, can relax under local bit flips and thus equilibrate quickly. 
    Right: higher-moment quantities, such as $\overline{\langle Z_i\rangle^2}$ (related to the variance of $\langle Z_i\rangle$ across circuit realizations), require the propagation of coherence across the system. Their equilibration is thus constrained by a light cone, $t/N \geq \Omega(N)$. Note $t$ is the total number of applied gates.}
    \label{fig:lightcone}
\end{figure}

Assuming the validity of Conjecture~\ref{conjecture:relaxation}, $t_{\rm rel}^{(m)} = \Theta(N)$, we move on to the question of mixing times. A general two-sided bound on the mixing time in terms of the relaxation time is given by~\cite{jerison_general_2013,levin_markov_2017}
\begin{equation}
    (t^{(m)}_{\rm rel}-1)\log(\frac{1}{2\epsilon}) \leq t_{\rm mix}^{(m)}(\epsilon) \leq t_{\rm rel}^{(m)} \log (|\Sigma_m| / \epsilon), \label{eq:mix_eigenbound}
\end{equation}
where $\epsilon$ is the accuracy threshold on total variation distance used to define mixing; see Sec.~\ref{sec:review_cutoff}. We take $\epsilon = 1/4$ in the following. 
Plugging Conjecture~\ref{conjecture:relaxation} into Eq.~\eqref{eq:mix_eigenbound} yields the asymptotic bounds
\begin{equation}
    \Omega(N) \leq t_{\rm mix}^{(m)} \leq O(mN^2),
    \label{eq:tmix_bounds_conditional}
\end{equation}
where we use $\log |\Sigma_m| = \log \binom{D}{m} \leq mN$.

The lower bound Eq.~\eqref{eq:mix_eigenbound} clearly cannot be used to gap $t_{\rm mix}^{(m)}$ away from $t_{\rm rel}^{(m)}$ (necessary condition for a cut-off phenomenon). Therefore, to study the possible presence of a cut-off we must find stronger lower bounds on the mixing time. 
We show two such bounds next.

\begin{fact}[counting bound] \label{fact:countingbound}
    For all $m$, the mixing time of the Markov chain induced by the local gate set $\mathcal{G}$ obeys
    \begin{equation}
        t_{\rm mix}^{(m)} \geq \Omega(mN/\log N). \label{eq:tmix_counting_lowerbound}
    \end{equation}
\end{fact}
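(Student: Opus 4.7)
The plan is to apply the standard counting (or ``volume'') lower bound on the mixing time. The key observation is that the local gate set has size $|\mathcal{G}| = 4N$ (or $4N + 1$ including the identity, if needed for aperiodicity), which is polynomial in $N$, while the state space $\Sigma_m$ is doubly exponentially large. Thus there is simply not enough ``branching'' per step to spread a point mass over the full space in less than logarithmically many steps (in $|\Sigma_m|$).

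Concretely, I would start by fixing an arbitrary initial subset $S_0 \in \Sigma_m$ and taking $p_0 = \delta_{S_0}$; the mixing time is defined as the worst case over initial states, so any such choice yields a valid lower bound. After $t$ time steps, the support of $p_t$ is contained in the set of subsets reachable from $S_0$ by at most $t$ gate applications, and hence satisfies $|\mathrm{supp}(p_t)| \leq |\mathcal{G}|^t = (4N)^t$. Writing $A_t \equiv \mathrm{supp}(p_t)$ and using $p_t(A_t) = 1$, the total variation distance to uniform obeys
\begin{equation}
\| p_t - \pi_m \|_{\rm tv} \;\geq\; p_t(A_t) - \pi_m(A_t) \;=\; 1 - \frac{|A_t|}{|\Sigma_m|} \;\geq\; 1 - \frac{(4N)^t}{|\Sigma_m|}.
\end{equation}

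Next, I would demand that this distance be at most $\epsilon = 1/4$ (the conventional threshold in the definition of $t_{\rm mix}$), yielding $(4N)^{t_{\rm mix}^{(m)}} \geq (3/4)|\Sigma_m|$, so
\begin{equation}
t_{\rm mix}^{(m)} \;\geq\; \frac{\log|\Sigma_m| - O(1)}{\log(4N)}.
\end{equation}
Finally, I would estimate the size of the state space using the elementary bound $|\Sigma_m| = \binom{2^N}{m} \geq (2^N/m)^m$, which gives $\log|\Sigma_m| \geq m(N \log 2 - \log m)$. Provided $\log m = o(N)$ (which is trivially satisfied in the physically relevant regime $m \leq \mathrm{poly}(N)$ dictated by the pseudoentanglement definition), this yields $\log|\Sigma_m| = \Omega(mN)$. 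Dividing by $\log(4N) = \Theta(\log N)$ produces the claimed lower bound $t_{\rm mix}^{(m)} \geq \Omega(mN/\log N)$.

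There is really no serious obstacle here; this is a ``soft'' information-theoretic bound that uses nothing about the structure of $\mathcal{G}$ beyond its cardinality. The only minor subtlety is the regime of validity in $m$: for $m$ exponentially close to $2^N$ the estimate $\log \binom{2^N}{m} = \Omega(mN)$ fails (indeed $\Sigma_m$ becomes small by the symmetry $m \leftrightarrow 2^N - m$), but this regime lies far outside the pseudoentanglement setting of interest. Because the argument ignores geometric locality entirely, it is complementary to the light-cone-based lower bound \eqref{eq:tmix_locality_lowerbound}, and one should expect the true mixing time to be at least the maximum of the two.
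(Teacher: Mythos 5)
Your proof is correct and is essentially the same counting/volume argument the paper uses: bound the support of $p_t$ by $(4N)^t$, compare to $|\Sigma_m| = \binom{2^N}{m}$, and conclude $t_{\rm mix}^{(m)} \gtrsim \log|\Sigma_m|/\log N = \Omega(mN/\log N)$. Your version is marginally more careful in spelling out the total-variation lower bound and the regime $\log m = o(N)$ where $\log\binom{2^N}{m} = \Omega(mN)$ holds, but these are details the paper implicitly assumes; there is no substantive difference.
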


\begin{proof}
    Let us count the number of states that can be reached from an initial state $S_0 \in \Sigma_m$ in $t$ time steps. As each state is connected to at most $|\mathcal G| = 4N$ different states, the volume of a ball of radius $t$ in $\Sigma_m$ (viewed here as a graph with edges given by the action of $\mathcal{G}$) is at most $(4N)^t$. To achieve mixing, the distribution $p_t$ needs to be supported on, say, at least $1/2$ of the elements of $\Sigma_m$. Thus we must have $(4N)^t \geq |\Sigma_m| / 2$, giving $t\log(4N) \geq \log \binom{D}{m} - \log 2\geq \Omega(mN)$.
\end{proof}

\begin{fact}[locality bound] \label{fact:localitybound}
    For all $m\geq 2$, the mixing time of the Markov chain induced by the local gate set $\mathcal{G}$ on a 1D qubit chain obeys
    \begin{equation}
        t_{\rm mix}^{(m)} \geq \Omega(N^2). \label{eq:tmix_locality_lowerbound}
    \end{equation}
\end{fact}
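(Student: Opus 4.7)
The plan is to exploit the strict spatial locality of the three-site gates $u_{iab}$: starting from a subset $S_0$ whose elements agree on almost every site, information about the initial condition can spread along the 1D chain only at a rate $O(1/N)$ per step, so reaching a configuration indistinguishable from the delocalized uniform distribution $\pi_m$ must take $\Omega(N^2)$ steps.

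I would treat $m=2$ first. Pick $S_0 = \{\mathbf z^{(1)}_0, \mathbf z^{(2)}_0\}$ with relative coordinate $\mathbf r_0 = \mathbf z^{(1)}_0 \oplus \mathbf z^{(2)}_0 = \mathbf e_1$, and track the \emph{support interval} $I_t$, defined as the smallest periodic arc in $\mathbb{Z}_N$ containing $\mathrm{supp}(\mathbf r_t)$. The key locality lemma, proved by direct case analysis of $u_{iab}$ acting on the pair $(\mathbf z^{(1)}, \mathbf z^{(2)})$, says that $u_{iab}$ changes $\mathbf r$ at site $i$ only if $(z^{(1)}_{i-1}, z^{(1)}_{i+1}) \neq (z^{(2)}_{i-1}, z^{(2)}_{i+1})$, i.e.\ only if at least one of sites $i \pm 1$ lies in $\mathrm{supp}(\mathbf r)$. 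Consequently $|I_{t+1}| \leq |I_t| + 1$, and extension can occur only when the gate acts at one of the two sites adjacent to $I_t$. Enumerating the allowed control strings at those two sites (using that the pair agrees at the sites just outside $I_t$ and disagrees at the endpoints) shows that at most four of the $4N$ gates in $\mathcal G$ induce an extension, giving $\mathbb P[|I_{t+1}| > |I_t| \mid S_t] \leq 1/N$.

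A standard concentration argument---stochastic domination of the extension count by a $\mathrm{Binomial}(t, 1/N)$ followed by Chernoff---then yields $|I_t| \leq O(t/N)$ with probability $1-o(1)$ whenever $t \leq cN^2$ with $c$ a sufficiently small constant. On the other hand, under $\pi_2$ the relative coordinate $\mathbf r$ is uniform on $\{0,1\}^N \setminus \{\mathbf 0\}$, so the standard estimate on the longest all-zero run in a uniform binary string gives $|I(S)| \geq N - O(\log N)$ with high probability. Taking $f(S) = \mathbf{1}[|I(S)| \leq N/4]$ as a test function then yields $\mathbb E_{p_t}[f] \geq 1-o(1)$ while $\mathbb E_{\pi_2}[f] = o(1)$, so $\|p_t - \pi_2\|_{\rm tv} \geq 1-o(1)$ and $t^{(2)}_{\rm mix} \geq \Omega(N^2)$.

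For general $m \geq 2$ I would replace $\mathrm{supp}(\mathbf r)$ with the \emph{aggregate support} $A(S) = \{j : \{z^{(i)}_j\}_{i=1}^m \text{ is not constant}\}$ and let $I(S)$ be its smallest enclosing arc. Initialize $S_0$ as any $m$ distinct bitstrings agreeing outside an interval of length $\lceil \log_2 m \rceil$, so $|I(S_0)| = O(\log m)$. The locality argument extends directly: outside $A(S)$ all $m$ bitstrings coincide, so only a gate at a site adjacent to $I$ can make the projected configurations $\{(z^{(i)}_{j-1}, z^{(i)}_{j+1})\}_i$ non-constant, and at each such site at most four choices of $(a,b)$ can do so, for a bound of at most eight expansion-inducing gates independent of $m$ and expansion probability $O(1/N)$. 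The same concentration plus test-function argument (using that under $\pi_m$ with $m \geq 2$ the longest run of sites on which all $m$ bitstrings agree is still $O(\log N)$) gives $t^{(m)}_{\rm mix} \geq \Omega(N^2)$ whenever $\log m = o(N)$; for the remaining regime $m \geq 2^{\Omega(N)}$, Fact~\ref{fact:countingbound} already yields a much stronger bound. The main technical subtlety lies in the locality lemma for $m>2$: one must verify that correlations among the $m$ bitstrings cannot conspire to let $A(S)$ grow by more than one site in a single step or to raise the per-step expansion probability beyond $O(1/N)$. Once this $m$-independent bound is established, the martingale/Chernoff step and the comparison with the typical behavior of $\pi_m$ are routine.
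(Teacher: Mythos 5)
Your argument is correct and takes essentially the same route as the paper: track a ``light cone'' that contains all sites where the bitstrings in $S_t$ are not all equal, show it can grow by at most one site per step and does so with probability $O(1/N)$, conclude that for $t=o(N^2)$ the aggregate support is confined to an $o(N)$-sized arc, and distinguish $p_t$ from $\pi_m$ by the observation that a uniformly random $m$-subset has aggregate support spanning nearly the whole ring. The paper's proof is a short sketch that tracks the light cone through the fact that local automaton gates cannot create superpositions outside it, bounds the expected front position by $\ell(0)+2t/N$, and invokes Markov's inequality (``with finite probability''); you instead make the combinatorics explicit (aggregate support $A(S)$, the observation that only $\leq 4$ of the $4N$ gates can extend the arc in any step, Chernoff concentration, and an indicator test function), and you explicitly handle the implicit restriction to $\log m=o(N)$ by noting the counting bound subsumes the remaining range. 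These are details the paper leaves to the reader; your per-step expansion bound ($1/N$ vs.\ the paper's $2/N$) is even slightly tighter. The one place to be careful in a full write-up is the ``at most eight expansion-inducing gates'' count for general $m$: since the sites just outside the arc carry a common value, only $2$ control choices per boundary site (hence $4$ gates total) can actually split the ensemble, but your looser constant is harmless.
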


\begin{proof}
It is easy to see that, starting from a subset $S_0$ where all bitstrings agree outside the subsystem $A$ ($z_j = 0$ for all sites $j=1, \dots N-N_A$, for all bitstrings $\mathbf z\in S_0$), the region outside a ``light cone'' emanating from $A$ remains in a computational basis state: local automaton gates cannot create superpositions from such states, so coherence must spread outward from $A$ one gate at a time. 
As the gate sequence is random, the light cone fluctuates instance by instance, however its width $\ell(t)$ is bounded above\footnote{
We can bound $\ell(t)$ above by noting that $\ell(t+1) = \ell(t) + 1$ with probabiliy at most $2/N$ (the $\CCX$ gate must have the target qubit $i$ at either light cone front), and $\ell(t+1)\leq \ell(t)$ otherwise. Then $\overline{\delta\ell} \leq 2/N$ and thus $\ell(t) \leq \ell(0) + 2t/N$ with finite probability. This gives an upper bound on the butterfly velocity $v \leq 1$. 
} 
by $\ell(t) \leq  N_A+2vt/N$, for some constant `butterfly velocity' $v$, with finite probability. 
Whenever $\ell(t) < N$, the distribution $p_t(S)$ is supported only on a small fraction of the state space $\Sigma_m$ (all subsets $S$ that correspond to a product state outside the light cone), and thus must be far from uniform\footnote{
Another way to argue this is by the distinguishing statistics method in Ref.~\cite{levin_markov_2017} applied to, e.g., $\overline{\langle Z_i\rangle^2}$ for a qubit $i$ far from $A$.
}. Equilibrium can only be reached after the light cone has spread across the whole system, giving $\ell(t) \geq \Omega(N)$ or $t\geq \Omega(N^2)$. 
\end{proof}

We note that, while the counting bound (Eq~\ref{eq:tmix_counting_lowerbound}) applies to all $m$, the locality bound (Eq.~\ref{eq:tmix_locality_lowerbound}) does not apply to $m = 1$; indeed we have $t_{\rm mix}^{(1)} \sim N \log N$, Eq.~\eqref{eq:trel_tmix_RW_local}. This is because ``subsets'' with $m = 1$ are just individual bitstrings, so equilibration does not require coherence (i.e. superpositions) to spread across the system. This physical difference in the effects of locality is illustrated in Fig.~\ref{fig:lightcone}. 
We show results of numerical simulations for circuit-averaged quantities $\overline{\langle Z_i\rangle}$ (a first-moment quantity that can equilibrate under local bitflips) in Fig.~\ref{fig:lightcone}(a) and $\overline{\langle Z_i \rangle^2}$ (a second-moment quantity that requires superpositions to equilibrate) in Fig.~\ref{fig:lightcone}(b). The overline denotes averaging over instances of the random circuit.
``Time'' $t$ here is measured as total number of gates in the circuit, so $t\sim N^2$ here denotes ballistic, not diffusive, spreading of correlations. 

Putting Facts~\ref{fact:countingbound} and \ref{fact:localitybound} together with the upper bound Eq.~\eqref{eq:tmix_bounds_conditional}, we arrive at\footnote{
%%% footnote
Note that the mixing time $t_{\rm mix}^{(m)}$ is conventionally based on a worst-case {\it pure} initial state, $p_0(S) = \delta_{S,S_0}$; in the present context, the initial state under consideration is already partially mixed, $p_0(S) = {\rm const.}>0$ if $S$ is made of bitstrings that are identically 0 outside subsystem $A$, $p_0(S) = 0$ otherwise (describing the initial random subset-phase ensemble on $A$, see also Appendix~\ref{app:phi_maps}). 
The upper bound $\pseudothermtime{m} \leq O(mN^2)$ holds {\it a fortiori} for our mixed initial state; the locality lower bound $\pseudothermtime{m} \geq \Omega(N^2)$ and the counting lower bound $\pseudothermtime{m} \geq \Omega(mN/\log N)$ also apply without changes as long as $N_A = o(N)$.
}
\begin{equation}
    \Omega(N^2,mN/\log(N)) \leq t_{\rm mix}^{(m)} \leq O(mN^2).
    \label{eq:localmodel_bounds}
\end{equation}
These bounds are summarized in Fig.~\ref{fig:local}(b). 
We remark again that the upper bound Eq.~\eqref{eq:tmix_bounds_conditional} assumes Conjecture~\ref{conjecture:relaxation}; an unconditional bound $t_{\rm mix}^{(m)} \leq O(m^2 N^6)$ can be derived based on Eq.~\eqref{eq:trel_unconditional_upperbound} (see Appendix~\ref{app:local_relaxation}) and can be further tightened to $t_{\rm mix}^{(m)}\leq \tilde{O}(m^2N^2)$ using recent results~\cite{he_pseudorandom_2024}.  

Several interesting consequences follow form Eq.~\eqref{eq:localmodel_bounds}.
First of all, independently of $m$, we have
\begin{equation}
    \lim_{N\to\infty} t_{\rm mix}^{(m)} / t_{\rm rel}^{(m)} = \infty, \label{eq:local_divergentratio}
\end{equation}
which is a necessary condition for a cut-off, see Sec.~\ref{sec:review_cutoff}.
We thus conjecture the presence of asymptotically sharp distinguishability transitions in this model. 
While numerical simulations are strongly limited by the explosive growth of the state space $|\Sigma_m| = \binom{2^N}{m}$, results of exact simulations of $N = 5$ qubits, shown in Fig.~\ref{fig:mixing_vs_time}, are consistent with the emergence of a cutoff---we clearly see the development of a plateau, followed by a decay $\sim \lambda_1^{t-\delta t}$ of the total variation distance, with a temporal offset $\delta t\sim m$ in the accessible range $1\leq m \leq 8$ (for $N=5$, $m=8$ we have $|\Sigma_8| = \binom{32}{8} \gtrsim 10^7$ states).

\begin{figure}
    \centering
    \includegraphics[width=0.5\columnwidth]{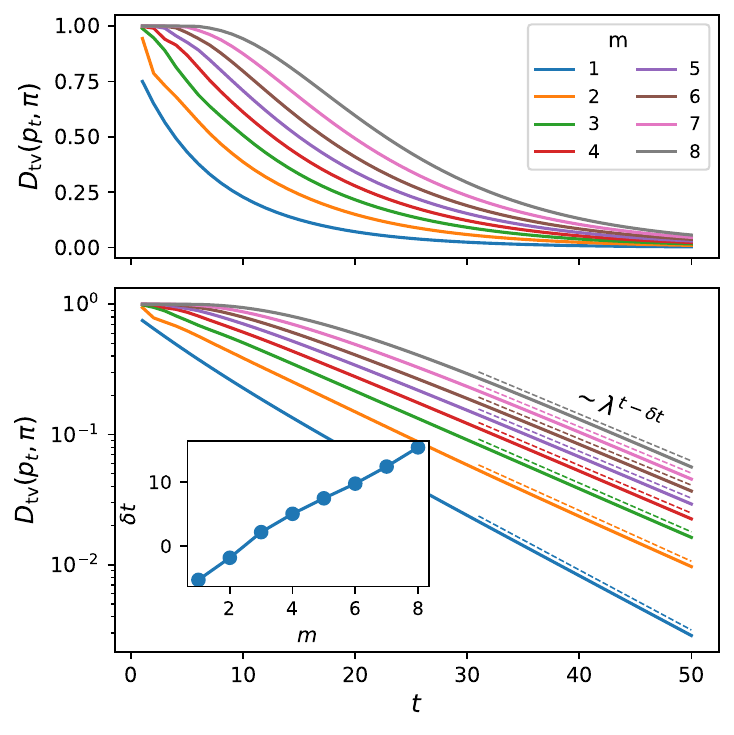}
    \caption{Exact numerical simulation of the Markov chain induced by the local automaton gate set (Sec.~\ref{sec:local-circuit}) on $N=5$ qubits for subset sizes $m = 1,\dots 8$. The initial quantum state is taken to be $\ket{00+++}$, a subset state with $K=8$. 
    The total variation distance from uniformity decays as $D_{\rm tv}(p_t, \pi_m) \sim \lambda^{t-\delta t}$ at late times (bottom panel, dashed lines). Fit results for $\lambda$ (not shown) are consistent with the results of Fig.~\ref{fig:spectra}, with $\lambda \simeq 0.90$ for $m=1$ and $\lambda\simeq 0.92$ for all $m\geq 2$; the temporal offset $\delta t$, a proxy for the mixing time (up to additive constants), is found to scale linearly in $m$ (bottom panel, inset).}
    \label{fig:mixing_vs_time}
\end{figure}

We can then break down the behavior of different moments $m$, based on their scaling with $N$, to obtain an interesting (though still incomplete) picture of the dynamics:
\begin{enumerate}[label = (\roman*)]
    \item For constant $m$, we have $t_{\rm mix}^{(m)} = \Theta(N^2)$, and thus the indistinguishability time
        \begin{equation}
            \pseudothermtime{m} \sim N^2 [ 1 \pm O(1/N)],
        \end{equation}
    with the $\pm$ term denoting the width of the time window in which the transition to equilibrium happens. 
    This suggests a picture based on a `pseudoentanglement front' spreading ballistically through the system, with fast equilibration behind the front.
    In this picture, $m$-copy indistinguishability arises suddenly, and simultaneously for all constant $m$, once the front has swept across the system. 
    \item In the intermediate range $\omega(1) \leq m\leq \tilde{O}(N)$, our bounds are insufficient to pin down the scaling of $\pseudothermtime{m}$; we have
        \begin{equation}
            \pseudothermtime{m} \sim N^2[f(m)\pm O(1/N)],
        \end{equation}
    for some function $\Omega(1)\leq f(m)\leq O(m)$. Depending on the scaling of $f(m)$, it is possible that the `pseudoentanglement front' picture ceases to apply, with a parametrically slower equilibration behind the ballistic front and a separation of consecutive indistinguishability transitions $\pseudothermtime{m}$, $\pseudothermtime{m+1}$. 
    \item At even higher moments $\tilde{\omega}(N)\leq m \leq O(\poly{N})$ (super-linear, but still polynomial in $N$), the counting bound becomes dominant over the locality bound, so the `front' picture is not relevant anymore. We have
        \begin{equation}
            \pseudothermtime{m} \sim N[m g(N)\pm O(1)], \label{eq:tptherm_pt3}
        \end{equation}
    for some function $\tilde{\Omega}(1) \leq g(N) \leq O(N)$. Depending on the scaling of $g(N)$, consecutive indistinguishability times $\pseudothermtime{m}$, $\pseudothermtime{m+1}$ may be sharply separated, or the two crossover windows may overlap. These two options are sketched in Fig.~\ref{fig:local}(c).
\end{enumerate}

\tocless\subsubsection{Higher dimension}

The analysis above could be straightforwardly adapted to higher-dimensional qubit arrays. For example one can choose a gate set $\mathcal{G} = \{u_{i,\hat{v},\hat{w},a,b}\}$, with $i\in[N]$ a site, $\hat{v},\hat{w}$ two different directions on the lattice, and $a,b \in \{0,1\}$ control bits, where the gate $u$ flips qubit $i$ iff its neighbors $i+\hat{v}$, $i+\hat{w}$ are in states $a,b$ respectively. $\mathcal{G}$ contaits $4\binom{2d}{2} N$ gates, with $d$ the spatial dimension. In $d=1$ one recovers the gate set used earlier (in that case the choice of directions $\hat{v}=+\hat{x}$, $\hat{w}=-\hat{x}$ is unique). 

For constant $d$, all the results of this section proceed unchanged, except for the locality bound, Eq.~\eqref{eq:tmix_locality_lowerbound}; there, we must replace Eq.~\eqref{eq:tmix_locality_lowerbound} with $t_{\rm mix}^{(m)} \geq \Omega(N^{1+1/d})$, with $d$ the spatial dimension. This still meets the requirement for a cut-off, Eq.~\eqref{eq:local_divergentratio}, but is not enough to pin down the scaling of indistinguishability time at constant $m$: we have instead $\Omega(N^{1+1/d}) \leq t_{\rm mix}^{(m)} \leq O(N^2)$. 

Qualitatively, this opens the question of whether, in dimension $d\geq 2$, the `pseudoentanglement front' picture still holds; i.e., if the state behind the light cone equilibrates quickly (which would give $t_{\rm mix}^{(m)} \sim N^{1+1/d}$), or whether it equilibrates more slowly, in such a way that even after the light cone has swept through the whole system it is still possible to distinguish the states from Haar-random ones with a constant number of copies $m$. 
We leave this question for future work.

%%%%%%%%%%%%%%
% GENERATION %
%%%%%%%%%%%%%%

\section{Generation of Pseudoentanglement}
\label{sec:generation}

\begin{figure}
    \centering
    \includegraphics[width=0.99\columnwidth]{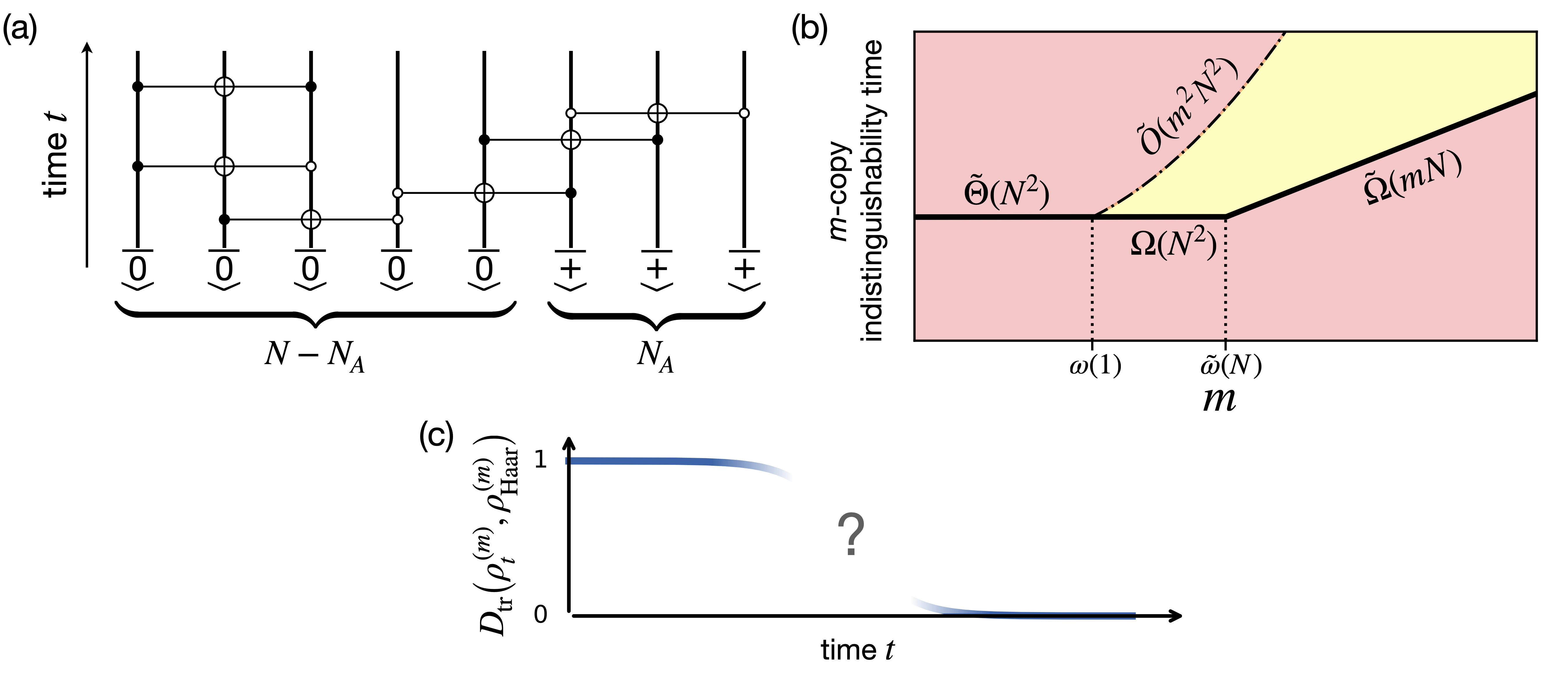}
    \caption{Local automaton circuit model for the generation of pseudoentanglement from product states.
    (a) Circuit model: the same as Fig.~\ref{fig:local}(a) but with a product initial state $\ket{0}^{\otimes N-N_A}\otimes \ket{+}^{\otimes N_A}$ (no random phases). 
    (b) Schematic summary of bounds for the $m$-copy indistinguishability time. Solid lines indicate unconditional bounds while the dot-dashed line depends on Conjecture~\ref{conjecture:relaxation}. Results are the same as in Fig.~\ref{fig:local}(b) but with a modified upper bound $\tilde{O}(m^2 N^2)$, Eq.~\eqref{eq:tptherm_upperbound}. 
    (c) Given the more complex mapping between quantum ensemble distinguishability and the mixing of classical Markov chains, Eq.~(\ref{eq:subsetstate_mapping1}-\ref{eq:subsetstate_mapping2}), the precise nature of ensemble pseudothermalization (e.g., whether there is a cut-off) remains unclear in this case. 
    }
    \label{fig:generation}
\end{figure}

Thus far we have discussed the propagation of pseudoentanglement: we have taken initial states that already contain random subset-phase ensembles on subsystems, and studied how these turn into random subset-phase states on the whole system under automaton circuit dynamics. This approach was motivated partly by analogy with the problem of entanglement spreading in quantum quenches, and partly by technical simplifications that arise when random phases are already provided by the input states.
In this section, we address the question of generating pseudoentanglement from initial product states. We use only the local gate set of Sec.~\ref{sec:local-circuit}.
To avoid technical complications associated with random phases, we choose to focus on the {\it random subset states} introduced in Refs.~\cite{giurgica-tiron_pseudorandomness_2023,jeronimo_pseudorandom_2024} and reviewed in Sec.~\ref{sec:review_pe}. 
We remark that this is a choice based on technical convenience. In principle an automaton circuit interspersed by phase gates will be able to produce random subset-phase states when acting on a trivial subset input states; however, we lack a statement like Propositions~\ref{proposition:subsetphase} (uniformly random phase functions) and \ref{proposition:subset} (constant phase functions) for the generic distribution of phase functions that would arise at intermediate times during that type of dynamics. However we note that the dynamics of subsets $S$ is independent of the phase functions $f$ in these automaton models, so convergence to random subset-phase states will be no faster than convergence to random subset states in the absence of phases. Thus it is reasonable to focus on random subset states in the following, while leaving the exploration of subset-phase state dynamics to future work.

Let us take an initial subset state 
\begin{equation}
    \ket{\psi_{S_0}} = \ket{0}^{\otimes N-N_A} \otimes \ket{+}^{\otimes N_A},
\end{equation}
corresponding to subset $S_0 = \{\mathbf z:\ z_1 = \dots = z_{N-N_A} = 0\}$ of cardinality $K = 2^{N_A}$, shown by Fig.~\ref{fig:generation}(a). 
To be pseudoentangled according to our definition (Sec.~\ref{sec:review_pe}), random subset ensembles need to have $\omega(\poly{N}) \leq K \leq 2^{o(N)}$, therefore we take the number of qubits initially in superposition, $k$, to be $\omega(\log N) \leq N_A < o(N)$.
After $t$ time steps, we have a subset state $\ket{\psi_{S_t}}$ where $S_t$ is a random variable on $\Sigma_K$ distributed according to a time-evolved distribution $p_t$.
To characterize the generation of pseudoentanglement, we look at the $m$-copy distinguishability from the Haar ensemble, i.e. the trace distance $D_{\rm tr}(\rho_{t}^{(m)}, \rho_{{\rm Haar}}^{(m)})$, with $\rho_t^{(m)} = \sum_S p_t(S)\ket{\psi_S}\bra{\psi_S}^{\otimes m}$.
As long as $m \leq O(\poly{N})$, the application of Proposition~\ref{proposition:subset} yields
\begin{align}
    D_{\rm tr}\left( \rho_{t}^{(m)}, \rho_{{\rm Haar}}^{(m)} \right)
    & \simeq  \frac{1}{2}  \left\| \mathcal{M}^{(m)}_{p_t} \right\|_{\rm tr},  \label{eq:subsetstate_mapping1} \\
    (\mathcal{M}^{(m)}_{p_t})_{S',S''} & = \binom{K}{m}^{-1} \binom{K}{m+\delta} f^{(m+\delta)}_{p_t} (S'\cup S''),    \label{eq:subsetstate_mapping2}
\end{align}
where $\delta \equiv |S'\setminus S''|$ is the difference between subsets $S',S''\in\Sigma_m$ (so $S'\cup S''\in \Sigma_{m+\delta}$), and 
\begin{equation}
    f^{(m+\delta)}_{p_t} = \Phi_{m+\delta \leftarrow K}[p_t] - \pi_{m+\delta}
    \label{eq:f_mdelta_def}
\end{equation}
is the deviation from uniformity of a Markov chain on $\Sigma_{m+\delta}$ induced by the random automaton circuit, as in Sec.~\ref{sec:spreading_pe}. 

In this case, the ensemble pseudothermalization problem is not directly mapped onto the mixing of a single classical random walk.
Rather, it is mapped 
in a nontrivial way, via Eq.~\eqref{eq:subsetstate_mapping2}, to the mixing of $m+1$ distinct random walks, indexed by $\delta = 0, 1, \dots m$, and distributed according to $\Phi_{m+\delta\leftarrow K}[p_t]$, defined by Eq.~\eqref{eq:phi_channel_def}. 

More specifically, the diagonal terms in $\mathcal{M}^{(m)}_{p_t}$ correspond to $\delta = 0$, i.e., to the walk on $\Sigma_m$ distributed as $\Phi_{m\leftarrow K}[p_t]$. This is the same walk that controlled the spreading of pseudoentanglement in Sec.~\ref{sec:spreading_pe}. 
However, the off-diagonal terms, corresponding to $\delta > 0$, make this problem more complex. 
While we do not have an exact solution in this case, one can still obtain nontrivial bounds on the indistinguishability time $\pseudothermtime{m}$, as we discuss next.

First, a lower bound on $\pseudothermtime{m}$ is obtained by using the monotonicity of trace norm under quantum channels: $\|\mathcal{D}(\mathcal{M})\|_{\rm tr}\leq \|\mathcal{M}\|_{\rm tr}$, where $\mathcal D$ is a dephasing super-operator that annihilates off-diagonal matrix elements: $[\mathcal D (\mathcal{M})]_{S',S''} = \delta_{S',S''} \mathcal{M}_{S',S'}$. This gives
\begin{equation}
    D_{\rm tr} \left( \rho_{t}^{(m)}, \rho_{{\rm Haar}}^{(m)} \right)
    \geq D_{\rm tv} \left( \Phi_{m\leftarrow K}[p_t], \pi_m \right) \label{eq:distance_lowerbound_diagonal}
\end{equation}
(up to negligible terms $O(m/\sqrt{K})$), which in turn implies
\begin{equation}
    \pseudothermtime{m} \geq t_{\rm mix}^{(m)}.
\end{equation}
Here $t_{\rm mix}^{(m)}$ is the mixing time of the Markov chain $\Phi_{m\leftarrow K}[p_t]$, studied in Sec.~\ref{sec:spreading_pe}; the lower bounds we derived there thus apply directly to the $m$-copy indistinguishability time in the present problem (in one spatial dimension):
\begin{equation}
    \pseudothermtime{m} \geq \Omega(N^2, mN/\log(N)).
    \label{eq:tptherm_lowerbound}
\end{equation}
Furthermore, since the matrix $\mathcal{M}^{(m)}_{p_t}$ that controls distinguishability, Eq.~\eqref{eq:subsetstate_mapping2}, consists of differences $\Phi_{m+\delta\leftarrow K}[p_t] - \pi_{m+\delta}$, we can guarantee ensemble pseudothermalization once all the walks ($\delta = 0,1,\dots m$) have come sufficiently close to equilibrium. 
By properly taking into account the multiplicity of each $\delta$, 
we derive the following upper bound on the $m$-copy indistinguishability time (see Appendix~\ref{app:subset_ptherm_bound}):
\begin{equation}
    \pseudothermtime{m} \leq O(\log(K) m^2 N^2) \leq \tilde{O}(m^2N^2),
    \label{eq:tptherm_upperbound}
\end{equation}
where the latter inequality holds if we take $K$ to be quasipolynomial, i.e., $\log K = {\rm polylog}(N)$.
Again, this is conditional on Conjecture~\ref{conjecture:relaxation}. 

Based on these bounds, summarized in Fig.~\ref{fig:generation}(b), we can conclude that
\begin{enumerate}[label = (\roman*)]
    \item at early times $t \leq \tilde{O}({\rm max}(N^2, mN))$, the system {\it can} be efficiently distinguished from Haar-random at the $m$-copy level. 
    \item at late times $t\geq \tilde{\Omega}(m^2N^2)$, the system {\it cannot} be efficiently distinguished from Haar-random at the $m$-copy level. 
\end{enumerate}
The nature of the crossover between these two regimes is an interesting open question, see Fig.~\ref{fig:generation}(c).
Under Conjecture~\ref{conjecture:relaxation}, we have $t^{(m)}_{\rm mix}/t^{(m)}_{\rm rel} \to \infty$ in this limit for all $m$, suggestive of a cutoff phenomenon in the associated Markov chains $\Phi_{m\leftarrow K}[p_t]$.  
How would such cutoff phenomena be reflected in the quantum distinguishability measure Eq.~\eqref{eq:subsetstate_mapping1}? Would the latter undergo an abrupt transition, would it decay gradually, or would it exhibit more complex behavior (e.g. a sequence of pre-cutoffs associated to each $\delta$)?
We leave this interesting question as a goal for future work.

%%%%%%%%%%%%%%
% DISCUSSION %
%%%%%%%%%%%%%%

\section{Discussion} 
\label{sec:discussion}

\subsection{Summary}

In this work we have examined the role of entanglement in the thermalization of isolated quantum many-body systems from a new perspective. 
We have introduced models of quantum dynamics that at late times reproduce all efficiently observable predictions of infinite-temperature quantum equilibrium, while only generating a relatively small amount of entanglement. 
This is contrary to the expectation that high-temperature equilibrium should be described by nearly-maximally-entangled random states, following the general ``maximum entropy principle'' that underpins thermodynamics. 
This counterintuitive result is made possible by the existence of pseudoentanglement, i.e., the ability of certain quantum state ensembles to securely hide their true entanglement structure from observers with limited resources. 

We have constructed random circuit models that provably equilibrate to ensembles of pseudoentangled states at late times, a phenomenon we have named ``ensemble pseudothermalization''. To understand this process quantitatively, we have studied the dynamical generation and propagation of pseudoentanglement in these circuit models through a mix of exact results, rigorous bounds, and numerical simulations. 
More specifically, our models are based on random reversible classical circuits that preserve certain types of states (subset-phase states~\cite{aaronson_quantum_2023} and subset states~\cite{giurgica-tiron_pseudorandomness_2023,jeronimo_pseudorandom_2024}) known to form pseudoentangled ensembles. 
We separately analyzed two related problems: the spreading of pseudoentanglement when a small pseudoentangled subsystem is placed in contact with a large trivial system, and the generation of pseudoentanglement starting from a product state. 

The key technical step in our analysis involves mapping the distinguishability between quantum state ensembles onto the equilibration of certain classical random walks over subsets of the computational basis. 
This creates a connection to an interesting feature of Markov chains on finite sets: the cutoff phenomenon, whereby mixing happens in a step-like fashion in the limit of large system size, giving asymptotically sharp transitions to equilibrium. We argued that our Markov chains in many cases exhibit a cutoff phenomenon, giving sharp dynamical transitions in the distinguishability of our output states from Haar-random ones.

The new concept of ensemble pseudothermalization adds to a growing body of research on the nuances of thermalization, chaos, ergodicity, scrambling, {\it etc.}, that can be revealed only by higher-moment statistical quantities, thus going beyond the first- or second-moment quantities (e.g., local expectation values~\cite{srednicki_chaos_1994}, out-of-time-ordered correlators~\cite{von_keyserlingk_operator_2018}) that are traditionally the basis of such notions. 
Other recently introduced ideas in this direction include {\it deep thermalization}~\cite{cotler_emergent_2023,ho_exact_2022,ippoliti_dynamical_2023,ippoliti_solvable_2022,lucas_generalized_2023,bhore_deep_2023,shrotriya_nonlocality_2023,chan_projected_2024}, {\it complete Hilbert-space ergodicity}~\cite{pilatowsky-cameo_complete_2023,pilatowsky-cameo_hilbert-space_2024}, and thermalization via {\it free probability} or {\it cumulants}~\cite{fava_designs_2023,pappalardi_general_2023}. 
Those notions tend to be stronger than standard thermalization (as defined from the equilibration of local reduced density matrices). They are based on information-theoretic pseudorandomness of certain state ensembles generated for example by quantum measurements or by quasiperiodic drive sequences. The randomness can be quantified through the notion $\epsilon$-approximate quantum state $m$-design, corresponding to the trace distance between $m$-th moment operators being smaller than a tolerance $\epsilon$ (as in Eq.~\eqref{eq:pe_criterion_tracedistance}); in those stronger versions of thermalization, one generally requires or expects an exponentially small tolerance $\epsilon \sim 1/\exp(N)$, or possibly even exact state designs ($\epsilon = 0$). For $m\geq 2$, these conditions imply high entanglement.

Ensemble pseudothermalization in contrast is based on a computational notion of pseudorandomness, where the observer is limited to polynomial resources (e.g. time, number of state copies) in their attempt to tell the states apart from truly random ones. Again in the language of state designs, this corresponds to a tolerance $\epsilon = o(1/\poly{N})$ that is superpolynomially small (in order to trick the polynomially-bound observer), but not exponentially small, thus allowing for weakly entangled states. Moreover, while we have used the trace distance as a measure of distinguishability in this work, in reality the observer might be limited to sub-optimal measurements that do not saturate the trace distance bound (the optimal distinguishing measurement might require a deep quantum circuit to implement). It would be interesting to explore how our results would change when keeping this additional limitation into account.

\subsection{Implications} 

Our results carry several implications for quantum physics and quantum information science. We summarize them below, while noting that fully unraveling their consequences will require future work.

{(i) \it Learning quantum processes and entanglement growth.}
Ensemble pseudothermalization implies limits on the ability of an observer to efficiently characterize entanglement growth in many-body dynamics, an important tool in experimental studies of thermalization~\cite{kaufman_quantum_2016}. While pseudoentanglement makes this statement for individual states, our work translates this statement to time evolution: it is possible to efficiently verify, e.g., the linear growth of quantum entanglement versus time $t$ (a feature of quantum chaotic dynamics) up to a point, $t\sim\log N$, but not beyond it. 
Our results also have broader implications on the learning of quantum processes. While our automaton circuits are {\it not} pseudorandom (evident from their trivial action on computational basis input states), they would nonetheless appear random to several protocols for quantum process learning~\cite{huang_learning_2023,levy_classical_2024}. These protocols are based on feeding in random Pauli product states (randomly $\pm 1$ in the $X$, $Y$ or $Z$ basis on each qubit) and then performing randomized Pauli-basis measurements on the output. Since random Pauli product states are subset-phase states of typical cardinality $K \sim 2^{(2/3)N}$ (each $X$ or $Y$ basis state doubles the cardinality of the subset), this ensemble of input states can pseudothermalize for any polynomial number of copies $m$. Classical shadows on the output states will not be able to tell the state apart from Haar states, and thus the process cannot be efficiently distinguished from a Haar-random unitary by these methods.

{\it (ii) Detection of measurement-induced phenomena.} 
Much research has been devoted lately to phenomena that take place in conditional, post-measurement states of quantum many-body systems. These include measurement-induced entanglement and teleportation transitions~\cite{bao_finite-time_2024,hoke_measurement-induced_2023} as well as `deep thermalization', the emergence of universal wavefunction distributions on subsystems from measurements on the rest of the system (the `projected ensemble')~\cite{cotler_emergent_2023,ho_exact_2022}. 
Given the intrinsically non-deterministic nature of measurement, detection of all these phenomena na\"ively relies on postselection, which is inefficient for many-body states; however, efficient detection may become possible when allowing adaptive and computationally-enhanced strategies~\cite{gullans_scalable_2020,noel_measurement-induced_2022,hoke_measurement-induced_2023,garratt_probing_2024,mcginley_postselection-free_2024}. 
Ensemble pseudothermalization presents universal limits on these strategies.
The subset-phase state ensembles obtained as fixed points in our dynamics lead to very simple post-measurement states: when a finite fraction of the system is measured in the computational basis, one  almost always obtains a product state on the remainder of the system\footnote{
Given a subset state $\ket{\psi_S}$, measuring a bitstring $\zeta$ on subsystem $\bar{A}$ collapses the state to a superposition of all computational basis states $\ket{\eta}_A$ on $A$ such that the combined bitstring $z=(\eta,\zeta)$ belongs to the original subset $S$; but this typically specifies a unique $\eta$. The probability of two distinct $\eta,\eta'$ appearing in the post-measurement state is the probabilty that two bitstrings $z,z'\in S$ agree on the entire subsystem $\bar{A}$, which is $O(K^2 2^{-|\bar A|})$ (by birthday asymptotics) for a random subset state. As long as $K = 2^{o(N)}$ and $|\bar{A}| = \Theta(N)$ this probability is exponentially small in $N$.
}.
This means, e.g., that our pseudoentangled states are in the trivial phase for measurement-induced teleportation, and have a trivial projected ensemble. One the contrary, Haar-random states are in the non-trivial (teleporting) phase~\cite{bao_finite-time_2024} and their projected ensemble can form higher state designs~\cite{cotler_emergent_2023}.
Thus any method to efficiently diagnose these facts (without prior knowledge of the state) would serve as an efficient protocol to distinguish pseudoentangled states from Haar-random states, which is by definition impossible. It follows that any efficient protocol for the detection of these measurement-induced phenomena {\it must} use prior information about the state preparation~\cite{mcginley_postselection-free_2024}. Without this information, no efficient protocol (even allowing for arbitrary adaptivity and multi-copy measurements) can succeed.

{\it (iii) Fundamental physics implications.} 
Our results highlight the fact that extensive amounts of entanglement are generally not necessary to reproduce the predictions of high-temperature thermal equilibrium. The results on the generation of pseudoentanglement from product states are especially remarkable: they show that a disentangled product state, featuring only a modest amount (vanishing density) of coherence and acted upon by `classical' dynamics and [see Fig.~\ref{fig:generation}(a)], becomes indistinguishable from a near-maximally entangled random state. This raises the prospect that the universe might, in reality, be ``less quantum'' than we think---for example governed by mostly classical local update rules, enhanced by relatively small amounts of quantum resources like entanglement and coherence. Theory proposals of ``emergent quantum mechanics''~\cite{slagle_testing_2022} may include a cut-off on the allowed amount of entanglement in many-body systems, or present deviations from standard quantum mechanics only in highly entangled states. Our findings suggest that it may be impossible to efficiently rule out such scenarios in general. 

\subsection{Outlook}

Our work opens a number of directions for future research. 
First of all, it would be desirable to improve and/or rigorize our results in several ways: by tightening bounds, especially on the mixing times in the local circuit model and on pseudoentanglement without random phases; 
by proving or disproving our Conjecture~\ref{conjecture:relaxation} on the relaxation times in the local circuit model; 
and by proving the existence of cutoff phenomena in our Markov chains, for which we only showed a necessary condition (divergent ratio of mixing time to relaxation time).
These improvements could have significant qualitative consequences. 
For example, one may hope to show that, in the local circuit model for pseudoentanglement spreading, Sec.~\ref{sec:local-circuit}, the $m$-copy distinguishability transitions are sharply separated from one another, in the sense that the difference of consecutive mixing times $t_{\rm mix}^{(m+1)} - t_{\rm mix}^{(m)}$ is sufficiently larger than the relaxation time $t_{\rm rel}^{(m)}$. This would give a cascade of distinguishability transitions, one for each number of copies $m$ that the observer can access, sharply separated in time from each other.

Another direction for future work is to identify physically realistic models of dynamics where pseudoentanglement is achieved in polynomial time. 
In the all-to-all model of Sec.~\ref{sec:non-local_gate} that time scale is exponential for trivial reasons;
in the local model of Sec.~\ref{sec:local-circuit}, the $m$-copy indistinguishability time is polynomial in system size, but grows at least linearly with $m$, so that convergence to a pseudoentangled ensemble, indistinguishable from Haar-random with {\it any} polynomial number of copies $m = O(\poly{N})$, requires super-polynomial time. This is a consequence of demanding information-theoretic pseudorandomness, i.e. vanishing trace distance [Eq.~\eqref{eq:pe_criterion_tracedistance}], along with a gate set of polynomial size (this follows from the counting bound, Fact~\ref{fact:countingbound}). 
It would be interesting to identify physically reasonable models of dynamics that generate pseudoentanglement in polynomial time, under the weaker condition of computational pseudorandomness [Eq.~\eqref{eq:pe_criterion_algo}]. 
Indeed pseudorandom subset(-phase) state ensembles can be prepared efficiently by using quantum-secure pseudorandom permutations and phase functions~\cite{aaronson_quantum_2023}. 
In this work we have considered truly random circuits, as these make for simpler models of dynamics and are more in keeping with the spirit of quantum thermalization; however it would be interesting to identify pseudorandom models of dynamics that can produce pseudoentanglement in polynomial time while maintaining a simple, local, ``random-looking'' structure. A related construction of pseudorandom unitaries was recently shown to not be robust to perturbations~\cite{haug_pseudorandom_2023}, which might present an obstruction for this direction.

More broadly, subset(-phase) states are only one class of pseudoentangled state ensembles. Due to their structure, these ensembles suggest automaton circuits as a natural models of dynamics. It would be interesting, given different ensembles of pseudoentangled states (for instance a recent construction based on shallow circuits of pseudorandom unitaries acting on large blocks of qubits~\cite{schuster_random_2024}), to identify models of quantum dynamics that might naturally produce them as late-time steady states. 

Recent works have introduced related notions of `pseudocoherence'~\cite{haug_pseudorandom_2023}, `pseudomagic'~\cite{gu_little_2023}, etc, which are analogous to pseudoentanglement but based on different resources (coherence, magic, etc). A natural extensions of our work would be to study the dynamics of other such pseudoresources in appropriate circuit models. 

It is also interesting to ask whether, in analogy with measurement-induced entanglement phase transitions~\cite{li_quantum_2018,skinner_measurement-induced_2019,gullans_dynamical_2020,iaconis_measurement-induced_2020}, one might obtain phase transitions in the dynamics of pseudoentanglement by enriching our automaton circuits with suitable projective measurements and other elements. For example, while the size of subsets is invariant under automaton gates, it generally decreases under $Z$ measurements, and can increase under Hadamard gates and $X$ measurements. Can the competition of these elements give rise to nontrivial phase transitions in the pseudoentanglement properties of late-time states? Such transitions might carry implications for the hardness of detecting general measurement-induced phenomena, as discussed earlier. 

Finally, the possible emergence of cutoff phenomena in ensemble pseudothermalization raises more general questions about the sharpness of thermalization in generic quantum dynamics. Are sharp transitions in distinguishability a generic feature of chaotic quantum dynamics~\cite{kastoryano_cutoff_2012,oh_cutoff_2023,balasubramanian_glassy_2024}? What are the right tools to analyze them? Can particularly structured classes of time evolutions (e.g., with symmetry) exhibit different levels of sharpness? We leave these questions as an interesting target for future research.

{\it Note added.} Shortly after the first version of this manuscript appeared on arXiv, two new results on the spectral gaps of random reversible classical circuits have appeared. Ref.~\cite{he_pseudorandom_2024} studied a model of geometrically local 3-bit gates very similar to ours and proved (in our notation) an upper bound on the relaxation time $t_{\rm rel} \leq \tilde{O}(mN)$. 
Subsequently, Ref.~\cite{chen_incompressibility_2024} considered 3-local, but geometrically non-local, random reversible circuits and showed an upper bound $t_{\rm rel} \leq \tilde{O}(N)$ for all $m = O(\poly{N})$. 
They conjecture that the bounds should apply to the geometrically-local setting as well, which would prove our Conjecture~\ref{conjecture:relaxation}.

\begin{acknowledgments} 
We thank Scott Aaronson, Tudor Giurgica-Tiron, Nick Hunter-Jones, Ethan Lake, and Shivan Mittal for stimulating discussions. 
We are especially grateful to Nick Hunter-Jones for pointing us to Ref.~\cite{gowers_almost_1996} and to Shivan Mittal for sharing his results~\cite{mittal_private_2024} on the gap of a Hamiltonian discussed in Appendix~\ref{app:hypercube_rw_relative}.
Some numerical simulations were carried out on HPC resources provided by the Texas Advanced Computing Center (TACC) at the University of Texas at Austin. 
\end{acknowledgments}

\bibliographystyle{JHEP}
\bibliography{pseudoentanglement.bib}

\appendix

\newpage

%%%

%%%%%%%%%%%%%%%%
% PROOF PROP 1 %
%%%%%%%%%%%%%%%%

\section{Proof of Proposition~\ref{proposition:subsetphase}}
\label{app:proof1}

Here we prove Proposition~\ref{proposition:subsetphase} by analyzing the trace distance
\begin{equation}
    D_{\rm tr}\left( \rho^{(m)}_{\mathcal E_p}, \rho^{(m)}_{\rm Haar} \right) = \frac{1}{2} \left\| \rho^{(m)}_{\mathcal E_p} - \rho^{(m)}_{\rm Haar} \right\|_{\rm tr} 
    \label{eq:app_tracedistance}
\end{equation}
between the $m$-th moment operators for the subset-phase ensemble $\mathcal{E}_p$ (with uniformly random phases but arbitrarily-distributed subsets, $S\sim p$) and the Haar ensemble.

\tocless\subsection{Preliminaries}

In this Appendix we will denote computational basis states by integers $z\in [D] = \{0,1,\dots D-1\}$ rather than by bitstrings $\mathbf{z}\in \{0,1\}^N$. We will reserve the latter notation for tuples of computational basis states, which are used later.

Eq.~\eqref{eq:app_tracedistance} involves states in the symmetric sector of the $m$-fold replicated Hilbert space $\mathcal{H}^{\otimes m}$. It is thus helpful to introduce an orthonormal basis of that subspace:
\begin{equation}
    \ket{\mathbf T} \equiv \binom{m}{\mathbf T}^{-1/2} \sum_{\mathbf z:\, {\rm type}(\mathbf z) = \mathbf T}\ket{\mathbf z},
    \label{eq:type_basis}
\end{equation}
where $\mathbf z \in [D]^m$ is an $m$-tuple of basis states and ${\rm type}(\mathbf z) \in [m+1]^D$ is a vector that counts the multiplicity of each basis state $z$ in the $m$-tuple $\mathbf z$ (the notation is borrowed from Ref.~\cite{aaronson_quantum_2023}): i.e., if ${\rm type}(\mathbf z) = \mathbf T$, then $T_z$ is the number of times the basis state $z \in [D]$ appears in $\mathbf{z}$. As a consequence we have $\sum_{z=0}^{D-1} T_z = m$. Finally, we used a shorthand for the multinomial coefficient 
$$\binom{m}{\mathbf T} = \binom{m}{T_0,\dots T_{D-1}} = \frac{m!}{T_0! \cdots T_{D-1}!}.$$
Types that consist only of 0's and 1's are called ``unique types''. These describe non-degenerate $m$-tuples $\mathbf z$, where all elements are distinct. As such, unique types are equivalent to subsets of cardinality $m$: $\mathbf{T}\leftrightarrow S = \{z:\, T_z=1\}$. With slight abuse of notation we will denote a unique-type state $\ket{\mathbf T}$ by the corresponding subset, $\ket{S}$.

We may write the Haar ensemble's moment in terms of type states as
\begin{equation}
    \rho^{(m)}_{\rm Haar} = \binom{D+m-1}{m}^{-1} \sum_{\mathbf T\in {\rm types}} \ketbra{\mathbf T}.
\end{equation}
Further, it is useful to introduce the ``unique-type state''
\begin{align}
    \rho_{{\rm unique}}^{(m)} 
    & = \binom{D}{m}^{-1} \sum_{\mathbf{T} \in \text{unique types}} \ketbra{\mathbf T} \nonumber \\
    & = \binom{D}{m}^{-1} \sum_{S'\in\Sigma_m} \ketbra{S'}.
\end{align}
It is easy to verify that, as $D\to\infty$ with finite $m$, most types become unique, and $\rho_{\rm Haar}^{(m)} \simeq \rho_{\rm unique}^{(m)}$. 
We also introduce the same object restricted to basis states within a particular subset $S\in\Sigma_K$:
\begin{align}
    \rho_{S, {\rm unique}}^{(m)} 
    & = \binom{K}{m}^{-1} \sum_{S'\in\Sigma_m:\, S'\subset S} \ketbra{S'}.
    \label{eq:uniquetypeprojS}
\end{align}

\tocless\subsection{Proof}

Our proof builds on Ref.~\cite{aaronson_quantum_2023}, specifically the two facts below:

\begin{fact}
We have
$$ \left\| \rho_{\rm Haar}^{(m)} - \sum_{S \in \Sigma_K} \pi_K(S) \rho_{S,{\rm unique}}^{(m)} \right\|_{\rm tr} \leq O\left( \frac{m^2}{K} \right), $$
with $\pi_K(S) = \binom{D}{K}^{-1}$ the uniform distribution on $\Sigma_K$. 
\end{fact}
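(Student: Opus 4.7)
The plan is to first compute the mixture on the left-hand side explicitly, reducing the claim to a $K$-independent estimate on the Haar moment operator. Specifically, I will show the identity
\begin{equation*}
\sum_{S \in \Sigma_K} \pi_K(S)\, \rho_{S,{\rm unique}}^{(m)} \;=\; \rho_{{\rm unique}}^{(m)},
\end{equation*}
where $\rho_{{\rm unique}}^{(m)}$ is the operator defined in Eq.~\eqref{eq:uniquetypeprojS} (without the $S$ restriction). To see this, I exchange the two sums implicit on the left-hand side (over $S\in\Sigma_K$ and over $S'\in\Sigma_m$ with $S'\subset S$) and count: each fixed $S'\in\Sigma_m$ is contained in exactly $\binom{D-m}{K-m}$ subsets $S\in\Sigma_K$. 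Using the elementary identity $\binom{D}{K}\binom{K}{m} = \binom{D}{m}\binom{D-m}{K-m}$, the prefactor collapses to $1/\binom{D}{m}$, independent of $K$, and the mixture becomes exactly $\rho_{{\rm unique}}^{(m)}$. This combinatorial step is the crux of the argument: it shows that the claimed bound cannot actually depend on $K$ in any essential way.

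Given this identity, the Fact reduces to bounding $\|\rho_{\rm Haar}^{(m)} - \rho_{{\rm unique}}^{(m)}\|_{\rm tr}$. Both operators are simultaneously diagonal in the type basis $\{\ket{\mathbf T}\}$ introduced in Eq.~\eqref{eq:type_basis}: $\rho_{\rm Haar}^{(m)}$ has eigenvalue $1/\binom{D+m-1}{m}$ on every type (and $0$ on the antisymmetric complement), while $\rho_{{\rm unique}}^{(m)}$ has eigenvalue $1/\binom{D}{m}$ on the $\binom{D}{m}$ unique types and $0$ everywhere else. Using $\binom{D+m-1}{m} \geq \binom{D}{m}$ to determine the signs, summing the absolute eigenvalue differences gives
\begin{equation*}
\left\|\rho_{\rm Haar}^{(m)} - \rho_{{\rm unique}}^{(m)}\right\|_{\rm tr} \;=\; 2\!\left(1 - \frac{\binom{D}{m}}{\binom{D+m-1}{m}}\right),
\end{equation*}
where the contributions from unique and non-unique types coincide after simplification.

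The last step is to estimate this ratio. Writing $\binom{D}{m}/\binom{D+m-1}{m} = \prod_{i=0}^{m-1}(D-i)/(D+i) = \prod_{i=0}^{m-1}[1 - 2i/(D+i)]$ and applying the Bernoulli-type inequality $\prod_i(1-a_i)\geq 1 - \sum_i a_i$ with $a_i = 2i/(D+i) \leq 2i/D$, I obtain $1 - \binom{D}{m}/\binom{D+m-1}{m} \leq m(m-1)/D = O(m^2/D)$. Since $K \leq D = 2^N$, this is also $O(m^2/K)$, which completes the argument. No step is really difficult: once the combinatorial collapse of the mixture is noticed, everything reduces to a standard ``fraction of unique types in the symmetric subspace'' estimate in the spirit of the birthday bound.
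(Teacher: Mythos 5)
Your proof is correct. The paper itself offers no internal argument for this Fact---it simply defers to Proposition~2.3 of Ref.~\cite{aaronson_quantum_2023}---so your self-contained derivation is genuinely additional content rather than a rederivation of the paper's proof. The key combinatorial collapse $\sum_{S\in\Sigma_K}\pi_K(S)\,\rho_{S,{\rm unique}}^{(m)}=\rho_{\rm unique}^{(m)}$ is exactly right (it also follows in one line from permutation symmetry: the mixture is a permutation-invariant probability distribution on $\Sigma_m$, hence uniform), and your explicit check via $\binom{D}{K}\binom{K}{m}=\binom{D}{m}\binom{D-m}{K-m}$ is sound. It correctly exposes that the $K$-dependence of the stated bound is an artifact of $\pi_K$ being uniform: since $\rho_{\rm Haar}^{(m)}$ and $\rho_{\rm unique}^{(m)}$ are simultaneously diagonal in the orthonormal type basis of Eq.~\eqref{eq:type_basis}, your computation gives the \emph{exact} trace distance $2\bigl(1-\binom{D}{m}/\binom{D+m-1}{m}\bigr)$, and the Weierstrass inequality (valid here since each $a_i=2i/(D+i)\in[0,1]$ for $i\le m-1\le D$) bounds this by $2m(m-1)/D=O(m^2/D)$, which is strictly stronger than the claimed $O(m^2/K)$ given $K\le D$. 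The only caveat worth noting is that the $O(m^2/K)$ form in the paper is presumably inherited from the cited reference, where the analogous error genuinely scales as $m^2/K$ (it arises from collisions among $m$ draws from a size-$K$ subset, as in Proposition~2.4); for the particular mixture appearing in this Fact the sharper $O(m^2/D)$ holds, and your argument establishes it cleanly.
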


\begin{proof} 
See Proposition 2.3 in Ref.~\cite{aaronson_quantum_2023}.
\end{proof}

\begin{fact}
We have
$$ \left\| \rho_{\mathcal{E}_p}^{(m)} - \sum_{S \in \Sigma_K} p(S) \rho_{S,{\rm unique}}^{(m)} \right\|_{\rm tr} \leq O\left( \frac{m^2}{K} \right), $$
\end{fact}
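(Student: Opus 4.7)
The plan is to reduce the claim to a per-subset estimate and then average. Since $\rho_{\mathcal{E}_p}^{(m)} = \sum_{S\in\Sigma_K} p(S)\,\mathbb{E}_f[\ketbra{\psi_{S,f}}^{\otimes m}]$, the triangle inequality and convexity of the trace norm show that it suffices to establish, for each individual $S\in\Sigma_K$,
\begin{equation*}
    \bigl\| \mathbb{E}_f[\ketbra{\psi_{S,f}}^{\otimes m}] - \rho_{S,{\rm unique}}^{(m)} \bigr\|_{\rm tr} \leq O(m^2/K),
\end{equation*}
after which the full bound follows by taking the convex combination against $p$.

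The main computation is to evaluate the phase-averaged moment operator. Expanding $\ket{\psi_{S,f}}^{\otimes m}\bra{\psi_{S,f}}^{\otimes m}$ in the computational basis yields a sum over pairs $(\mathbf z,\mathbf z')\in S^m\times S^m$ weighted by $(-1)^{f(\mathbf z)+f(\mathbf z')}$. Averaging $f$ uniformly over Boolean functions, only those pairs survive in which every basis state $z\in S$ appears an even number of times across the combined tuple $(\mathbf z,\mathbf z')$; for tuples in $S^m\times S^m$ this condition is exactly $\mathrm{type}(\mathbf z)=\mathrm{type}(\mathbf z')$. Reorganizing in the type basis of Eq.~\eqref{eq:type_basis}, one obtains a manifestly diagonal operator supported only on types with support inside $S$,
\begin{equation*}
    \mathbb{E}_f[\ketbra{\psi_{S,f}}^{\otimes m}] = K^{-m} \sum_{\mathbf T:\,\mathrm{supp}(\mathbf T)\subset S} \binom{m}{\mathbf T} \ketbra{\mathbf T}.
\end{equation*}

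Finally, I would split this operator into unique-type and non-unique-type pieces. Unique types with support in $S$ are in bijection with $S'\in\Sigma_m$, $S'\subset S$, and each contributes $\binom{m}{\mathbf T}=m!$; comparing with the definition Eq.~\eqref{eq:uniquetypeprojS}, the unique part equals $\alpha\,\rho_{S,{\rm unique}}^{(m)}$ with $\alpha = m!\binom{K}{m}/K^m = \prod_{j=0}^{m-1}(1-j/K)$. The remaining non-unique contribution $\sigma_S$ is a positive operator of trace $1-\alpha$, since the total trace is $1$. Writing $\mathbb{E}_f[\ketbra{\psi_{S,f}}^{\otimes m}] - \rho_{S,{\rm unique}}^{(m)} = (\alpha-1)\rho_{S,{\rm unique}}^{(m)} + \sigma_S$, the trace norm is bounded by $2(1-\alpha)$, and the elementary estimate $\prod_{j<m}(1-j/K)\geq 1 - m(m-1)/(2K)$ gives $2(1-\alpha)\leq m(m-1)/K = O(m^2/K)$. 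I do not anticipate a serious obstacle: the argument is essentially bookkeeping once the phase average has been shown to collapse onto type-diagonal operators, which is the only nontrivial move and parallels the proof of Fact~1.
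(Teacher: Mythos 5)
The paper offers no proof of this Fact, citing Proposition~2.4 of Ref.~\cite{aaronson_quantum_2023} instead, so you are constructing an independent argument. Your overall plan---reduce by convexity to a per-subset estimate, evaluate the phase average in the type basis, and split off the unique-type part---is sound, and the final bound comes out correctly. There is, however, a concrete error in the middle step.

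Averaging $(-1)^{f(\mathbf z)+f(\mathbf z')}$ over a uniform Boolean $f$ kills exactly those pairs for which some basis state appears an odd total number of times in $(\mathbf z,\mathbf z')$; the surviving pairs satisfy $\mathrm{type}(\mathbf z)\equiv\mathrm{type}(\mathbf z')\pmod 2$ componentwise, which is strictly weaker than $\mathrm{type}(\mathbf z)=\mathrm{type}(\mathbf z')$. Already at $m=2$ the pair $\mathbf z=(w,w)$, $\mathbf z'=(w',w')$ with $w\neq w'\in S$ survives the average (every element appears an even number of times) yet has different types, producing a nonzero off-diagonal element $\propto\ketbra{\mathbf T}{\mathbf T'}$. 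So $\mathbb{E}_f[\ketbra{\psi_{S,f}}^{\otimes m}]$ is \emph{not} diagonal in the type basis; it is block-diagonal, with one rank-one block for each parity class of types supported inside $S$, and your displayed formula records only its diagonal entries.

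Your conclusion survives the correction, but two facts you currently take for granted need to be stated. First, each unique type is isolated in its own parity class: if $\mathbf T'\equiv\mathbf T\pmod 2$ with $\mathbf T$ unique of support $S'\subset S$, $|S'|=m$, then $T'_w$ is odd (hence $\geq 1$) for every $w\in S'$, and $\sum_w T'_w=m$ then forces $\mathbf T'=\mathbf T$. This is what actually justifies identifying the unique part with $\alpha\,\rho^{(m)}_{S,\mathrm{unique}}$, $\alpha=m!\binom{K}{m}K^{-m}$, as you computed. Second, the remainder $\sigma_S$ is positive not because the operator is diagonal (it isn't), but because $\sigma_S$ is the restriction of the positive operator $\mathbb{E}_f[\ketbra{\psi_{S,f}}^{\otimes m}]$ to an invariant subspace orthogonal to the unique types. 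With these two points supplied, your last paragraph---orthogonality of the two pieces gives trace distance exactly $2(1-\alpha)$, and $1-\alpha\leq m(m-1)/(2K)$---goes through unchanged.
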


\begin{proof} 
See Proposition 2.4 in Ref.~\cite{aaronson_quantum_2023}.
\end{proof}

By using these two facts and the triangle inequality, we can rephrase the LHS of Eq.~\eqref{eq:app_tracedistance} as 
\begin{align}
    \left\| \rho^{(m)}_{\mathcal E_p} - \rho^{(m)}_{\rm Haar} \right\|_{\rm tr} 
    & = \left\|\sum_{S\in \Sigma_K} (p(S)-\pi_K(S))\rho_{S,{\rm unique}}^{(m)} \right\|_{\rm tr} %\nonumber\\
    %& \qquad 
    + O \left(\frac{m^2}{K}\right). 
\end{align}
Now, using the definition of unique-type state Eq.~\eqref{eq:uniquetypeprojS} and orthonormality of the $\{\ket{S'}\}_{S'\in \Sigma_m}$ states, we arrive at 
\begin{align}
    \left\| \rho^{(m)}_{\mathcal E_p} - \rho^{(m)}_{\rm Haar} \right\|_{\rm tr} 
    & =  \sum_{S' \in \Sigma_m} | f(S')| + O \left(\frac{m^2}{K}\right), \\
    f(S')
    & = \binom{K}{m}^{-1} \sum_{\substack{S\in\Sigma_K: \\ S'\subset S}} [p(S) - \pi_K(S)]
\end{align}
Finally, we obtain the statement of Proposition~\ref{proposition:subsetphase} by introducing the map $\Phi_{K\leftarrow m}$ of Eq.~\eqref{eq:phi_channel_def}.

%%%%%%%%%%%%
% PHI MAPS %
%%%%%%%%%%%%

\section{Properties of the $\Phi$ maps}
\label{app:phi_maps}

Here we briefly analyze and prove some properties of the $\Phi$ maps.

\begin{fact}
    $\Phi_{m\leftarrow K}$ maps probability distributions on $\Sigma_K$ to probability distributions on $\Sigma_m$.
\end{fact}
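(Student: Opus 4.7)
The plan is to verify the two defining properties of a probability distribution for $\Phi_{m\leftarrow K}[p]$: non-negativity and normalization. Non-negativity is immediate from the definition in Eq.~\eqref{eq:phi_channel_def}, since $\Phi_{m\leftarrow K}[p](S')$ is a positive multiple of a sum of non-negative terms $p(S)$.

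For normalization, I would compute $\sum_{S'\in \Sigma_m} \Phi_{m\leftarrow K}[p](S')$ by swapping the order of summation. Concretely,
\begin{align}
\sum_{S' \in \Sigma_m} \Phi_{m\leftarrow K}[p](S')
&= \binom{K}{m}^{-1} \sum_{S'\in \Sigma_m} \sum_{\substack{S\in \Sigma_K \\ S'\subset S}} p(S) \nonumber \\
&= \binom{K}{m}^{-1} \sum_{S\in \Sigma_K} p(S)\, N(S),
\end{align}
where $N(S) = \#\{S'\in \Sigma_m : S'\subset S\}$ is the number of $m$-element subsets contained in $S$. Since $|S|=K$, we have $N(S) = \binom{K}{m}$ independent of $S$, so the two binomial factors cancel and we are left with $\sum_{S\in \Sigma_K} p(S) = 1$.

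There is no real obstacle here: the only mildly substantive step is recognizing that the count $N(S)$ is exactly the normalization factor $\binom{K}{m}$ built into the definition of $\Phi_{m\leftarrow K}$, which is essentially why the map was normalized this way in the first place. A one-line remark confirming that $\Phi_{m\leftarrow K}$ is linear and sends nonnegative functions to nonnegative functions completes the claim.
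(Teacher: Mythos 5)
Your proof is correct. The paper takes a slightly different (and more conceptual) route: rather than directly verifying non-negativity and normalization, it observes that $\Phi_{m\leftarrow K}[p]$ is exactly the distribution of the two-stage sampling process ``draw $S\sim p$, then draw a uniformly random $m$-element subset $S'\subset S$,'' so it is automatically a probability distribution. Your calculation is essentially the algebraic unfolding of that observation: the count $N(S)=\binom{K}{m}$ is precisely the number of equally-likely outcomes at the second stage, which is why it matches the normalization constant in the definition of $\Phi_{m\leftarrow K}$. The paper's version buys a reusable interpretation (it is invoked implicitly later, e.g. in the semigroup property and in relating the induced Markov chains to $\Phi$), while your direct verification is self-contained and makes the cancellation explicit. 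Both are fine; there is no gap.
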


\begin{proof}
It is straightforward to verify that $\Phi_{m\leftarrow K}[p]$ corresponds to the distribution on $\Sigma_m$ obtained by composing these two processes: (i) draw $S\in \Sigma_K$ according to $p$; (ii) draw (without replacement) $m$ elements from $S$ uniformly at random.
\end{proof}

\begin{fact}
\label{fact:semigroup}
    The maps $\Phi$ define a semigroup: 
    \begin{equation}
        \Phi_{c\leftarrow b} \circ \Phi_{b\leftarrow a} = \Phi_{c\leftarrow a}
    \end{equation}
    for all $D\geq a> b> c\geq 1$. 
\end{fact}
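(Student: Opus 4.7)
The plan is to verify the semigroup identity $\Phi_{c\leftarrow b}\circ \Phi_{b\leftarrow a}=\Phi_{c\leftarrow a}$ either by a direct combinatorial computation using the definition in Eq.~\eqref{eq:phi_channel_def}, or via the sampling-without-replacement interpretation established in the previous fact. Both approaches are short; I would present the algebraic one because it leaves no room for ambiguity and the key binomial identity is the crux.

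First I would unfold the composition. For any $S_c\in\Sigma_c$ and any probability distribution $p$ on $\Sigma_a$,
\begin{align}
    (\Phi_{c\leftarrow b}\circ \Phi_{b\leftarrow a})[p](S_c)
    &= \binom{b}{c}^{-1}\!\!\!\sum_{\substack{S_b\in\Sigma_b\\ S_b\supset S_c}}\Phi_{b\leftarrow a}[p](S_b) \nonumber\\
    &= \binom{b}{c}^{-1}\binom{a}{b}^{-1}\!\!\sum_{\substack{S_a\in\Sigma_a\\ S_a\supset S_c}} N(S_a)\, p(S_a),
\end{align}
where $N(S_a)=\#\{S_b\in\Sigma_b: S_c\subset S_b\subset S_a\}$ counts the intermediate $b$-subsets. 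Since $S_b$ is obtained from $S_c$ by adjoining $b-c$ further elements chosen from the $a-c$ elements of $S_a\setminus S_c$, we have $N(S_a)=\binom{a-c}{b-c}$ whenever $S_c\subset S_a$ and zero otherwise.

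Next I would verify the binomial identity
\begin{equation}
    \binom{a}{b}\binom{b}{c}=\binom{a}{c}\binom{a-c}{b-c},
\end{equation}
which is immediate from expanding both sides in factorials. Substituting this into the previous display collapses the prefactor to $\binom{a}{c}^{-1}$ and yields
\begin{equation}
    (\Phi_{c\leftarrow b}\circ \Phi_{b\leftarrow a})[p](S_c)=\binom{a}{c}^{-1}\!\!\!\sum_{\substack{S_a\in\Sigma_a\\ S_a\supset S_c}} p(S_a)=\Phi_{c\leftarrow a}[p](S_c),
\end{equation}
proving the claim. Since each individual step is elementary, there is no real obstacle here; the only thing to be careful about is making sure the range of summation is correctly restricted to $S_a\supset S_c$ when invoking $N(S_a)$. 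As a sanity check, one can observe that the identity also follows conceptually from the sampling interpretation of Fact~1: drawing $b$ elements uniformly without replacement from $S_a$ and then $c$ elements uniformly without replacement from the result is equivalent to drawing $c$ elements uniformly without replacement from $S_a$ directly, which is exactly the semigroup property.
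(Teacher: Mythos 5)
Your proof is correct and follows essentially the same route as the paper: unfold the composition, count the intermediate $b$-subsets to get $\binom{a-c}{b-c}$, invoke the binomial identity $\binom{a}{b}\binom{b}{c}=\binom{a}{c}\binom{a-c}{b-c}$, and optionally note the sampling-without-replacement interpretation as a conceptual shortcut. The only cosmetic difference is that you state and verify the binomial identity explicitly, whereas the paper relegates it to ``straightforward algebra''; the substance is identical.
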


\begin{proof}
We have
\begin{align}
    \Phi_{c\leftarrow b} \circ \Phi_{b\leftarrow a}[p](S'')
    & = \binom{a}{b}^{-1} \binom{b}{c}^{-1} \sum_{\substack{S,S':\\ S''\subset S'\subset S}} p(S) \nonumber \\
    & = \binom{a}{b}^{-1} \binom{b}{c}^{-1} \binom{a-c}{b-c} \binom{a}{c} 
    \times \Phi_{c\leftarrow a}[p](S''),
\end{align}
where in the second line we carried out the summation over $S'$ (the $c$ elements of $S'\cap S''$ are fixed, whereas the $b-c$ elements of $S'\setminus S''$ must be chosen from $S\setminus S''$, which has $a-c$ elements). Straightforward algebra shows the product of binomial coefficients is 1. 
Alternatively, without algebra, one can note that, given a set $S$ with $|S|=a$, the following two processes are equivalent: (i) draw a uniformly-random subset $S'\subset S$ with $|S'|=b$, and then draw a uniformly-random subset $S''\subset S'$ with $|S''|=c$;
(ii) draw a uniformly-random subset $S''\subset S$ with $|S''|=c$. 
\end{proof}

Next we focus on the relationship between the distributions $\Phi_{m\leftarrow K}[p]$ with different values of $m$. Let us consider the relevant situation of a random walk on the permutation group $\perm{D}$, represented by a probability distribution $\mathbb{P}_t(\sigma)$. This induces random walks on the spaces $\Sigma_m$ for all $m$: picking an initial subset $S_0\in\Sigma_m$, we can define 
\begin{equation} 
p_{t,S_0}^{(m)}(S) = \sum_{\sigma\in\perm{D}:\, \sigma(S_0)=S} \mathbb{P}_t(\sigma), \label{eq:app_induced_walk}
\end{equation} 
with $\sigma(S_0)$ representing the action of permutations on subsets.

\begin{fact}
    Given a Markov chain $p_{t,S_0}^{(K)} (S)$ on $\Sigma_K$ induced by a random walk on permutations as in Eq.~\eqref{eq:app_induced_walk}, we have
    \begin{equation}
        \Phi_{m\leftarrow K}\left[p_{t,S_0}^{(K)}\right](S') = \mathbb{E}_{S_0'}\left[ p^{(m)}_{t,S_0'}(S')\right],
        \label{eq:derived_markovchain}
    \end{equation}
    where the average is taken over subsets $S_0'\in \Sigma_m$, $S_0'\subset S_0$, according to the uniform measure.
\end{fact}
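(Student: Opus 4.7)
The plan is to expand both sides of Eq.~\eqref{eq:derived_markovchain} using the definitions and reduce them to sums over permutations, then exhibit a one-to-one correspondence between the terms.

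First I would unfold the LHS using the definition of $\Phi_{m\leftarrow K}$ in Eq.~\eqref{eq:phi_channel_def} and then the definition of the induced walk Eq.~\eqref{eq:app_induced_walk}, obtaining
\begin{equation}
\Phi_{m\leftarrow K}\bigl[p_{t,S_0}^{(K)}\bigr](S') = \binom{K}{m}^{-1} \sum_{\substack{S\in\Sigma_K \\ S'\subset S}} \sum_{\substack{\sigma\in\perm{D} \\ \sigma(S_0)=S}} \mathbb{P}_t(\sigma) = \binom{K}{m}^{-1} \sum_{\substack{\sigma\in\perm{D} \\ S'\subset \sigma(S_0)}} \mathbb{P}_t(\sigma),
\end{equation}
where the second equality uses that $\sigma$ determines $S=\sigma(S_0)$ uniquely. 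In parallel I would unfold the RHS:
\begin{equation}
\mathbb{E}_{S_0'}\!\bigl[p^{(m)}_{t,S_0'}(S')\bigr] = \binom{K}{m}^{-1} \sum_{\substack{S_0'\subset S_0 \\ |S_0'|=m}} \sum_{\substack{\sigma\in\perm{D} \\ \sigma(S_0')=S'}} \mathbb{P}_t(\sigma).
\end{equation}
The factor $\binom{K}{m}^{-1}$ comes from the uniform measure on size-$m$ subsets of $S_0$.

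Then the key observation is that, since $\sigma$ is a bijection on $[D]$, the condition $S'\subset\sigma(S_0)$ is equivalent to $\sigma^{-1}(S')\subset S_0$, and in that case $S_0':=\sigma^{-1}(S')$ is the \emph{unique} size-$m$ subset of $S_0$ satisfying $\sigma(S_0')=S'$. This sets up a bijection between pairs $(\sigma,S_0')$ with $S_0'\subset S_0$, $|S_0'|=m$, $\sigma(S_0')=S'$ (indexing the RHS) and permutations $\sigma$ with $S'\subset\sigma(S_0)$ (indexing the LHS). Consequently the two double sums are identical term by term, proving the claimed equality.

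There is no real obstacle here: the statement is essentially a change of summation order, and the content is simply that applying a random permutation to $S_0$ and then keeping a uniformly random size-$m$ subset of the image is the same as first picking a uniformly random size-$m$ subset of $S_0$ and then applying the permutation. The only care needed is to make sure the combinatorial factor $\binom{K}{m}^{-1}$ matches on both sides, which it does because there are exactly $\binom{K}{m}$ size-$m$ subsets of $S_0$. This fact will be used repeatedly later when translating statements about the $\Sigma_K$ chain into statements about the induced $\Sigma_m$ chains, in conjunction with the semigroup property (Fact~\ref{fact:semigroup}).
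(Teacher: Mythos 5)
Your proof is correct and follows essentially the same route as the paper: expand $\Phi_{m\leftarrow K}$ and the induced-walk definition, collapse to a sum over permutations satisfying $S'\subset\sigma(S_0)$, and match this with the expanded RHS by observing that $S_0' = \sigma^{-1}(S')$ is the unique size-$m$ subset of $S_0$ mapped onto $S'$. You make the uniqueness explicit (which the paper leaves implicit but relies on to avoid over-counting), but the underlying change-of-summation is identical.
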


\begin{proof} 
Plugging in the definition of $\Phi$, the LHS of Eq.~\eqref{eq:derived_markovchain} reads
\begin{align}    
\Phi_{m\leftarrow K}\left[p_{t,S_0}^{(K)}\right](S') 
& = \binom{K}{m}^{-1} \sum_{S:\, S'\subset S } p_{t,S_0}^{(K)}(S) \nonumber \\
& = \binom{K}{m}^{-1} \sum_{S:\, S'\subset S } \sum_{\substack{\sigma \in \perm{D}:\\ \sigma(S_0) = S}} \mathbb{P}_t(\sigma) \nonumber \\
& = \binom{K}{m}^{-1} \sum_{\substack{\sigma \in \perm{D}: \\ S'\subset \sigma(S_0)}} \mathbb{P}_t(\sigma).
\end{align}    
But the condition $S'\subset \sigma(S_0)$ is verified if and only if $S' = \sigma(S_0')$ for some subset $S_0'\subset S_0$. We therefore obtain 
\begin{align}
    \Phi_{m\leftarrow K}\left[p_{t,S_0}^{(K)}\right](S')  & = \binom{K}{m}^{-1} \sum_{S_0'\subset S_0} \sum_{\substack{\sigma \in \perm{D}:\\ S' = \sigma(S_0')}} \mathbb{P}_t(\sigma)  \nonumber \\
    & = \binom{K}{m}^{-1} \sum_{S_0'\subset S_0} p^{(m)}_{t,S_0'}(S')
\end{align}
where we have recognized the definition of $p_{t,S_0'}^{(m)}(S')$, the Markov chain induced on $\Sigma_m$, with initial state $S_0'$. We see that $S_0'$ is averaged uniformly over all subsets of $S_0$ of the appropriate cardinality, as in Eq.~\eqref{eq:derived_markovchain}. 
\end{proof}

\begin{fact}[monotonicity] \label{fact:monotonicity}
The total variation distances 
\begin{equation}
    \Delta_m \equiv D_{\rm tv}( \Phi_{m\leftarrow K}[p], \pi_m )
\end{equation}
are non-increasing in $m$. 
\end{fact}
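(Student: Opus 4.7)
The plan is to obtain this monotonicity as a direct consequence of the semigroup property (Fact~\ref{fact:semigroup}) together with the contractivity of total variation distance under stochastic maps, i.e.\ the data processing inequality: for any stochastic kernel $\Phi$ and any signed measure $\mu$ of zero total mass, $\|\Phi[\mu]\|_{\rm tv} \leq \|\mu\|_{\rm tv}$.

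First I would reduce the problem to a single-step comparison. The semigroup identity $\Phi_{m \leftarrow K} = \Phi_{m \leftarrow m+1} \circ \Phi_{m+1 \leftarrow K}$, valid whenever $m+1 \leq K$, implies that the whole sequence $\Delta_1, \Delta_2, \dots, \Delta_K$ is controlled by iterating a bound between adjacent entries $\Delta_m$ and $\Delta_{m+1}$, so only the one-step case needs to be analyzed.

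Second, I would verify that $\Phi_{m \leftarrow m+1}$ preserves uniformity: $\Phi_{m \leftarrow m+1}[\pi_{m+1}] = \pi_m$. A direct count works: a fixed $S' \in \Sigma_m$ is contained in exactly $D - m$ size-$(m+1)$ supersets, so
\begin{equation*}
\Phi_{m \leftarrow m+1}[\pi_{m+1}](S') \;=\; \frac{D-m}{m+1}\binom{D}{m+1}^{-1} \;=\; \binom{D}{m}^{-1} \;=\; \pi_m(S').
\end{equation*}
Equivalently, applying the semigroup identity to the uniform input $\pi_K$ and using $\Phi_{m \leftarrow K}[\pi_K] = \pi_m$ together with $\Phi_{m+1 \leftarrow K}[\pi_K] = \pi_{m+1}$ yields the same conclusion with no computation.

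Third, I would combine these. By linearity of $\Phi_{m \leftarrow m+1}$,
\begin{equation*}
\Phi_{m \leftarrow K}[p] - \pi_m \;=\; \Phi_{m \leftarrow m+1}\!\bigl[\,\Phi_{m+1 \leftarrow K}[p] - \pi_{m+1}\bigr],
\end{equation*}
and applying the data processing inequality to the right-hand side gives the one-step bound relating $\Delta_m$ and $\Delta_{m+1}$. Iterating over $m$ then establishes the claimed monotonicity across all indices.

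The argument is essentially routine, with the key structural inputs---semigroup property and uniform-preservation of the kernels---already available in Appendix~\ref{app:phi_maps}, and TV contraction under Markov kernels being a textbook fact. I do not anticipate any real obstacle; the main conceptual point is simply that $\Phi_{m \leftarrow m+1}$ both factors $\Phi_{m \leftarrow K}$ (by the semigroup) and sends the reference measure $\pi_{m+1}$ to $\pi_m$, so that data processing applies cleanly to their difference.
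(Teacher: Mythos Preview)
Your proposal is correct and essentially identical to the paper's proof: both factor $\Phi_{m\leftarrow K}$ through an intermediate level via the semigroup law (Fact~\ref{fact:semigroup}), use that the uniform distribution is preserved, and then invoke contractivity of total variation under stochastic maps. The only cosmetic difference is that the paper works with $\Phi_{m-1\leftarrow m}$ while you use $\Phi_{m\leftarrow m+1}$; note that both arguments actually yield $\Delta_{m-1}\leq \Delta_m$, i.e.\ the sequence is non-\emph{decreasing} in $m$ (as used in Appendix~\ref{app:subset_ptherm_bound}), so the word ``non-increasing'' in the statement is a typo.
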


\begin{proof}
We have
    \begin{align}
        \Delta_{m-1} 
        & = \frac{1}{2} \|\Phi_{m-1\leftarrow K}[p-\pi_K] \|_{\rm tv} \nonumber \\
        & = \frac{1}{2}  \|\Phi_{m-1\leftarrow m} \circ \Phi_{m\leftarrow K}[p-\pi_K] \|_{\rm tv} \nonumber \\
        & \leq \frac{1}{2}  \|\Phi_{m\leftarrow K}[p-\pi_K] \|_{\rm tv} = \Delta_m. 
    \end{align}
The first line uses the fact that $\Phi_{m\leftarrow K}[\pi_K] = \pi_{m}$ for all $m<K$; the second uses the semigroup law of $\Phi$ maps, Fact~\ref{fact:semigroup}; the last one uses monotonicity of total variation distance under stochastic maps: $D_{\rm tv}(\Phi(p),\Phi(q))  \leq D_{\rm tv}(p,q)$. 
\end{proof}

%%%%%%%%%%%%%%%%
% PROOF PROP 2 %
%%%%%%%%%%%%%%%%

\section{Proof of Proposition~\ref{proposition:subset}}
\label{app:proof2}

We begin by recalling a useful fact proven in Ref.~\cite{giurgica-tiron_pseudorandomness_2023}:
\begin{fact} For $\omega(\poly{N}) < K < o(2^N)$, we have
    \begin{equation}
        \left\| \sum_{S\in \Sigma_K} \pi_K(S)\ketbra{\psi_S}^{\otimes m} - \rho_{\rm Haar}^{(m)} \right\|_{\rm tr} \leq O\left(\frac{mK}{D}, \frac{m^2}{K} \right).
    \end{equation}
\end{fact}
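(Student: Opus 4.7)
The plan is to compute $\rho^{(m)}_{\rm unif} \equiv \sum_{S\in\Sigma_K} \pi_K(S)\ketbra{\psi_S}^{\otimes m}$ directly in the type basis of Eq.~\eqref{eq:type_basis} and compare it to $\rho^{(m)}_{\rm Haar}$, which is proportional to the identity on the symmetric subspace of $\mathcal{H}^{\otimes m}$. Both operators are invariant under the diagonal permutation action of $S_D$ on $\mathrm{Sym}^m(\mathbb{C}^D)$, which significantly constrains the structure of their difference.

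First I would derive the matrix element
\begin{equation}
\bra{\mathbf T}\rho^{(m)}_{\rm unif}\ket{\mathbf T'} = \binom{m}{\mathbf T}^{1/2}\binom{m}{\mathbf T'}^{1/2}\, K^{-m}\, \frac{\binom{K}{k}}{\binom{D}{k}},
\end{equation}
with $k = |\mathrm{supp}(\mathbf T)\cup\mathrm{supp}(\mathbf T')|$, obtained by expanding $\ket{\psi_S}^{\otimes m} = K^{-m/2}\sum_{\mathbf z\in S^m}\ket{\mathbf z}$ in types and using $\mathrm{Pr}_{S\sim\pi_K}[\{e_1,\ldots,e_k\}\subset S] = \binom{K}{k}/\binom{D}{k} = (K/D)^k[1+O(k^2/D)]$. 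On unique-type diagonal entries (where $\mathbf T = \mathbf T'$ has $k=m$ and $\binom{m}{\mathbf T}=m!$) this reduces to $\binom{D}{m}^{-1}[1-O(m^2/K)]$, matching the Haar diagonal $\binom{D+m-1}{m}^{-1} = \binom{D}{m}^{-1}[1-O(m^2/D)]$ to leading order.

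Next I would split $X\equiv\rho^{(m)}_{\rm unif} - \rho^{(m)}_{\rm Haar}$ into three pieces, bounded separately in trace norm: (i) the non-unique-type block, whose total weight equals the collision probability for $m$ draws from $S\in\Sigma_K$, yielding $O(m^2/K)$; (ii) the unique-type diagonal, whose entry-wise deviation from the Haar value is $O(m^2/K + m^2/D)$, contributing the same order in trace norm after summing $\binom{D}{m}$ entries; and (iii) the unique-type off-diagonal block, whose entries at pairs $(S',S'')\in\Sigma_m^2$ with $|S'\setminus S''|=\delta\geq 1$ scale as $(m!/D^m)(K/D)^\delta$.

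The main obstacle is bounding piece (iii) in trace norm, since naive estimates (Frobenius-vs-rank, or row-sum-vs-operator-norm) are off by polynomial factors. The cleanest route is to exploit the $S_D$-invariance: piece (iii) commutes with the Johnson-scheme shell operators $N_\delta$ on $\Sigma_m$ and is therefore diagonalized in the Johnson-scheme irreducibles indexed by $j=0,\ldots,m$. The known spectral decomposition of the $N_\delta$, expressed via (dual) Hahn polynomials, yields operator-norm estimates dominated by the $\delta=1$ shell and giving the claimed $O(mK/D)$ scaling. As a consistency check, the $m=1$ case admits an elementary direct calculation: $\sum_S\pi_K(S)\ket{\psi_S}\bra{\psi_S} - \rho^{(1)}_{\rm Haar} = \frac{K-1}{D(D-1)}(J-I)$ with $J=\sum_{z,z'}\ket{z}\bra{z'}$, which has trace norm exactly $2(K-1)/D = O(K/D)$. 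Combining all three contributions via the triangle inequality then gives $\|X\|_{\rm tr} = O(m^2/K) + O(mK/D)$.
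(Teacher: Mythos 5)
The paper does not actually prove this Fact: its entire "proof" is a citation to Theorem~1 of Ref.~\cite{giurgica-tiron_pseudorandomness_2023}. So you are not competing against a proof in this paper, but against the cited reference, and you are right to expect a genuine calculation is needed.

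Your setup is sound. The type-basis matrix element
\begin{equation*}
\bra{\mathbf T}\rho^{(m)}_{\rm unif}\ket{\mathbf T'} = \binom{m}{\mathbf T}^{1/2}\binom{m}{\mathbf T'}^{1/2} K^{-m}\,\frac{\binom{K}{k}}{\binom{D}{k}},\qquad k=|\mathrm{supp}(\mathbf T)\cup\mathrm{supp}(\mathbf T')|,
\end{equation*}
is correct (note the birthday expansion should read $(K/D)^k[1+O(k^2/K)]$, not $O(k^2/D)$, though this is immaterial since $K<D$). Your pieces (i) and (ii) are handled correctly and cleanly give $O(m^2/K)$, and the $m=1$ check that $X=\tfrac{K-1}{D(D-1)}(J-I)$ with trace norm $2(K-1)/D$ is exactly right.

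The gap is in piece (iii), which is the heart of the statement. You correctly observe that the Frobenius and row-sum bounds fail — indeed, the Cauchy–Schwarz/Frobenius route gives $\|O_{\rm off}\|_{\rm tr}\lesssim K\sqrt{m/D}$, which overshoots the target by $\sqrt{D/m}$ — and you correctly identify that the $S_D$-invariance places everything in the Bose–Mesner algebra of the Johnson scheme $J(D,m)$, so the trace norm is $\sum_j d_j|\mu_j|$ with $\mu_j=\sum_{\delta\geq 1}c_\delta E_\delta(j)$ an Eberlein (dual Hahn) polynomial combination. But at that point you simply assert that ``the known spectral decomposition\ldots yields operator-norm estimates dominated by the $\delta=1$ shell and giving the claimed $O(mK/D)$ scaling.'' That assertion is where the proof actually lives, and it is not obvious. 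In particular: (a) the $O_\delta$ for different $\delta$ act on the same eigenspaces $V_j$ with alternating signs, so ``shell dominance'' in trace norm is not just a matter of comparing $\|O_\delta\|_{\rm tr}$ for each $\delta$ separately; (b) even for a single shell, the product $d_j|E_\delta(j)|$ is not concentrated on one $j$ (your own $m=1$ computation shows $V_0$ and $V_1$ contribute equally), so one must genuinely sum over $j=0,\dots,m$ with the exact Eberlein values. A cleaner closed form is available via the inclusion operator: writing $\ket{\psi_{S,\rm uni}^{(m)}}=\binom{K}{m}^{-1/2}T\ket{S}$ with $T$ the $m$-into-$K$ inclusion matrix gives $\rho^{(m)}_{\rm unif,unique}=\binom{K}{m}^{-1}\binom{D}{K}^{-1}TT^\dagger$, whose eigenvalue on $V_j$ is $\binom{K}{m}^{-1}\binom{D}{K}^{-1}\binom{D-m-j}{K-m}\binom{K-j}{m-j}$ — from which the bound can be extracted after careful asymptotics — but carrying this out to $O(mK/D)$ with the right constants is a real calculation, not a one-liner. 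As written, piece (iii) is a plan, not a proof.
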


\begin{proof}
See Theorem~1 in Ref.~\cite{giurgica-tiron_pseudorandomness_2023}. 
\end{proof}

Using this fact, in the relevant range of $K$, we can recast the LHS of Proposition~\ref{proposition:subset} in the more convenient form
\begin{equation}
    \frac{1}{2} \left\|\sum_{S}[p(S)-\pi_K (S)]\ketbra{\psi_S}^{\otimes m}\right\|_{{\rm tr}}
    \label{eq:app_subset_target}
\end{equation}
up to small error $O(mK/D,m^2/K)$. 

Next, using the type-state basis introduced in Appendix~\ref{app:proof1}, Eq.~\eqref{eq:type_basis}, we can express the tensor power $\ket{\psi_S}^{\otimes m}$ as
\begin{equation}
    \ket{\psi_S}^{\otimes m}=\frac{1}{K^{m/2}}\sum_{\mathbf{T}\in \mathcal{T}^m_S}\binom{m}{\mathbf{T}}^{1/2}\ket{\mathbf{T}}.
\end{equation}
Here we use $\mathcal{T}_S^m$ to represent all types of $m$-tuples that have nonzero components only for states in $S$. 
We also define pure {unique-type} states by
\begin{align}
\ket{\psi_{S,{\rm uni}}^{(m)}} =\binom{K}{m}^{-1/2}\sum_{{S'\in \Sigma_m: \, S' \subset S}} \ket{S'},
\label{eq:uniquetypepurestates}
\end{align}
where $\ket{S'}$ stands for the type state $\ket{\mathbf{T}}$ where $\mathbf{T}$ is the unique type whose 1's are on the states in $S'$ ($T_z=1$ iff $z\in S'$, $T_z = 0$ otherwise).
As also noted in Appendix~\ref{app:proof1}, in the limit of $D\gg m$ most types are unique.
This is helpful as it allows us to replace the state $\ket{\psi_S}$ in Eq.~\eqref{eq:app_subset_target} by its unique-type counterpart, $\ket{\psi_{S,{\rm uni}}}$, with a small error. To show this, we first need a simple fact about the trace distance between pure states:

\begin{fact} \label{fact:tracedistance_purestates}
For any two pure states $\ket{\phi}$, $\ket{\chi}$ we have
    \begin{equation}
        \| \ketbra{\phi} - \ketbra{\chi}\|_{\rm tr} = 2\sqrt{1-|\braket{\phi}{\chi}|^2}.
    \end{equation}
\end{fact}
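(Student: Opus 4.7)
The plan is to exploit the fact that the difference $\Delta = \ketbra{\phi} - \ketbra{\chi}$ is a Hermitian operator supported on the (at most) two-dimensional subspace $V = \mathrm{span}(\ket{\phi},\ket{\chi})$, so it has at most two nonzero eigenvalues. First I would note that $\mathrm{Tr}(\Delta) = 0$, which forces the two nonzero eigenvalues to be opposites of each other, call them $\lambda$ and $-\lambda$. Since $\Delta$ is Hermitian, the trace norm reduces to the sum of absolute values of eigenvalues, giving $\|\Delta\|_{\rm tr} = 2|\lambda|$.

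It then remains to determine $|\lambda|$, which I would do by computing $\mathrm{Tr}(\Delta^2) = 2\lambda^2$. Expanding
\begin{equation}
\Delta^2 = \ketbra{\phi} + \ketbra{\chi} - \braket{\phi}{\chi}\,\ket{\phi}\!\bra{\chi} - \braket{\chi}{\phi}\,\ket{\chi}\!\bra{\phi}
\end{equation}
and taking the trace yields $\mathrm{Tr}(\Delta^2) = 2 - 2|\braket{\phi}{\chi}|^2$. Hence $\lambda^2 = 1 - |\braket{\phi}{\chi}|^2$, and substituting back gives $\|\Delta\|_{\rm tr} = 2\sqrt{1 - |\braket{\phi}{\chi}|^2}$, as claimed.

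There is no real obstacle here; the only subtlety is the degenerate case $|\braket{\phi}{\chi}| = 1$, where $V$ is one-dimensional and $\Delta = 0$, but the formula still evaluates correctly to $0$. An alternative route, if one prefers a basis-explicit argument, is to pick an orthonormal basis $\{\ket{e_1},\ket{e_2}\}$ of $V$ with $\ket{\phi} = \ket{e_1}$ and $\ket{\chi} = \alpha\ket{e_1} + \beta\ket{e_2}$, write the $2\times 2$ matrix of $\Delta$ in this basis, and diagonalize directly; this yields the same eigenvalues $\pm\sqrt{1-|\alpha|^2} = \pm\sqrt{1-|\braket{\phi}{\chi}|^2}$.
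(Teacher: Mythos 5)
Your proof is correct. Note that the paper does not actually supply its own argument for this fact: it simply cites Fact 2 of Ref.~\cite{aaronson_quantum_2023} with $\rho_1 = \ketbra{\phi}$, $\rho_2 = \ketbra{\chi}$. What you have written is a clean, self-contained derivation of that standard identity: $\Delta = \ketbra{\phi} - \ketbra{\chi}$ is Hermitian, traceless, and of rank at most two (being a difference of two rank-one projectors), so its nonzero eigenvalues come in a pair $\pm\lambda$, giving $\|\Delta\|_{\rm tr} = 2|\lambda|$; then $\mathrm{Tr}(\Delta^2) = 2\lambda^2 = 2 - 2|\braket{\phi}{\chi}|^2$ fixes $\lambda$. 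The expansion of $\Delta^2$ and the trace computation check out, and your remark about the degenerate case $|\braket{\phi}{\chi}|=1$ (where $\Delta = 0$) shows the formula remains valid there. No gaps; this is a perfectly good replacement for the citation if one wants the argument inline.
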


\begin{proof}
    See Fact 2 in Ref.~\cite{aaronson_quantum_2023} (set $\rho_1 = \ketbra{\phi}$, $\rho_2 = \ketbra{\chi}$).
\end{proof}

We can now show the following:
\begin{lemma}[unique types suffice] 
We have
    \begin{equation}
        \left\|\ketbra{\psi_S}^{\otimes m}-\ketbra{\psi_{S,{\rm uni}}^{(m)}}\right\|_{{\rm tr}}
        \leq O\left( \frac{m}{\sqrt{K}} \right). \label{eq:app_replace_S_unique}
    \end{equation}
\end{lemma}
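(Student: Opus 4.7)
The plan is to invoke Fact~\ref{fact:tracedistance_purestates}, which reduces the trace distance between two pure states to $2\sqrt{1-|\braket{\phi}{\chi}|^2}$. The problem therefore boils down to showing $|\braket{\psi_S^{\otimes m}}{\psi_{S,\mathrm{uni}}^{(m)}}|^2 \geq 1 - O(m^2/K)$, after which taking the square root yields the desired $O(m/\sqrt{K})$ bound.

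To compute the overlap, I would expand $\ket{\psi_S}^{\otimes m}$ in the orthonormal type basis of Eq.~\eqref{eq:type_basis} and note that $\ket{\psi_{S,\mathrm{uni}}^{(m)}}$ (defined in Eq.~\eqref{eq:uniquetypepurestates}) is supported only on unique types $S' \in \Sigma_m$ with $S' \subset S$. Only these unique-type components of $\ket{\psi_S}^{\otimes m}$ contribute, and for a unique type the multinomial coefficient collapses to $\binom{m}{\mathbf{T}_{S'}} = m!$. Counting the number of such $S'$ (which is $\binom{K}{m}$) and collecting normalization factors gives
\begin{equation}
|\braket{\psi_S^{\otimes m}}{\psi_{S,\mathrm{uni}}^{(m)}}|^2 = \frac{\binom{K}{m}\, m!}{K^m} = \prod_{j=0}^{m-1}\left(1-\frac{j}{K}\right).
\end{equation}
This is exactly the ``birthday'' probability that $m$ i.i.d.\ uniform draws from a size-$K$ alphabet are all distinct, which has a transparent interpretation: the weight of $\ket{\psi_S}^{\otimes m}$ on unique types is precisely the probability of no repetitions, and the residue on non-unique types accounts for the gap from $1$.

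Applying the Weierstrass inequality $\prod_j(1-x_j)\geq 1-\sum_j x_j$ for $x_j = j/K \in [0,1]$ gives $|\braket{\psi_S^{\otimes m}}{\psi_{S,\mathrm{uni}}^{(m)}}|^2 \geq 1 - m(m-1)/(2K)$. Substituting back into Fact~\ref{fact:tracedistance_purestates} yields
\begin{equation}
\left\|\ketbra{\psi_S}^{\otimes m}-\ketbra{\psi_{S,\mathrm{uni}}^{(m)}}\right\|_{\mathrm{tr}} \leq 2\sqrt{\frac{m(m-1)}{2K}} = O\!\left(\frac{m}{\sqrt{K}}\right),
\end{equation}
which is the statement of the lemma. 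There is no serious obstacle here---the computation is a clean combinatorial identity combined with an elementary inequality---so the main care needed is to track which type-basis coefficients actually contribute to the overlap and to verify the normalization of $\ket{\psi_{S,\mathrm{uni}}^{(m)}}$ relative to $\ket{\psi_S}^{\otimes m}$ in the chosen basis.
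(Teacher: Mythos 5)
Your proof is correct and follows essentially the same route as the paper: apply Fact~\ref{fact:tracedistance_purestates}, expand both states in the type basis to compute the overlap as $|\braket{\psi_S^{\otimes m}}{\psi_{S,\mathrm{uni}}^{(m)}}|^2 = m!\binom{K}{m}/K^m$, and bound the deficit from~$1$. The only difference is cosmetic but worth noting: the paper closes the argument by citing the ``birthday asymptotics'' estimate $K!/(K-m)! = K^m\bigl(1+O(m^2/K)\bigr)$ from Ref.~\cite{giurgica-tiron_pseudorandomness_2023}, whereas you make the step self-contained by rewriting the overlap as $\prod_{j=0}^{m-1}(1-j/K)$ and invoking the Weierstrass inequality, which gives the explicit bound $1-m(m-1)/(2K)$ and requires only $m\leq K+1$, comfortably within the lemma's regime $m\leq K/2$. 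Both are valid; yours is slightly more elementary and gives an explicit constant.
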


\begin{proof}
    Using Fact~\ref{fact:tracedistance_purestates}, the LHS reads
    \begin{align}
        2\sqrt{1-\left|\braket{\psi_S^{\otimes m}}{\psi_{S,{\rm uni}}^{(m)}}\right|^2}\nonumber.
    \end{align}
    Writing both $\ket{\psi_S^{\otimes m}}$ and $\ket{\psi_{S,{\rm uni}}^{(m)}}$ in their type basis representation, we get the inner product
    \begin{align}
        \braket{\psi_S^{\otimes m}}{\psi_{S,{\rm uni}}^{(m)}}
        & = K^{-m/2} \binom{K}{m}^{-1/2} 
        %\nonumber \\
        %& \qquad \times 
        \sum_{\mathbf T \in \mathcal{T}_S^m} \binom{m}{\mathbf T}^{1/2} \delta(\mathbf T \text{ is unique}) \nonumber \\
        & = (m!)^{1/2} K^{-m/2} \binom{K}{m}^{1/2}.
    \end{align}
    Using some straightforward algebra and the `birthday asymptotics' $\frac{K!}{(K-m)!} = K^m \left( 1 + O(m^2/K)\right)$ (see e.g. Fact 1 in Ref.~\cite{giurgica-tiron_pseudorandomness_2023}) we finally get
    \begin{align}
        \sqrt{1-\left|\braket{\psi_S^{\otimes m}}{\psi_{S,{\rm uni}}^{(m)}}\right|^2}
        & = O\left(\frac{m}{\sqrt K}\right).
    \end{align}
\end{proof}

Putting it all together, we can rewrite Eq.~\eqref{eq:app_subset_target} as
\begin{align}
    \frac{1}{2} \left\| \sum_{S \in \Sigma_K} (p(S)-\pi_K(S)) \ketbra{\psi_{S,{\rm uni}}^{(m)}} \right\|_{\rm tr} 
\end{align}
now up to error $O(m/\sqrt{K})$. Finally, plugging in the definition of unique-type pure states, Eq.~\eqref{eq:uniquetypepurestates}, yields (up to the $1/2$ prefactor)
\begin{align}
    %\left\|\rho_{\mathcal{E}_p'}^{(m)} - \rho_{\rm Haar}^{(m)} \right\|_{\rm tr} & = 
    & \left\| \sum_{S \in \Sigma_K} (p(S)-\pi_K(S)) \binom{K}{m}^{-1} \sum_{\substack{S',S''\in \Sigma_m:\\ S',S''\subset S}} \ketbra{S'}{S''} \right\|_{\rm tr} \nonumber \\
    %= & \left\| \sum_{S',S''\in \Sigma_m} \ketbra{S'}{S''}  \left(\mathcal{M}_p^{(m)}\right)_{S',S''} \right\|_{\rm tr} 
    = & \left\| \sum_{S',S''\in \Sigma_m} \ketbra{S'}{S''} \left. \binom{K}{m}^{-1} \sum_{\substack{S\in\Sigma_K:\\ S'\cup S''\subset S}} (p(S)-\pi_K(S)) \right. \right\|_{\rm tr}. 
\end{align}
We can recognize the coefficient of $\ketbra{S'}{S''}$ as 
\begin{equation}
    \binom{K}{m}^{-1} \binom{K}{m+\delta} \Phi_{m+\delta\leftarrow K}[p-\pi_K](S'\cup S'')
\end{equation}
which is the matrix $\mathcal{M}_p^{(m)}$ in Proposition~\ref{proposition:subset}.

%%%%%%%%%%%%%%%%
% HYPERCUBE RW %
%%%%%%%%%%%%%%%%

\section{Random walk on the hypercube} 

\subsection{Single particle} \label{app:hypercube_rw}

In this Appendix we review, for completeness, the solution of the simple random walk on the hypercube $\mathbb{Z}_2^N$~\cite{diaconis_asymptotic_1990}. We label each vertex of the hypercube by a bitstring $\mathbf{z} \in \{0,1\}^N$; two vertices $\mathbf x,\mathbf y$ are neighbors iff the corresponding bitstrings have Hamming distance 1: $| \mathbf{x}\oplus \mathbf{y}'| = 1$ ($\oplus$ is sum modulo 2). 
In each time step the walker hops randomly in any one of the $N$ directions, represented by the Markov chain transition matrix
\begin{equation}
    \Gamma_{\mathbf x, \mathbf y} = \frac{1}{N} \delta_{|\mathbf x \oplus \mathbf y|, 1}.
    \label{eq:srw_hypercube_gamma}
\end{equation}
Note that as written, the walk is not aperiodic: the parity of the Hamming weight alternates in each time step (this is a general feature of bipartite graphs). This problem can be eliminated by allowing idling, i.e. replacing $\Gamma_{\mathbf x,y} \mapsto p \Gamma_{\mathbf x,y} + (1-p) \delta_{\mathbf x, \mathbf y}$ (the walker takes a step with probability $p$, remains in place otherwise). 
This does not change the eigenfunctions and changes the eigenvalues in a trivial way, $\lambda \mapsto p \lambda + (1-p)$. 

The transition matrix in Eq.~\eqref{eq:srw_hypercube_gamma} is diagonalized via the Hadamard transform, $H_{\mathbf a, \mathbf x} \equiv \frac{1}{\sqrt{D}} (-1)^{\mathbf a \cdot \mathbf x}$:
\begin{align} 
\tilde{\Gamma}_{\mathbf a,\mathbf b} 
& \equiv \frac{1}{D} \sum_{\mathbf x, \mathbf y} \Gamma_{\mathbf x, \mathbf{y}} (-1)^{\mathbf a\cdot \mathbf x + \mathbf b\cdot \mathbf y} \nonumber \\
& = \frac{1}{ND} \sum_{\mathbf x, \mathbf r} \delta_{|\mathbf r|-1} (-1)^{(\mathbf a\oplus \mathbf b)\cdot \mathbf x + \mathbf b\cdot \mathbf r} \nonumber \\
& = \delta_{\mathbf a, \mathbf b} \frac{1}{N} \sum_{i=1}^{N} (-1)^{b_i} \equiv \delta_{\mathbf a,\mathbf b} \lambda_{\mathbf a}.
\end{align}
Here we have introduced $\mathbf r\equiv \mathbf x \oplus \mathbf y$ and used $\sum_{\mathbf x} (-1)^{\mathbf x\cdot \mathbf c} = D\delta_{\mathbf c,\boldsymbol{0}}$. 
The eigenvalue associated to ``wave vector'' $\mathbf a\in \{0,1\}^N$ is 
\begin{equation}
    \lambda_{\mathbf a} = \frac{1}{N}\sum_{i=1}^{N} (1-2a_i) = 1-2\frac{|\mathbf a|}{N},
    \label{eq:app_srw_eigenvalues}
\end{equation}
which is only a function of Hamming weight $|\mathbf a|$. 

There is a non-degenerate $\lambda_0 = 1$ eigenvalue, associated to $\mathbf a = \boldsymbol{0}$ (the uniform state), then a gap of magnitude $2/N$, and then an $N$-fold degenerate eigenvalue $\lambda_1 = 1-2/N$. 
The relaxation time is therefore $t_{\rm rel} = 1/(1-\lambda_1) = N/2$.
One can also identify the mixing time exactly and prove that the problem (with idling) features a cutoff~\cite{diaconis_asymptotic_1990}. A simple way to estimate the mixing time is to 
require that $N\lambda_1^t$ be small, which gives $t_{\rm mix} \sim N\log(N)$ (the correct scaling up to constant factors).

The all-to-all gate set of Sec.~\ref{sec:non-local_gate} induces a simple random walk on the hypercube with idling parameter $p = 2^{1-N} = 2/D$, giving rescaled eigenvalues $\lambda_1 = 1-\frac{2}{N} \mapsto \frac{2}{D}\lambda_1 + 1 - \frac{2}{D} = 1-\frac{4}{DN}$, hence the relaxation time $t_{\rm rel} = ND/4$. 
Similarly the local gate set of Sec.~\ref{sec:local-circuit} induces a simple random walk on the hypercube with idling parameter $p = 1/4$, giving $\lambda_1 = 1-\frac{2}{N} \mapsto \frac{1}{4}\lambda_1 + \frac{3}{4} = 1-\frac{1}{2N}$, hence the relaxation time $t_{\rm rel} = 2N$. 

\subsection{Many particles: multipole eigenmodes} \label{app:hypercube_rw_multipole}

The eigenvalues $\lambda_{\mathbf a}$, Eq.~\eqref{eq:app_srw_eigenvalues}, also appear in the spectra of the Markov chains with $m>1$. 
We can construct the corresponding eigenmodes directly. Let us introduce the functions
\begin{align}
    f_{\mathbf a}(S) & = \frac{n_S(Z_{\mathbf a} = +1) - n_S(Z_{\mathbf a} = -1)}{m},
    \label{eq:generalized_eigenvector}
\end{align}
where for each bitstring $\mathbf a \in \{0,1\}^N$ we defined $Z_{\mathbf a} = \bigotimes_{i=1}^N Z_i^{a_i}$, and $n_S(Z_{\mathbf a} = \pm 1)$ is the number of elements $\mathbf z\in S$ such that $\bra{\mathbf z} Z_{\mathbf a} \ket{\mathbf z} = \pm 1$. 
The action of the transition matrix $\Gamma_{S,S'}$ on these functions is:
\begin{align}
    \sum_{S'} \Gamma_{S,S'} f_{\mathbf a}(S')
    & = \sum_{S'} \frac{| \{ u\in\mathcal G: u(S)=S'\}| }{|\mathcal G|} f_{\mathbf a}(S') \nonumber \\
    & = \sum_{u\in \mathcal G} \frac{1}{|\mathcal G|} f_{\mathbf a}(u (S) ) \label{eq:app_sum_uG}
\end{align}
Here we have used the definition of the transition probability $\Gamma_{S,S'}$ as the fraction of gates $u$ in the gate set $\mathcal{G}$ that maps subset $S$ to $S'$, and then changed variables via $S' \equiv u(S)$. Summing over $u$ over-counts each subset $S'$ in the original sum by a factor of $| \{ u'\in\mathcal G: u'(S)=u(S)\}|$, which precisely cancels the numerator in $\Gamma_{S,S'}$. 
Now we note that 
\begin{equation}
f_{\mathbf a}(u(S)) = \frac{2}{m}n_{u(S)}(Z_{\mathbf a}=1)-1 = \frac{2}{m} n_{S}(uZ_{\mathbf a}u=1)-1.
\end{equation}
Let $u\equiv u_{i,c_-,c_+}$ ($i$ is the location of the target, $c_\pm$ are the control bits for sites $i\pm 1$). If $a_i=0$, then $u Z_{\mathbf a}u= u^2 Z_{\mathbf a} = Z_{\mathbf a}$.
If $a_i=1$, we define $Z_{\mathbf a} \equiv Z_{\mathbf b} Z_i$, so that $uZ_{\mathbf a}u = Z_{\mathbf b} uZ_iu$. The operator $uZ_i u$ is still diagonal in the computational basis. A bitstring has $uZ_{\mathbf a}u=+1$ if and only if $Z_{\mathbf b} = uZ_iu$. 
Any given bitstring $\mathbf{z}\in S$ will change its value of $Z_i$ under gate $u_{i,c_-,c_+}$ for exactly one out of four possible choices for the control bits $c_\pm$ (namely $c_\pm = z_{i\pm 1}$). It follows that 
\begin{align}
    & \sum_{c_\pm = 0,1} n_S(Z_{\mathbf b} = u_{i,c_-,c_+}Z_iu_{i,c_-,c_+}=1 ) 
    %\nonumber \\
    %& \quad 
    = 3n_S(Z_{\mathbf b} = Z_i=1 ) + n_S(Z_{\mathbf b} = -Z_i=1 )
\end{align}
and
\begin{align}
    & \sum_{c_\pm = 0,1} n_S(Z_{\mathbf b} = u_{i,c_-,c_+}Z_iu_{i,c_-,c_+}=-1 ) 
    % \nonumber \\
    % & \quad 
    = 3n_S(Z_{\mathbf b} = Z_i=-1 ) + n_S(Z_{\mathbf b} = -Z_i=-1 ).
\end{align}
Adding the two gives
\begin{align}
    \sum_{c_\pm = 0,1} n_S(u_{i,c_-,c_+}Z_{\mathbf a} u_{i,c_-,c_+}=1 ) 
    % \nonumber \\ 
    % & \quad 
    & = 3n_S(Z_{\mathbf a} = 1 ) + n_S(Z_{\mathbf a} = -1 )
    \nonumber \\
    & = m+2n_S(Z_{\mathbf a} = 1 ),
\end{align}
where we used the fact that $n_S(Z_{\mathbf a}=1) + n_S(Z_{\mathbf a}=-1) = |S| = m$. 
Therefore, when we sum over the control bits $c_\pm \in \{0,1\}$ as part of our sum over $u\in \mathcal{G}$ in Eq.~\eqref{eq:app_sum_uG}, if $a_i = 1$ we find 
\begin{align}
    \sum_{c_\pm = \pm 1} f_{\mathbf a} (u_{i,c_-,c_+}(S))
    & = \frac{2}{m} [m+2n_S(Z_{\mathbf a}=1)]-4 
    %\nonumber \\ & 
    = 2f_{\mathbf a} (S)
\end{align}
Combining this with the result for $a_i = 0$ gives
\begin{equation}
    \sum_{c_\pm = \pm 1} f_{\mathbf a}(u_{i,c_-,c_+}(S)) = 2(2-\delta_{a_i,1}) f_{\mathbf a}(S).
\end{equation}
Finally, plugging this into Eq.~\eqref{eq:app_sum_uG}, we obtain
\begin{align}
    \frac{1}{4N} \sum_{u\in \mathcal{G}} f_{\mathbf a}(u(S))
    & = \frac{1}{4N} \sum_{i=1}^N \sum_{c_\pm = \pm 1} f_{\mathbf a}(u_{i,c_-,c_+} (S)) \nonumber \\
    & = \frac{1}{N} \sum_{i=1}^N \frac{4-2\delta_{a_i,1}}{4} f_{\mathbf a}(S) \nonumber \\
    & = \left(1 - \frac{|\mathbf a|}{2N} \right) f_{\mathbf a}(S) \nonumber \\
    & = \lambda_{\mathbf a} f_{\mathbf a}(S),
\end{align}
where we recognize the eigenvalue $\lambda_{\mathbf a}$ of the simple random walk (with $3/4$ idling probability). 

These eigenmodes for the $m$-particle problem have an intuitive interpretation.
$|\mathbf a|=0$ represents a monopole moment (i.e. total particle number), which is conserved;
$|\mathbf a|=1$ yields the $N$ components of the dipole moment, or center of mass position, which are the slowest decaying modes in the problem;
$|\mathbf a|=2$ yields the $\binom{N}{2}$ components of the quadrupole moment, which decay more quickly; and so on for higher multipole moments. 

For $m=1$ the $2^N$ $\{f_{\mathbf a}\}$ modes exhaust the spectrum, whereas for $m\geq 2$ they make up only a small part. Still, they are very important, since they describe the decay of circuit-averaged expectation values $\overline{\langle Z_{\mathbf a} \rangle}$. 
Indeed, we have 
\begin{align}
    \overline{\langle Z_{\mathbf a} (t)\rangle} 
    & = \sum_S p_t(S) \bra{\psi_S} Z_{\mathbf a} \ket{\psi_S} 
    = (p_t|f_{\mathbf a}) 
    = (p_0|\Gamma^t|f_{\mathbf a}) 
    = \left(1-\frac{|\mathbf a|}{2N}\right)^t (p_0|f_{\mathbf a}),
    \label{eq:Z_ev_decay}
\end{align}
where we used the definition $\bra{\psi_S}Z_{\mathbf a} \ket{\psi_S} = f_{\mathbf a}(S)$, introduced the inner product $(a|b) = \sum_S a(S) b(S)$ between real functions on $\Sigma_m$, and used the Hermiticity of the transition matrix $\Gamma$ (reversibility of the Markov chain) to express the time-evolved probability distribution $(p_t| = (p_0|\Gamma^t$. 

It follows that any eigenmode $h(S)$ of the $m$-particle random walk whose eigenvalue is {\it not} in the single-particle spectrum does not show up in circuit-averaged expectation values $\overline{\langle Z_{\mathbf a}\rangle}$:
\begin{align}
    \sum_S h(S) \bra{\psi_S} Z_{\mathbf a} \ket{\psi_S}
    = \sum_S h(S) f_{\mathbf a}(S) = (h|f_{\mathbf a}) = 0,
\end{align}
due to orthogonality of the eigenmodes (Hermiticity of $\Gamma$). 
Such eigenmodes instead can appear in correlation functions across circuit realizations, e.g. $\overline{\langle Z_{\mathbf a} \rangle \langle Z_{\mathbf b} \rangle}$. 

\subsection{Two particles: relative coordinate eigenmodes} \label{app:hypercube_rw_relative}

Here we provide more details on eigenmodes of the two-particle problem that depend on subset $S = \{\mathbf z, \mathbf z'\}$ only through the relative coordinate $\mathbf r = \mathbf z \oplus \mathbf z'$. 
Let us consider how $\mathbf r$ might change under the action of our local automaton gate set $\mathcal G = \{u_{iab}\}$. There are two cases to consider:
\begin{itemize}
    \item If $r_{i-1} = r_{i+1} = 0$ (i.e., $z_{i-1} = z'_{i-1}$ and $z_{i+1} = z'_{i+1}$), then either both $z_i$ and $z_i'$ flip, or neither one flips; either way, $\mathbf r$ is unchanged.
    \item Otherwise, there is a $1/4$ probability (based on randomly drawn control bits $a,b$) that $u_{iab}$ flips $z_i$ but not $z_i'$, a $1/4$ probability that it flips $z_i'$ but not $z_i$, and a $1/2$ probability that it flips neither. In all, there is a $1/2$ probability that $r_i$ flips.
\end{itemize}
The rules above describe a random walk on $\mathbb{Z}_2^N$ based on ``activated'' bit flips, where $r_i$ can flip only if one of its neighbors $r_{i\pm 1}$ is 1. Clearly $\mathbf r = \boldsymbol{0}$ is a steady state, but it does not describe a valid element of $\Sigma_2$ (it gives $\mathbf z = \mathbf z'$). One can check that the restriction to $\mathbb{Z}_2^N\setminus \{\boldsymbol{0}\}$ is irreducible, aperiodic (since lazy moves are allowed) and reversible, thus it has the uniform distribution as unique steady state. 

One can write down the transition matrix $\Gamma|_R$ for the relative coordinate as a quantum Hamiltonian:
\begin{align}
    \Gamma|_R 
    & = \frac{1}{N} \sum_{i=1}^N \left[ \left(I - \Pi_{i} \right) \frac{I+X_i}{2} 
    + \Pi_{i}\right] \equiv I - \frac{H}{N}, \label{eq:pxp_like_H} \\
    \Pi_{i} & = \ketbra{0}_{i-1} \otimes \ketbra{0}_{i+1} \otimes I_{\text{not } i\pm 1}. 
\end{align}
The linear-in-$N$ relaxation time corresponds to the Hamiltonian $H$ being gapped. 
The Hamiltonian $H$ clearly has two ground states $\ket{0}^{\otimes N}$ (corresponding to the trivial steady state $\mathbf r = \boldsymbol{0}$) and $\ket{+}^{\otimes N}$ (corresponding to the uniform distribution); both states are annihilated by each individual term in $H$, so $H$ is frustration-free. One can use criteria for the existence of gaps in frustration-free, translationally invariant Hamiltonians~\cite{gosset_local_2016} to establish the presence of a gap as $N\to\infty$ from numerical evaluation of the gap in finite-sized systems. 
Such an approach reveals that $H$ is indeed gapped as $N\to\infty$~\cite{mittal_private_2024}. Note that while the criterion from Ref.~\cite{gosset_local_2016} is formulated for two-body Hamiltonians, it can be shown to apply for the three-body Hamiltonian $H$ in Eq.~\eqref{eq:pxp_like_H} because all next-nearest-neighbor terms commute~\cite{mittal_private_2024}. This establishes the linear-in-$N$ relaxation time for the relative-coordinate Markov chain $\Gamma|_R$, consistent with numerical observations in Fig.~\ref{fig:spectra}.

%%%%%%%%%%%%%%%%%
% LOCAL GATESET %
%%%%%%%%%%%%%%%%%

\section{Properties of the local automaton gate set}
\label{app:local_gateset}

In this Appendix we give more details on the local automaton gate set $\mathcal G$ used in Sec.~\ref{sec:local-circuit}. 
We recall that the gate set is defined as $\mathcal{G}=\{u_{iab} = \textsf{C}_{i-1,a}\textsf{C}_{i+1,b} {X}_i\}$, with $i\in [N]$, $a,b\in\{0,1\}$, and periodic boundary conditions (i.e., $i\pm 1$ are taken modulo $N$). $\textsf{C}_{j,a}$ denotes a control that is activated if bit $j$ is in state $a$. 
Each time step consists of drawing a gate $u_{iab}\in\mathcal{G}$ uniformly at random and applying it to the system. This defines a random walk on the permutation group $\perm{D}$ with transition matrix
\begin{equation}
    \Gamma_{\sigma,\sigma'} = \frac{1}{|\mathcal G|}\sum_{u\in\mathcal{G}} \delta_{\sigma', u\circ \sigma},
\end{equation}
where with slight abuse of notation we treat the gate $u$ as a permutation of the computational basis.
Through the natural action of permutations on subsets, 
$$
S \xrightarrow{\sigma} \{\sigma(i):\, i\in S\} \equiv \sigma(S),
$$ 
this also defines a Markov chain on the subset spaces $\Sigma_m$ for each $m$, with transition matrices
\begin{equation}
    \Gamma_{S,S'} = \frac{1}{|\mathcal G|} \sum_{u\in \mathcal G} \delta_{S', u(S)}
\end{equation}
(the dependence on $m$ is left implicit).

To show that $\Gamma_{S,S'}$ equilibrates to the unique steady state $\pi_m$ (the uniform distribution on $\Sigma_m$), we need the following three facts to hold:

\begin{lemma}[reversibility]
    The Markov chain $\Gamma_{S,S'}$ on $\Sigma_m$ is reversible for all $m$.
\end{lemma}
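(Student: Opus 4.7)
The plan is to reduce reversibility to the single observation that every element of the gate set $\mathcal{G}$ is an involution, and then note that this property is inherited by the induced action on subsets.

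First I would verify that each $u_{iab}\in \mathcal{G}$ satisfies $u_{iab}^2 = I$. This follows because, up to the bit flips on the control sites (which only change which control pattern triggers the gate), $u_{iab}$ is a Toffoli gate with target on qubit $i$; applying it twice flips qubit $i$ back to its original value whenever the controls match, and does nothing otherwise. So $u_{iab}$ is a permutation of $\{0,1\}^N$ of order at most $2$.

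Next I would lift this to the action on $\Sigma_m$. Since the map $S \mapsto u(S)$ is just the image of $S$ under the permutation $u$, and $u\circ u = \mathrm{id}$, we immediately get $u(u(S)) = S$ for every subset $S \in \Sigma_m$. Therefore $u(S) = S'$ if and only if $u(S') = S$, so
\begin{equation}
\delta_{S', u(S)} = \delta_{S, u(S')} \quad \text{for every } u\in \mathcal{G}.
\end{equation}

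Summing over $u\in\mathcal{G}$ and dividing by $|\mathcal{G}|$ gives
\begin{equation}
\Gamma_{S,S'} = \frac{1}{|\mathcal{G}|}\sum_{u\in\mathcal{G}} \delta_{S',u(S)} = \frac{1}{|\mathcal{G}|}\sum_{u\in\mathcal{G}} \delta_{S,u(S')} = \Gamma_{S',S},
\end{equation}
which is reversibility with respect to the uniform measure on $\Sigma_m$. I do not anticipate any real obstacle here; the whole content of the lemma is the involution property of $\mathcal{G}$, which was essentially built into the design of the gate set (as already foreshadowed in Sec.~\ref{sec:automaton}, where requiring $u^2 = I$ was listed as a standard way to enforce condition (i)).
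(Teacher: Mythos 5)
Your proof is correct and follows exactly the same route as the paper: both arguments reduce reversibility to the involution property $u^2 = I$ of the gate set and then observe that this makes the transition matrix manifestly symmetric. The only difference is that you spell out a few intermediate steps (why each $u_{iab}$ is an involution, why the action on subsets inherits this) that the paper leaves implicit.
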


\begin{proof}
    Since the gate set $\mathcal{G}$ is such that $u^2 = I$ for all $u\in\mathcal{G}$, we have $u(S)=S'$ if and only if $u(S') = S$. Thus $\Gamma_{S,S'} = \Gamma_{S',S}$ for all $S,S'\in\Sigma_m$. 
\end{proof}

\begin{lemma}[irreducibility]
    The Markov chain $\Gamma_{S,S'}$ on $\Sigma_m$ is irreducible for all $m \leq D - 2$. 
\end{lemma}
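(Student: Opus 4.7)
The plan is to establish irreducibility by showing that the subgroup $G := \langle \mathcal{G}\rangle$ of $\perm{D}$ (acting on $\{0,1\}^N$) already contains the alternating group $A_D$, and then invoking the homogeneity of $A_D$ on $m$-subsets. More precisely, given $S,S'\in \Sigma_m$, the lemma asks for a word in $\mathcal{G}$ mapping $S$ to $S'$; if $A_D \subseteq G$, then any bijection $S \to S'$ extends to some $\sigma_0 \in \perm{D}$ with $\sigma_0(S)=S'$, and if $\sigma_0$ happens to be odd one fixes its parity by post-composing with a transposition inside $S'^c$, which exists since $|S'^c|=D-m\geq 2$ by the hypothesis $m\leq D-2$. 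Hence some $\sigma\in A_D\subseteq G$ maps $S$ to $S'$, which proves irreducibility.

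The construction of $A_D$ inside $G$ proceeds in three short moves. First, for each site $i$ the four gates $\{u_{i,a,b}\}_{a,b\in\{0,1\}}$ pairwise commute: they share the same target $i$, have pairwise-disjoint control conditions, and leave the control qubits $i\pm 1$ invariant. Their product is therefore the unconditional bit flip $X_i$, placing $X_i \in G$ for every $i$. Second, partial products such as $u_{i,0,0}\, u_{i,0,1}$ realize a flip of qubit $i$ conditioned only on $z_{i-1}=0$, and conjugating by the $X_j\in G$ yields genuine nearest-neighbor CNOT gates $\CX$; three CNOTs assemble an adjacent SWAP, and hence any qubit permutation. Third, routing arbitrary triples of qubits into three consecutive positions via SWAPs, then applying a $u_{i,a,b}$ (with control polarities normalized by $X$'s), realizes a $\CCX$ gate between any chosen triple of qubits.

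I then invoke the classical fact that on $N\geq 4$ qubits the group generated by $\CCX$ and $X$ equals $A_D$. The inclusion $G\subseteq A_D$ is immediate: each $u_{i,a,b}$ implements $2^{N-3}$ disjoint transpositions of basis states, an even number of transpositions. The reverse inclusion reduces to showing that $\CCX + X$ can produce any 3-cycle on $\{0,1\}^N$, after which 3-cycles generate $A_D$ by a standard group-theoretic argument. The very small cases $N\leq 3$ (so $D\leq 8$) can be verified directly and in any event fall outside the regime of interest.

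The main obstacle is the last step: extracting a single 3-cycle of basis states from $\CCX$, which as a permutation of $\{0,1\}^N$ implements $2^{N-3}$ parallel transpositions rather than one. The standard remedy uses auxiliary controls ``borrowed'' from otherwise-unused qubits to confine the net action to a three-element subset of the computational basis, in the spirit of the decomposition of multi-controlled Toffoli gates; spelling this out cleanly is the one step that requires genuine care. Everything preceding it---commutativity of the $u_{i,a,b}$ at fixed $i$, the CNOT/SWAP construction, and qubit rerouting---is routine bookkeeping.
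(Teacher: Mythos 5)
Your proposal is correct and takes essentially the same route as the paper: reduce to the classical fact that Toffoli and NOT gates generate the alternating group $A_D$, realize those generators from $\mathcal{G}$ via the product $X_i = \prod_{a,b} u_{i,a,b}$, then local CNOTs and SWAPs and qubit rerouting, and finally use $m \leq D-2$ to absorb a parity-fixing transposition supported outside the subset. The only difference is cosmetic: the paper cites the Toffoli$+$NOT $=A_D$ fact outright, whereas you additionally sketch (and correctly flag as the one delicate step) the ``borrowed control'' argument behind it.
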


\begin{proof}
    It is well known~\cite{shende_synthesis_2003,aaronson_classification_2015} that the Toffoli ($\CCX$) and NOT ($X$) gates are ``almost'' universal for reversible classical computation, in the sense that they generate the subgroup of $\perm{D}$ made of even permutations (permutations that can be decomposed into an even number of transpositions). 
    Our gate set includes Toffoli gates between three consecutive qubits, with the controls on the sides and the target in the middle ($u_{i11}$). Further, it can generate NOT gates ($X_i = u_{i00} u_{i01} u_{i10} u_{i11}$) and CNOT gates between neighboring qubits, with either choice of control and target: 
    $$ \textsf{C}_{i-1,a}X_i = u_{ia0} u_{ia1}, \qquad \textsf{C}_{i+1,a} X_i = u_{i0a} u_{i1a}. $$
    Witht the local CNOTs we can make a local SWAP gate: 
    $$\textsf{SWAP}_{i,i+1} = \textsf{C}_{i,0}{X}_{i+1} \circ \textsf{C}_{i+1,0} {X}_i \circ \textsf{C}_{i,0} {X}_{i+1}. $$
    Finally, by composing local SWAPs we can make arbitrary permutations of the qubits, and thus generate the whole (non-geometrically-local) Toffoli + NOT gate set. Therefore $\mathcal{G}$ generates all even permutations.
    Now, given two sets $S,S'\in\Sigma_m$, we can find a permutation $\sigma\in\perm{D}$ that maps one to the other: $S' = \sigma(S)$. If $\sigma$ is even, we are done. If $\sigma$ is odd, then let $i,j\in [D]$ be two basis states that are {\it not} in subset $S$ (we have assumed $|S| = m \leq D-2$ so that two such states exist), and let $\tau_{ij}\in\perm{D}$ be the transposition $i\leftrightarrow j$. Then $u\circ \tau_{ij}$ is an even permutation that maps $S$ to $S'$.
    Since any two sets can be connected by an even permutation and any even permutation can be decomposed into a product of $u\in \mathcal{G}$, the Markov chain can connect any two subsets $S,S'\in\Sigma_m$ and is thus irreducible.
\end{proof}

\begin{lemma}[aperiodicity]
    The Markov chain $\Gamma_{S,S'}$ on $\Sigma_m$ is aperiodic for any $m \leq D / 4$. 
\end{lemma}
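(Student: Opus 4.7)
The plan is to reduce aperiodicity to the existence of a single self-loop. Recall that in an irreducible Markov chain all states share a common period, and that period equals $\gcd$ of the set of return times from any chosen state back to itself. Since irreducibility has just been established for all $m\le D-2$ (in particular for $m\le D/4$), it suffices to exhibit one subset $S_0\in\Sigma_m$ and one gate $u\in\mathcal{G}$ with $u(S_0)=S_0$: this produces a return of length $1$, forcing the period to divide $1$.

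The construction I have in mind is direct. I would pick $S_0$ to be any $m$-element subset of the ``half-hypercube'' $\{\mathbf z\in\{0,1\}^N:\, z_1=1\}$. Since this half-hypercube has size $D/2\ge m$ (the hypothesis $m\le D/4$ is much more than enough), such an $S_0$ exists. Now take the gate $u_{2,0,0}\in\mathcal{G}$, which flips qubit $2$ if and only if qubits $1$ and $3$ are both in state $0$. For every $\mathbf z\in S_0$ we have $z_1=1$, so the control is never satisfied and $u_{2,0,0}$ acts as the identity on each element of $S_0$, and therefore trivially on $S_0$ as a set. This yields $\Gamma_{S_0,S_0}\ge 1/|\mathcal{G}|=1/(4N)>0$, i.e.\ a self-loop.

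Combining: the period at $S_0$ divides $1$, so it equals $1$; by irreducibility, every state has period $1$, and the chain is aperiodic.

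There is no real obstacle here. The bound $m\le D/4$ appearing in the statement is in fact far from tight for the aperiodicity argument alone (the construction above works for all $m\le D/2$), and is presumably retained only to keep a uniform hypothesis across the reversibility, irreducibility, and aperiodicity lemmas in this appendix. If one wanted a more symmetric or intrinsic argument one could instead note that $u^2=I$ for every $u\in\mathcal G$ already supplies closed walks of length $2$, and then pair this with any odd-length closed walk; but the self-loop route above is cleaner and entirely elementary.
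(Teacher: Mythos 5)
Your proof is correct, and it is essentially the paper's approach: both locate a subset $S_0$ on which some gate in $\mathcal G$ acts as the identity (a self-loop) and then leverage irreducibility. Where the paper constructs, for arbitrary $S,S'$, paths of every sufficiently large length by routing through an intermediate $S''$ and inserting an idling gate, you instead invoke the standard fact that for an irreducible chain the period is a class invariant, so a single state with a positive-probability self-loop forces the common period to be $1$; this is the cleaner packaging of the same idea. Your observation about the hypothesis is also right: you fix only one control bit (all $\mathbf z\in S_0$ with $z_1=1$, so $u_{2,0,0}$ idles), which works for all $m\le D/2$, whereas the paper fixes two control bits (both neighbors of some site $i$ set to $0$), which is what yields the stated $m\le D/4$. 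The tighter threshold in the lemma is indeed an artifact of the particular idling gate chosen, not a real constraint.
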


\begin{proof}
    Given two sets $S,S'\in\Sigma_m$, we are going to show that the Markov chain can connect $S$ to $S'$ with paths of any length $l\geq l_{\rm min}$, with $l_{\rm min}$ some finite integer.
    To do so, let us introduce a third state $S''$ with the property that all bitstrings $z\in S''$ have zeros at positions $i-1$ and $i+1$, for some $i$ (this limits the size of the subset to $m\leq D/4$). 
    Since the chain is irreducible, we can find a sequence of $l_1$ gates whose product maps $S$ to $S''$ and another sequence of $l_2$ gates whose product maps $S''$ to $S'$. Then, inserting a gate $u_{i11}$ between these two sequences $\iota$ times produces a path from $S$ to $S'$ of length $l_1+\iota+l_2$. This can take any value larger than $l_{\rm min} \equiv l_1+l_2$. 
\end{proof}

Reversibility, irreducibility and aperiodicity imply that the Markov chain has the uniform distribution $\pi_m$ as its unique steady state \cite{levin_markov_2017}. This proves that our 3-qubit automaton gate set equilibrates to the uniform distribution on all subset states with $|S| = m \leq D/4$. 

It is easy to see that three-qubit gates are necessary to obtain the above result. Indeed two-qubit automaton gates (generated by NOT and CNOT) are {\it affine transformations} of the computational basis, in the sense that for any such gate $u$ we have $u\ket{\mathbf z} = \ket{A_u \mathbf z \oplus \mathbf{b}_u}$, with $A_u$ a binary matrix and $\mathbf{b}_u$ a bitstring. Any transformations that are not affine cannot be generated by two-qubit automaton gates. As it turns out, the gate set comprising NOT and CNOT gates induces an irreducible Markov chain only up to $m = 3$; an example of two subsets with $m = 4$ and $N = 3$ that cannot be connected by NOTs and CNOTs only is given by 
\begin{equation}
    S = \begin{bmatrix}
    0 & 1 & 1 \\ 
    0 & 1 & 0 \\ 
    0 & 0 & 1 \\
    0 & 0 & 0
    \end{bmatrix}, 
    \quad
    S' = \begin{bmatrix}
    1 & 0 & 0 \\ 
    0 & 1 & 0 \\ 
    0 & 0 & 1 \\
    0 & 0 & 0
    \end{bmatrix}
\end{equation}
(each row is a bitstring $\mathbf z$ in the set). It is easy to see that the sum (modulo 2) of bitstrings in each set, $\mathbf z_{\rm tot}$, transforms under an affine gate $u$ according to $\mathbf{z}_{\rm tot} \mapsto A_u \mathbf{z}_{\rm tot}$ (the additive term $\mathbf{b}_u$ appears 4 times and thus cancels). But $S$ has $\mathbf{z}_{\rm tot} = (0,0,0)$ while $S'$ has $\mathbf{z}_{\rm tot} = (1,1,1)$, so any sequence of $u$'s mapping $S$ to $S'$ must be such that $\prod_u A_u$ annihilates $(1,1,1)$. This is incompatible with a reversible computation (e.g. it would imply a collision between strings $000$ and $111$). 

Another way of arriving at a similar conclusion is to note that pseudoentanglement requires magic~\cite{grewal_stabilizer_2023,grewal_pseudoentanglement_2024}, and two-qubit automaton gate sets happen to also be Clifford. Thus if we start from a stabilizer subset state (e.g. $\ket{0}^{\otimes N-k}\otimes \ket{+}^{\otimes k}$) we cannot produce a pseudoentangled ensemble. 
However, for $m\leq 3$, it is instead possible to show that the gate set comprising NOT and CNOT yields irreducible Markov chains. This is perhaps related to the fact that Clifford circuits form 3-designs on qubits, so matching the 3 lowest moments of the Haar measure does not require magic.

%%%%%%%%%%%%%%%%
% BOUND ON REL %
%%%%%%%%%%%%%%%%

\section{Upper bound on relaxation time with local gate set}
\label{app:local_relaxation}

In Sec.~\ref{sec:local-circuit} we study the relaxation time of the Markov chains induced by the local automaton gate set $\mathcal{G} = \{ \textsf{C}_{i-1,a}\textsf{C}_{i+1,b}{X}_i \}$ and, based on numerical observations, we conjecture the scaling $t_{\rm rel}^{(m)} = \Theta(N)$, independent of $m$ (Conjecture~\ref{conjecture:relaxation}). 
While the single-particle spectrum, solved analytically in App.~\ref{app:hypercube_rw}, gives a lower bound $t_{\rm rel}^{(m)} \geq \Omega(N)$, an upper bound is harder to obtain analytically. 
Here we prove the upper bound
\begin{equation}
    t_{\rm rel}^{(m)} \leq O(mN^5)
    \label{eq:app_trel_upperbound}
\end{equation}
based on the {\it comparison method}~\cite{levin_markov_2017} applied to ``simple permutations''~\cite{gowers_almost_1996,brodsky_simple_2008}, a set of 3-local but not geometrically local reversible classical gates\footnote{
Based on the results of Ref.~\cite{he_pseudorandom_2024}, which appeared after the first version of this work, the comparison method applied to a set of geometrically-local 3-bit classical gates known as $\textsf{DES[2]}$ (which is a super-set of our gate set $\mathcal{G}$) would yield an improved bound $t_{\rm rel}\leq O(mN)$.
}. This is very loose compared to Conjecture~\ref{conjecture:relaxation} ($t_{\rm rel}\leq O(N)$) but serves to rigorously bound above [via Eq.~\eqref{eq:mix_eigenbound}] the mixing time of our Markov chains, and thus the $m$-copy indistinguishability times $\pseudothermtime{m}$, as $O(m\, \poly{N})$. 

The comparison method bounds the relaxation time $t_{\rm rel}$ of a Markov chain $\Gamma$ in terms of the relaxation time $\tilde{t}_{\rm rel}$ of another Markov chain $\tilde{\Gamma}$ on the same space as~\cite{levin_markov_2017}
\begin{equation}
t_{\rm rel} \leq B \tilde{t}_{\rm rel}, \label{eq:comparison_bound}
\end{equation}
in terms of a ``congestion ratio'' $B$.
To define $B$, we first need to introduce some background and notation. 
We consider a graph $(V,E)$ with vertices $V$ and edges $E$; the reversible Markov chain $\Gamma$ allows transitions between any two vertices connected by an edge, with probability $\Gamma_{e_1 e_2}$ where $(e_1,e_2) = e\in E$ denotes the edge. We also have another graph with the same vertices but different edges, $(V,\tilde{E})$, with a Markov chain $\tilde{\Gamma}$ allowing transitions along the $\tilde{E}$ edges. 
We assume that, for any edge $\tilde{e} = (x,y) \in\tilde{E}$ in the second graph, one can find a path $\gamma_{xy}$ along the first graph $(V,E)$ connecting $x$ to $y$. $|\gamma_{xy}|$ denotes the length of that path. We say that $e\in \gamma_{xy}$ if the path traverses edge $e \in E$. With this notation in place, we can write down the expression for the congestion ratio $B$:
\begin{equation}
    B = \max_{e\in E}\frac{1}{\Gamma_{e_1e_2}}\sum_{x,y:\gamma_{xy}\ni e} \tilde{\Gamma}_{xy} \left|\gamma_{xy}\right|. \label{eq:comparison_coeff_general}
\end{equation}

Our strategy is to use our Markov chain (induced by the local automaton gate set $\mathcal{G}$) as $\Gamma$ and use the Markov chain induced by {\it `simple permutations'}~\cite{gowers_almost_1996,hoory_simple_2005} as $\tilde{\Gamma}$. `Simple permutations' are defined as the gate set $\tilde{\mathcal G} = \{v_{ijk,\sigma}:\, i,j,k\in [N],\, \sigma\in S_8\}$, where $v_{ijk,\sigma}$ acts trivially on all bits not in $\{i,j,k\}$, and as a permutation $\sigma\in S_8$ on bits $\{i,j,k\}$. We further restrict $\sigma\in S_8$ to be a `width-2' simple permutation, i.e. a permutation that leaves two of the three input bits unchanged~\cite{hoory_simple_2005,brodsky_simple_2008} (the Toffoli gate falls in this category, so $\mathcal{G}\subset\tilde{\mathcal{G}}$). 
Note that the bits $i,j,k$ need not be consecutive, so `simple permutations' are 3-local but not geometrically local. 
`Simple permutations' have been studied in the probability literature~\cite{gowers_almost_1996,hoory_simple_2005} due to their good mixing properties, and several bounds on their relaxation and mixing times are known. 
For our purpose, we will need the following fact:

\begin{fact}[relaxation time from simple permutations] \label{fact:app_simpleperm}
    `Width-2 simple permutations' $\tilde{\mathcal G}$ induce a Markov chain on $\Sigma_m$ whose relaxation time is bounded above as
    \begin{equation}
        \tilde{t}_{\rm rel}^{(m)} \leq O(mN^2). \label{eq:app_simpleperm_relaxation}
    \end{equation}
\end{fact}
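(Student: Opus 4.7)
The plan is to reduce the statement to the well-studied mixing properties of simple permutations on $m$-tuples of distinct elements of $[D]=[2^N]$, as analyzed in \cite{hoory_simple_2005,brodsky_simple_2008,gowers_almost_1996}. The route has three steps. First, consider the auxiliary Markov chain $\tilde{\Gamma}^{\rm tup}$ acting on ordered $m$-tuples $(z_1,\dots,z_m)\in [D]^m$ with all entries distinct: a gate $v_{ijk,\sigma}\in\tilde{\mathcal G}$ acts componentwise, sending $(z_1,\dots,z_m)\mapsto(\sigma(z_1),\dots,\sigma(z_m))$. This chain is a lift of the chain on $\Sigma_m$, obtained by forgetting the ordering of the tuple. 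Because unordered subsets form a quotient (or ``factor'') of ordered tuples that is compatible with the dynamics, every eigenfunction of the $\Sigma_m$-chain lifts to an eigenfunction of $\tilde{\Gamma}^{\rm tup}$ with the same eigenvalue; consequently the spectral gap can only be smaller on tuples, i.e.\ $\tilde{t}_{\rm rel}^{(m)}\leq \tilde{t}_{\rm rel}^{\rm tup}$.

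Second, I would invoke the main result of Hoory--Magen--Myers--Rackoff \cite{hoory_simple_2005}, which bounds precisely the mixing of width-2 simple permutations on $m$-wise distributions; a version of their bound yields $\tilde{t}_{\rm mix}^{\rm tup}(\epsilon)\leq O(mN^2\log(m/\epsilon))$ (and sharper statements are available from \cite{brodsky_simple_2008}). Converting mixing to a relaxation-time bound via the standard inequality
\begin{equation}
t_{\rm rel}\leq \frac{t_{\rm mix}(\epsilon)}{\log(1/(2\epsilon))}+1
\end{equation}
with a fixed $\epsilon$ (say $\epsilon=1/4$) strips the $\log(1/\epsilon)$ factor and leaves $\tilde{t}_{\rm rel}^{\rm tup}\leq O(mN^2)$, at worst up to logarithmic factors in $m$.

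Third, should the literature bound leave a stray $\log$ inflating $mN^2$, I would finish with a direct Dirichlet-form argument on the tuple chain. Concretely, for any two distinct tuples $\mathbf{z},\mathbf{z}'$ I would exhibit a canonical path in the Cayley graph of $\tilde{\mathcal G}$ of length $O(mN)$: fix the discrepancies bit-by-bit using one width-2 simple permutation per corrected bit, acting on each of at most $m$ positions. There are roughly $8\binom{N}{3}=\Theta(N^3)$ gates in $\tilde{\mathcal G}$, each selected with probability $\Theta(1/N^3)$, and the maximum congestion across any gate scales like $O(N^2)$ times the number of tuple pairs it serves. Plugging these into the canonical-paths bound $t_{\rm rel}\leq \max_{e}(1/\pi(e_1)\tilde\Gamma_{e_1e_2})\sum_{\mathbf z,\mathbf z'\ni e}\pi(\mathbf z)\pi(\mathbf z')|\gamma_{\mathbf z\mathbf z'}|$ gives the desired $O(mN^2)$ scaling cleanly.

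The main obstacle is the third step: carefully controlling the congestion of the canonical paths so that the bound comes out as $O(mN^2)$ rather than a looser polynomial. Because simple permutations act on arbitrary triples $(i,j,k)$, the flexibility in routing is large, and one expects substantial cancellation; but making this precise requires a careful counting argument balancing the path length ($O(mN)$), the overall number of gates ($\Theta(N^3)$), and the number of ordered tuple pairs $(\mathbf z,\mathbf z')$ whose canonical path passes through a given gate. Alternatively, quoting Hoory--Magen--Myers--Rackoff directly sidesteps this bookkeeping entirely, at the cost of absorbing any residual logarithmic factors into the $O(\cdot)$.
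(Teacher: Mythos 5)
Your first step---lifting the $\Sigma_m$ chain to the chain on ordered $m$-tuples by composing eigenfunctions with the ``forget the ordering'' map, and concluding the subset spectrum is contained in the tuple spectrum---is exactly the argument the paper uses, and it is correct. The paper then finishes more directly than you do: it cites the relaxation-time bound $\tilde{t}_{\rm rel}^{(m,\text{ord.})}\leq O(mN^2)$ proved in Ref.~\cite{brodsky_simple_2008} for simple permutations acting on ordered tuples, and that, combined with the spectral inclusion, immediately gives the stated bound with no leftover factors. Your route instead quotes the mixing-time bound of Hoory--Magen--Myers--Rackoff~\cite{hoory_simple_2005} and converts back to a relaxation time via $t_{\rm rel}\leq t_{\rm mix}(\epsilon)/\log(1/(2\epsilon))+1$; as you yourself note, this leaves a stray $\log m$ and so only yields $\tilde{O}(mN^2)$, short of the claim. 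Your proposed fallback (a canonical-paths / Dirichlet-form argument directly on the tuple chain) would in principle recover the clean $O(mN^2)$ but you leave the congestion bookkeeping unresolved, which is precisely the nontrivial part; the paper avoids that work entirely by using the already-proved relaxation bound of~\cite{brodsky_simple_2008}. In short: the central reduction is the same, and your plan is sound, but to close the remaining $\log$ gap without doing the congestion estimate you should cite the relaxation-time result of~\cite{brodsky_simple_2008} rather than going through mixing times.
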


\begin{proof}
    Ref.~\cite{brodsky_simple_2008} considers the random walk induced by `simple permutations' on {\it ordered $m$-tuples}, $\mathbf z = (z_1,\dots z_m)$, rather than sets $S = \{z_1,\dots z_m\}\in \Sigma_m$. It shows that the former has relaxation time $\tilde{t}_{\rm rel}^{(m,\text{ord.})} \leq O(mN^2)$. 
    We will show that 
    \begin{equation} 
    \tilde{t}_{\rm rel}^{(m)} \leq \tilde{t}_{\rm rel}^{(m,\text{ord.})}
    \label{eq:app_ordered_vs_unordered}
    \end{equation} 
    (the walk on ordered $m$-tuples relaxes no faster than that on subsets), which implies Eq.~\eqref{eq:app_simpleperm_relaxation}.
    
    Consider an eigenfunction $f$ of the Markov chain on subsets:
    \begin{equation}    
    \frac{1}{|\tilde{\mathcal G}|} \sum_{v\in\tilde{\mathcal G}} f(v(S)) = \lambda f(S).
    \end{equation}
    Let us define a function $g$ on the space of ordered $m$-tuples by setting $g(\mathbf z) = f({\rm set}(\mathbf z))$, where ${\rm set}(\mathbf z) \in \Sigma_m$ is the {\it unordered} set of elements in the $m$-tuple $\mathbf z$. We can easily check that $g$ is an eigenfunction of the Markov chain on ordered $m$-tuples with the same eigenvalue $\lambda$:
    \begin{align}
        \frac{1}{|\tilde{\mathcal G}|} \sum_{v\in\tilde{\mathcal G}} g(v(\mathbf z))
         & = \frac{1}{|\tilde{\mathcal G}|} \sum_{v\in\tilde{\mathcal G}} f({\rm set}(v(\mathbf z))) \nonumber \\
         & = \frac{1}{|\tilde{\mathcal G}|} \sum_{v\in\tilde{\mathcal G}} f(v({\rm set}(\mathbf z))) \nonumber \\
         & = \lambda f({\rm set}(\mathbf z)) = \lambda g(\mathbf z). 
    \end{align}
    Thus the spectrum of the Markov chain on subsets is included within the spectrum of the Markov chain on ordered tuples. Recalling that the relaxation time is the inverse spectral gap, Eq.~\eqref{eq:app_ordered_vs_unordered} follows.
\end{proof}

Now that we have an upper bound on $\tilde{t}_{\rm rel}^{(m)}$, it remains to upper-bound the coefficient $B$ in the comparison bound, Eq.~\eqref{eq:comparison_coeff_general}.
We break it down into three bounds:

\begin{fact} \label{fact:app_Bbound1}
For any edge $e\in E$, we have
    \begin{equation} 
    \frac{1}{\Gamma_e} \leq O(N). \label{eq:app_Bbound1} 
    \end{equation}
\end{fact}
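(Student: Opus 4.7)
The plan is to observe that this bound is essentially a counting argument on the gate set $\mathcal{G}$ and falls out immediately from the definition of the Markov chain. Recall that $\Gamma_{S,S'} = |\mathcal{G}|^{-1} \sum_{u \in \mathcal{G}} \delta_{S',u(S)}$, so the transition probability along an edge $e = (S,S')$ of the graph $(V,E)$ equals the fraction of gates in $\mathcal{G}$ that map $S$ to $S'$.

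The key point is that an edge exists in $(V,E)$ precisely when at least one gate $u \in \mathcal{G}$ realizes the transition, i.e. $u(S) = S'$. Therefore, for every $e \in E$,
\begin{equation}
\Gamma_e \geq \frac{1}{|\mathcal{G}|} = \frac{1}{4N},
\end{equation}
since the sum defining $\Gamma_e$ contains at least one nonzero term, each of magnitude $1/|\mathcal{G}|$, and $|\mathcal{G}| = 4N$ by the construction of the local gate set (four choices of control bits $(a,b) \in \{0,1\}^2$ times $N$ choices of target site $i$). Inverting this yields $1/\Gamma_e \leq 4N = O(N)$, which is exactly Eq.~\eqref{eq:app_Bbound1}.

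There is no real obstacle here: the bound is tight up to a factor of order unity (attained when only one gate $u_{iab}$ realizes the transition $S \to S'$, which is the generic case since distinct gates typically act differently on a given subset), and nothing beyond the definition of the gate set is needed. The substantive work in establishing Eq.~\eqref{eq:app_trel_upperbound} lies in the subsequent two facts of the comparison scheme — bounding the contribution of long canonical paths and the congestion through each edge — and in invoking Fact~\ref{fact:app_simpleperm} for the relaxation time of the simple-permutation chain.
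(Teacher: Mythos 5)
Your proof is correct and is essentially identical to the paper's: both observe that an edge $e=(S,S')$ exists only if some $u\in\mathcal G$ sends $S$ to $S'$, whence $\Gamma_e \ge 1/|\mathcal G| = 1/(4N)$ and $1/\Gamma_e \le 4N = O(N)$. Nothing to add.
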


\begin{proof}
Follows from the fact that $\Gamma_{S,S'} \geq 1/|\mathcal G| = \frac{1}{4N}$ (at least one gate $u\in \mathcal{G}$ must connect $S$ to $S'$). 
\end{proof}

\begin{fact} \label{fact:app_Bbound2}
For any two subsets $S$, $S'$ we have
\begin{equation}
    |\gamma_{S,S'}| \leq O(N). \label{eq:app_Bbound2}
\end{equation}
\end{fact}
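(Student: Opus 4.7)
The plan is to construct, for each edge $(S, vS) \in \tilde{E}$ in the simple-permutation graph (with $v = v_{ijk,\sigma} \in \tilde{\mathcal{G}}$ acting nontrivially on qubits $i,j,k$), an explicit path in the local-gate graph $(V,E)$ of length $O(N)$. I would use the standard ``swap-in, operate, swap-back'' compilation: first apply a sequence of nearest-neighbor SWAPs to bring qubits $i,j,k$ to three consecutive positions; then apply the 3-qubit permutation $\sigma$ on those consecutive positions; then undo the SWAPs.

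The cost analysis is straightforward. A nearest-neighbor SWAP is implementable with $O(1)$ gates of $\mathcal{G}$---as $\textsf{SWAP}_{i,i+1}$ equals a product of three local CNOTs, each of which is a product of two $\mathcal{G}$-gates, per the construction in App.~\ref{app:local_gateset}. Moving qubit $j$ next to qubit $i$ and qubit $k$ next to that requires at most $|i-j| + |i-k| \leq 2N$ such SWAPs, hence $O(N)$ local gates. The middle step costs $O(1)$: when restricted to three consecutive qubits, $\mathcal{G}$ generates all even permutations of the 8-element local basis (App.~\ref{app:local_gateset}), and for $N \geq 4$ the lifted permutation $v_{ijk,\sigma}$ is always an even permutation of $[D]$ (it decomposes into $2^{N-3}$ identical copies of $\sigma$, and $2^{N-3}$ is even). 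Undoing the SWAPs costs another $O(N)$ gates. Summing the three contributions yields $|\gamma_{S,vS}| \leq O(N)$, as claimed.

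I do not expect serious obstacles here, since the argument is essentially a compilation bound of the type familiar from the architecture literature. The only minor subtlety is parity bookkeeping for very small $N$ (in particular $N=3$, where $v_{ijk,\sigma}$ inherits the parity of $\sigma$ and can be odd), which does not affect the asymptotic statement. Even in that case, one can exploit the fact that the path $\gamma_{S,vS}$ only needs to agree with $v$ \emph{setwise on $S$}: composing with an auxiliary transposition supported on $[D] \setminus (S \cup vS)$---which, as in the irreducibility argument of App.~\ref{app:local_gateset}, is realizable with $O(1)$ extra local gates provided $m \leq D-2$---absorbs any parity mismatch without worsening the $O(N)$ bound.
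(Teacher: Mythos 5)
Your proposal is correct and follows essentially the same route as the paper: compile each simple permutation $v_{ijk,\sigma}$ into a swap-in/apply-$\sigma$-locally/swap-out sequence of local gates of total length $O(N)$. The parity bookkeeping you add (that $v_{ijk,\sigma}$ acts as $2^{N-3}$ copies of $\sigma$ and is hence even for $N\geq 4$, with the padding-transposition fix for tiny $N$) is a detail the paper leaves implicit but does not change the approach.
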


\begin{proof}
For any two sets $S$, $S'$ connected by a simple permutation $v_{ijk,\sigma}\in\tilde{\mathcal G}$, we can choose a path of local automaton gates $u\in\mathcal{G}$ that implements a sequence of SWAPs to bring qubits $i,j,k$ to consecutive positions (this takes $O(N)$ gates), implement the permutation $\sigma$ (this takes $O(1)$ gates), and then return the three qubits to positions $i,j,k$ ($O(N)$ gates again). In all this path has length $|\gamma_{S,S'}| \leq O(N)$. 
\end{proof}

\begin{fact} \label{fact:app_Bbound3}
For any edge $e\in E$ we have
\begin{equation}
    \sum_{S,S':\, \gamma_{S,S'} \ni e} \tilde{\Gamma}_{S,S'} \leq O(N). \label{eq:app_Bbound3}
\end{equation}
\end{fact}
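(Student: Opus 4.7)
The plan is a double-counting argument that exploits an $S$-independence property of the canonical paths constructed in Fact~\ref{fact:app_Bbound2}. First I would write $\tilde{\Gamma}_{S,S'} = |\tilde{\mathcal{G}}|^{-1} \sum_{v \in \tilde{\mathcal{G}}} \mathbb{1}[v(S) = S']$ and exchange the order of summation to obtain
\begin{equation}
\sum_{\substack{S,S':\\ \gamma_{S,S'} \ni e}} \tilde{\Gamma}_{S,S'} \;\leq\; \frac{1}{|\tilde{\mathcal{G}}|} \sum_{v \in \tilde{\mathcal{G}}} \bigl|\{ S:\, \gamma^{v}_{S} \ni e\}\bigr|,
\end{equation}
where $\gamma^v_S$ denotes the path from $S$ to $v(S)$ obtained by applying to $S$ the fixed word in $\mathcal{G}$ that represents $v$ (as constructed in Fact~\ref{fact:app_Bbound2}). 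It thus suffices to show that for each $v$ there are at most $O(N)$ subsets $S$ with $\gamma^v_S \ni e$.

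The key observation is that, for each $v = v_{ijk,\sigma}$, the canonical path of Fact~\ref{fact:app_Bbound2} can be taken $S$-oblivious at the level of its gate sequence: one fixes once and for all a nearest-neighbor SWAP schedule that transports qubits $(i,j,k)$ to positions $(1,2,3)$, a constant-length block of $\mathcal{G}$-gates implementing $\sigma$ on those positions, and the SWAP schedule played in reverse. This yields a deterministic word $u_1^v, \ldots, u_{L_v}^v \in \mathcal{G}$ (depending only on $v$) with $L_v \leq O(N)$, so that $\gamma^v_S$ visits the subsets $S,\, u_1^v(S),\, u_2^v u_1^v(S),\, \ldots,\, v(S)$.

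The counting step is then immediate. For fixed $v$ and step index $t \leq L_v$, requiring the $t$-th edge of $\gamma^v_S$ to coincide with $e = (T, T')$ forces $u_{t-1}^v \cdots u_1^v(S) = T$, hence $S = (u_1^v)^{-1} \cdots (u_{t-1}^v)^{-1}(T)$ is uniquely determined; the remaining condition $u_t^v(T) = T'$ depends only on $(v,t)$. Thus for each $v$ there is at most one admissible $S$ per step, giving at most $L_v \leq O(N)$ subsets in total, and summing over $v \in \tilde{\mathcal{G}}$ produces $O(N)$ after cancellation of the $1/|\tilde{\mathcal{G}}|$ prefactor. The only obstacle is essentially bookkeeping, namely ensuring the canonical paths are $S$-oblivious at the gate-sequence level; once that is set up the counting is a one-liner. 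Plugging Facts~\ref{fact:app_Bbound1}--\ref{fact:app_Bbound3} into Eq.~\eqref{eq:comparison_coeff_general} yields $B \leq O(N^3)$, and then Eq.~\eqref{eq:comparison_bound} together with Fact~\ref{fact:app_simpleperm} delivers Eq.~\eqref{eq:app_trel_upperbound}.
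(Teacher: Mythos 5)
Your proof is essentially the same as the paper's. Both exchange the sum over pairs $(S,S')$ for a sum over pairs $(v,S)$ using $\tilde{\Gamma}_{S,S'} = \mu_{S,S'}/|\tilde{\mathcal G}|$, both exploit the fact that the canonical path's gate word is determined by $v$ alone (what you call $S$-obliviousness), and both conclude via the observation that for a fixed $v$ and a fixed step index there is at most one admissible $S$ (since $u_j^{-1}=u_j$, one can invert the prefix of the word to recover $S$ from the edge endpoint), giving at most $|\gamma^v| = O(N)$ subsets per gate and a net $O(N)$ after the $1/|\tilde{\mathcal G}|$ prefactor cancels. Your phrasing is a touch cleaner in naming the $S$-obliviousness explicitly, but the decomposition, the key lemma, and the counting are the paper's.
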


\begin{proof}
We can write $S'$ as $v(S)$ for some gate $v\in\tilde{\mathcal G}$. There may be more than one way to do this; to account for it, we introduce the multiplicity $\mu_{S,S'} \equiv |\{ v\in \tilde{\mathcal G}:\, S' = v(S)\}|$. We obtain 
\begin{equation}
    \sum_{\substack{S,S'\in\Sigma_m: \\ \gamma_{S,S'} \ni e}} \tilde{\Gamma}_{S,S'}
    = \sum_{\substack{v\in\tilde{\mathcal G}, S\in \Sigma_m : \\ \gamma_{S,v(S)}\ni e}} \frac{\tilde{\Gamma}_{S,v(S)}}{\mu_{S,v(S)}} 
\end{equation}
where the factor $1/\mu_{S,v(S)}$ accounts for the over-counting of terms when switching summation variables from $S'$ to $v$. 
Now, by definition $\tilde{\Gamma}_{S,S'} = \mu_{S,S'} / |\tilde{\mathcal G}|$ (fraction of the gate set that maps $S$ to $S'$), which gives
\begin{equation}
    \sum_{\substack{S,S'\in\Sigma_m: \\ \gamma_{S,S'} \ni e}} \tilde{\Gamma}_{S,S'}
    = \frac{1}{|\tilde{\mathcal{G}}|} | \{ (v,S):\, \gamma_{S,v(S)} \ni e \} |. \label{eq:app_Bbound3_midstep}
\end{equation}
Our choice of paths $\gamma_{S,S'}$ (see proof of Eq.~\eqref{eq:app_Bbound2} above) depends only on the gate $v$: 
we decompose $v = u_l u_{l-1} \cdots u_1$ into a sequence of $l\leq O(N)$ local automaton gates, $u_i\in \mathcal{G}$. The condition $\gamma_{S,v(S)} \ni e$ thus requires that $e_1 = u_i \cdots u_1(S)$ and $e_2 = u_{i+1}(e_1)$ for some $i \in [l]$; equivalently, it requires $S = u_1 \cdots u_i (e_1)$. Therefore, given $e$ and $v$, there are $O(N)$ choices for $S$, indexed by the integer $i$. The $|\tilde{\mathcal G}|$ choices for $v$ cancel out the denominator in Eq.~\eqref{eq:app_Bbound3_midstep}, and we obtain the desired bound. 
\end{proof}

Putting it all together, Fact~\ref{fact:app_simpleperm} bounds the relaxation time in the `simple permutation' Markov chain as $\tilde{t}_{\rm rel}^{(m)} \leq O(mN^2)$;
Facts~\ref{fact:app_Bbound1}, \ref{fact:app_Bbound2}, and \ref{fact:app_Bbound3} together yield the congestion ratio $B\leq O(N^3)$; 
plugging these into the comparison bound [Eq.~\eqref{eq:comparison_bound}] gives the desired result [Eq.~\eqref{eq:app_simpleperm_relaxation}].

%%%%%%%%%%%%%%%%
% BOUND ON MIX %
%%%%%%%%%%%%%%%%

\section{Upper bound on $m$-copy indistinguishability time for subset states}
\label{app:subset_ptherm_bound}

In this Appendix we prove the upper bound Eq.~\eqref{eq:tptherm_upperbound} on the $m$-copy indistinguishability time for subset states, Sec.~\ref{sec:generation}. We start from the key result, Eq.~(\ref{eq:subsetstate_mapping1}-\ref{eq:subsetstate_mapping2}), which relates ensemble pseudothermalization to the equilibration of the Markov chains $\Phi_{m+\delta \leftarrow K}[p_t]$ on $\Sigma_{m+\delta}$, for $\delta = 0,\dots m$. 

Let us decompose the matrix $\mathcal{M}^{(m)}_{p_t}$ based on the values of the difference between subsets, $\delta = |S'\setminus S''|$:
\begin{align}
    \mathcal{M}^{(m)}_{p_t} 
    & = \sum_{\delta = 0}^m O_\delta, \\
    O_\delta
    & = \binom{K}{m}^{-1} \binom{K}{m+\delta}
    \sum_{\substack{S',S''\in \Sigma_m:\\ |S'\setminus S''| = \delta}} \ket{S'} f^{(m+\delta)}_{p_t} (S'\cup S'') \bra{S''},
\end{align}
where $f^{(m+\delta)}_{p_t}$ is defined in Eq.~\eqref{eq:f_mdelta_def}. The diagonal term $O_0$ provides a lower bound: $\| \mathcal{M}^{(m)}_{p_t}\|_{\rm tr} \geq \| O_0 \|_{\rm tr}$, which is used to derive the lower bound on the $m$-copy indistinguishability time, Eq.~\eqref{eq:tptherm_lowerbound}. 
To get an upper bound, we need to consider the off-diagonal terms as well. 

We have
\begin{equation}
    \left\| \mathcal{M}_{p_t}^{(m)} \right\|_{{\rm tr}}
    \leq \sum_{\delta=0}^m \|O_{\delta}\|_{{\rm tr}}.
\end{equation}
We will upper-bound each $\|O_\delta\|_{\rm tr}$ by the sum of the absolute values of its elements: for any operator $A$, one has
\begin{equation}
    \|A\|_{\rm tr}
    = \left\| \sum_{ij} A_{ij} \ketbra{i}{j} \right\|_{\rm tr} 
    \leq \sum_{ij} \| A_{ij} \ketbra{i}{j} \|_{\rm tr} 
    = \sum_{ij} |A_{ij}|.
\end{equation}
For each $\delta$, the set of possible entries is give by $f_{m+\delta}(S)$, with $S\in\Sigma_{m+\delta}$. Each entry appears multiple times: there are $\binom{m+\delta}{m} \binom{m}{m-\delta}$ ways of choosing $S',S''\in \Sigma_m$ such that $|S'\cup S''| = m+\delta$. Therefore,
\begin{align}
    \| O_\delta \|_{\rm tr}
    & \leq \sum_{S\in \Sigma_{m+\delta}} \binom{m+\delta}{m} \binom{m}{m-\delta} \binom{K}{m}^{-1} \binom{K}{m+\delta}  |f^{(m+\delta)}_{p_t} (S)| \nonumber \\
    & \leq  \binom{m}{\delta} \binom{K-m}{\delta}
    \sum_{S\in \Sigma_{m+\delta}} |\Phi_{m+\delta\leftarrow K}[p_t](S)-\pi_{m+\delta}(S) | \nonumber \\
    & = 2\binom{m}{\delta} \binom{K-m}{\delta} \Delta_{m+\delta},
\end{align}
where in the second line we combined the various binomial coefficients and in the last line we recognized the total variation distance:
\begin{equation}
    \Delta_{m+\delta} = \frac{1}{2} \sum_{S\in\Sigma_{m+\delta}}|\Phi_{m+\delta\leftarrow K}[p_t](S) - \pi_{m+\delta}(S)|.
\end{equation}

Now, using the results of Appendix~\ref{app:phi_maps}, and specifically Fact~\ref{fact:monotonicity}, we have that $\Delta_{m+\delta}$ is non-decreasing in $\delta$. It follows that 
\begin{align}
    \frac{1}{2} \| \mathcal{M}^{(m)}_{p_t} \|_{\rm tr} 
    & \leq \Delta_{2m} \sum_{\delta = 0}^m \binom{m}{\delta} \binom{K-m}{\delta}
    = \binom{K}{m} \Delta_{2m}. 
\end{align}
Thus, to make the left hand side smaller than $\epsilon$, the Markov chain on $\Sigma_{2m}$ must equilibrate to within total variation error $\Delta_{2m} \leq \binom{K}{m}^{-1} \epsilon$.
Defining the mixing time $t_{\rm mix}(\epsilon)$ based on a total variation threshold $\epsilon$ (see Sec.~\ref{sec:review_cutoff}), one has the following general inequality:
\begin{equation}
    t_{{\rm mix}}(\epsilon = 2^{-l-1})\le lt_{{\rm mix}}(\epsilon = 1/4).
\end{equation}
It follows then that (taking the mixing time based on $\epsilon = 1/4$ as usual)
\begin{align}
    \pseudothermtime{m} \leq \left[ \log_2 \binom{K}{m} \right] t_{\rm mix}^{(2m)}
    \leq m \log_2(K) t_{\rm mix}^{(2m)}.
\end{align}
Finally, invoking the upper bound on the mixing time (conditional on Conjecture~\ref{conjecture:relaxation}) gives
\begin{align}
    \pseudothermtime{m} \leq O(\log(K) m^2 N^2).
\end{align}
In the case of $\log K = {\rm polylog}(N)$ (where the entropy of the random subset ensemble is polylogarithmic) this reduces to 
\begin{align}
    \pseudothermtime{m} \leq \tilde{O}(m^2 N^2),
\end{align}
where the symbol $\tilde{O}$ represents an upper bound up to ${\rm polylog}(N)$ factors.

\end{document}